\documentclass[11pt,letterpaper]{article}

\usepackage{epsfig}
\usepackage{graphicx}
\usepackage{amsthm}
\usepackage{amssymb}
\usepackage{amsmath}
\usepackage{amsfonts}
\usepackage{mathrsfs}
\usepackage{algorithm}
\usepackage{algorithmic,subfigure}

\usepackage{vmargin}
\setmarginsrb{1in}{1in}{1in}{1in}{0pt}{0pt}{0pt}{6mm}

\usepackage{color}

\newtheorem{proposition}{Proposition}
\newtheorem{lemma}{Lemma}
\newtheorem{theorem}{Theorem}

\newtheorem{definition}{Definition}

\newtheorem{claim}{Claim}

\begin{document}

\title{Outlier Detection for DNA Fragment Assembly}
\author{Christina Boucher\thanks{Department of Computer Science and Engineering, University of California, San Diego}
\and Christine Lo\footnotemark[1]
\and Daniel Lokshantov\footnotemark[1]}
\maketitle

\begin{abstract} A major impediment in the development of efficient full genome sequencing is the large portion of erroneous reads produced by sequencing platforms. Error correction is the computational process that attempts to identify and correct these mistakes. Several classical stringology problems, including the {\em Consensus String} problem, are used to model error correction. However, a significant shortcoming of using these formulations is that they do not account for a few of the reads being too erroneous to correct; these outlier strings potentially have great effect on the solution, and should be detected and removed. We formalize the problem of error correction with outlier detection by defining the {\em Consensus String with Outliers} problem. Given $n$ length-$\ell$ strings $S =\{s_1, \ldots, s_n\}$ over a constant size alphabet $\Sigma$ together with parameters $d$ and $k$, the objective in the {\em Consensus String with Outliers} problem is to find a subset $S^*$ of $S$ of size $n-k$ and a string $s$ such that $\sum_{s_i \in S^*} d(s_i, s) \leq d$. Here $d(x, y)$ denotes the Hamming distance between the two strings $x$ and $y$. We prove the following results:
\begin{itemize}
\item A variant of {\em Consensus String with Outliers} where the number of outliers $k$ is fixed and the objective is to minimize the total distance $\sum_{s_i \in S^*} d(s_i, s)$ admits a simple PTAS. Our PTAS can easily be modified to also handle the variant of the problem where a hard upper bound $d$ on the total distance is given as input, and the size of $S^*$ is to be maximized. The approximation schemes are simple enough that our results are best viewed as a performance guarantee on natural heuristics for the problem when the parameters of the heuristic are chosen appropriately.
\item Under the natural assumption that the number of outliers $k$ is small, the PTAS for the distance minimization version of {\em Consensus String with Outliers} performs well. In particular, as long as $k\leq cn$ for a fixed constant $c < 1$, the algorithm provides a $(1+\epsilon)$-approximate solution in time $f(1/\epsilon)(n\ell)^{O(1)}$ and thus, is an EPTAS. 
\item In order to improve the PTAS for {\em Consensus String with Outliers} to an EPTAS, the assumption that $k$ is small is necessary. Specifically, when $k$ is allowed to be arbitrary the {\em Consensus String with Outliers} problem does not admit an EPTAS unless FPT=W[1]. This hardness result holds even for binary alphabets.
\item The decision version of {\em Consensus String with Outliers} is fixed parameter tractable when parameterized by $\frac{d}{n-k}$. and thus, also when parameterized by just $d$.
\end{itemize}
To the best of our knowledge, {\em Consensus String with Outliers} is the first problem that admits a PTAS, and is fixed parameter tractable when parameterized by the value of the objective function but does not admit an EPTAS under plausible complexity assumptions. Hence, the proof of our hardness of approximation result combines parameterized reductions and gap preserving reductions in a novel manner. 
\end{abstract}

\newpage
\section{Introduction}

Although the laboratory methods that generate genetic sequence data have advanced remarkably since their initial use in the Human Genome Project \cite{venter}, the algorithms behind the computational methods have not advanced as dramatically.  Sajjadian et al.~\cite{alkan} describes the present time as ``watershed moment in genomics'' pointing to computational genomics as the bottleneck of the sequencing process.  In this paper, we revisit an essential problem arising in genome sequencing, reformulate this problem to better model noisy data, and show how studying approximability and parameterized complexity of this problem leads to surprising theoretical insights and algorithmic techniques that may assist in genome sequencing. 

Since the discovery of DNA as the basic unit of heredity, significant effort has been focused on automated determination of the sequence of nucleotides corresponding to a sample of DNA, a process referred to as {\em genome sequencing}. The key technology this process relies on is the sequencing platform that accepts a collection of biological (DNA) samples and produces {\em reads} from the samples.  A read is a string from the alphabet $\{$\texttt{A}, \texttt{C}, \texttt{G}, \texttt{T}$\}$ that represents the sequence of nucleotides in a sample. Sequencing platforms are extremely limited in that they cannot process the entire DNA sample at once but rather, they handle very small pieces of the DNA at a time.  The resulting problem for an average-size genome of length 4 million is that $\sim$20 million reads of length 50 must be assembled into one contiguous piece.  This computational process of building the contiguous string from reads is referred to as {\em fragment assembly}, and is especially challenging--if not, impossible--for complex genomes with higher repeat and duplication content.   While the current generation of sequencing platforms can produce a large amount of reads in a relatively short period of time, the reads they produce are greatly error prone, increasing the computational difficulty of fragment assembly.  Error correction, which is vital in genome assembly, aims to identify and correct any mistakes made by the sequencing platform and thus, reduces the computational demands of the fragment assembly algorithms \cite{PTW01}.  

Contamination of the DNA sample and erroneous runs of the sequencing platforms are frequent occurrences that lead to many reads having a large fraction of errors and hence, deviate quite dramatically from the rest of the data.  Ideally, these ``outlier'' strings should be detected and removed from the input prior to assembly.  Although, problem formulations with outliers have been previously proposed and studied in different contexts--including machine learning \cite{CKMN01,PW04,HCKKF05}, network design problems \cite{AP08,AGLS10,CKMN01,C08,G05,GW95}, and bioinformatics \cite{BM11,lanctot}--it has not been considered or proposed in error correction of genome sequencing data.  Present error correction methods do not account for the possibility of outliers, and hence,  are required to be highly liberal in the elimination of data. Therefore, they remove a large number of reads that could have been used for assembly.  We introduce the following formulation of error correction that also captures the existence of outliers in the data. \\

\noindent{\em Consensus String with Outliers} \\
\noindent{\em Input:} a set\footnote{Technically, this is a multi-set since we allow any string to occur multiple times.} of $n$ length-$\ell$ strings $S = \{s_1, \ldots, s_n\}$ over a finite alphabet $\Sigma$ and nonnegative integers $k$ and $d$.\\
\noindent{\em Question:} Find a length-$\ell$ string $s$ and subset $S^*$ of $S$ of size $n-k$, where $\sum_{\forall s_i \in S^*} d(s, s_i)\leq d$.\\

We restrict interest to Hamming distance and denote $d(x, y)$ to be the Hamming distance between the length-$\ell$ strings $x$ and $y$. The following are natural optimization versions of {\em Consensus String with Outliers} that we will consider:  
\begin{itemize}
\item {\em Consensus String with Max Non-Outliers}: given $n$ length-$\ell$ strings $S = \{s_1, \ldots, s_n\}$ over a finite alphabet $\Sigma$ and nonnegative integer $d$, the aim is to find a consensus string $s$ and subset of $S^* \subset S$, where $|S^*|$ is maximal and $\sum_{\forall t \in S^*} d(s,t)\leq d$. 
\item {\em Min-distance Consensus String with Outliers}: given $n$ length-$\ell$ strings $S = \{s_1, \ldots, s_n\}$ over a finite alphabet $\Sigma$ and nonnegative integer $k$, the aim is to find a consensus string $s$ and subset of $S^* \subset S$, where $n - |S^*| = k$ and $\sum_{\forall t \in S^*} d(s,t)$ is minimal.
\end{itemize}
\smallskip

\paragraph{Our Results.} The problems considered are NP-hard in general, however, they turn out to be amenable to approximation and parameterized algorithms.  A {\em polynomial-time approximation scheme} (PTAS) for a minimization problem is an algorithm which takes an instance of the problem and a parameter $\epsilon > 0$ and, in polynomial time, produces a solution that is within a factor $1 + \epsilon$ of being optimal. If the exponent of the polynomial in the running time of the algorithm is independent of $\epsilon$ then the PTAS is said to be an {\em efficient PTAS} (EPTAS). We present several results on the ability to efficiently solve and approximate the above optimization problems within arbitrarily small factors, and demonstrate the tightness of these results. Specifically, we prove the following: 
\begin{itemize}
\item There exists a deterministic PTAS for {\em Min-distance Consensus String with Outliers} and {\em Consensus String with Max Non-Outliers}.  
\item For instances where $k < cn$ and fixed $c < 1$, the PTAS for {\em Min-distance Consensus String with Outliers} can be improved to a randomized EPTAS. 
\item In the general case, both {\em Min-distance Consensus String with Outliers} and {\em Consensus String with Max Non-Outliers} do not admit an EPTAS, unless FPT=W[1]. Thus, the requirement that $k < cn$ is necessary to improve the PTAS for {\em Min-distance Consensus String with Outliers} to an EPTAS.
\item {\em Consensus String with Outliers} can be solved to in time $\delta^{O(\delta)}|\Sigma|^{\delta} n^9$, where $\delta = d/(n  - k)$. 
\end{itemize}
For a parameter $\delta$, an algorithm with running time $f(\delta)n^{O(1)}$ is called a {\em fixed parameter tractable} (FPT) algorithm for the problem parameterized by $\delta$. Parameterized problems that admit such algorithms are said to be FPT. Hence our algorithm for {\em Consensus String with Outliers} proves that the problem is FPT parameterized by $\delta$. 

Our approximation schemes are based on random sampling. If the number of outliers is small, then with reasonably high probability a small random subset of the input strings will not contain any outliers. If the random sample does not contain outliers then the sample can be used to estimate the optimal consensus string.  We show that if the size of the sample and the number of repetitions of the experiment are chosen appropriately then there exists a good bound on the quality of the output of this natural heuristic.  For inputs where the noise does not completely overwhelm the data, i.e.~when $k \leq cn$ for $c < 1$, the dependence on the running time of our approximation scheme for {\em Min-distance Consensus String with Outliers} is good; more specifically, it is an EPTAS. 

The difference in running time of a PTAS and an EPTAS can be quite dramatic. For instance, running a $O(2^{1/\epsilon}n)$-time algorithm is reasonable for $\epsilon=\frac{1}{10}$ and $n=1000$, whereas running a $O(n^{1/\epsilon})$-time algorithm is infeasible. Hence, considerable effort has been devoted to improving PTASs to EPTASs, and showing that such an improvement is unlikely for some problems.  For example, Arora~\cite{Arora96} gave a $n^{O(1/\epsilon)}$-time PTAS for {\em Euclidean TSP}, which was then improve to a $O(2^{O(1/\epsilon^2)}n^2)$-time algorithm in the journal version of the paper~\cite{Arora98}.  On the other hand {\em Independent Set} admits a PTAS on unit disk graphs~\cite{HuntMRRRS98} but Marx~\cite{marx05} showed that, unless FPT=W[1], it does not admit an EPTAS. Many more examples of PTASs that have been improved to EPTASs, and problems for which a PTAS exists but for which an EPTAS has been ruled out under the assumption that FPT$\neq$W[1] can be found in the survey of Marx~\cite{marx_survey}. An interesting question is whether the requirement that $k \leq cn$ for $c < 1$ is necessary in order to improve the PTAS for {\em Min-distance Consensus String with Outliers} to an EPTAS. Can an EPTAS be obtained for this problem without the requirement? 


A useful observation in this regard is that an EPTAS for an optimization problem automatically yields a FPT algorithm for the corresponding decision problem parameterized by the value of the objective function~\cite{marx_survey}. More specifically, if we set $\epsilon = \frac{1}{2\alpha}$, where $\alpha$ is the value of the objective function, then a $(1+\epsilon)$-approximation algorithm would distinguish between ``yes'' and ``no'' instances of the problem. Hence, an EPTAS could be used to solve the problem in $O(f(\epsilon)n^{O(1)}) = O(g(\alpha)n^{O(1)})$-time. This observation is frequently used to rule out the existence of an EPTAS. If a problem does not admit a FPT algorithm parameterized by the value of the objective function unless FPT=W[1], then the corresponding optimization problem does not admit an EPTAS unless FPT=W[1].

To the best of our knowledge {\em all} known results ruling out EPTASs for problems for which a PTAS is known use this approach. Unfortunately, it cannot be used to rule out an EPTAS for {\em Min-distance Consensus String with Outliers} because  {\em Consensus String with Outliers} parameterized by $d$ is FPT. In particular, we show there is an algorithm for {\em Consensus String with Outliers} with running time $\delta^{O(\delta)}|\Sigma|^{\delta} n^9$, where $\delta = d/(n  - k)$, and since $\delta$ is always at most $d$--and much smaller than $d$ for most inputs--this algorithm runs in $O(d^{O(d)}|\Sigma|^{d} n^9)$-time. Our FPT algorithm is an adaptation of the algorithm by Marx~\cite{M08} for the {\em Consensus Patterns} problem.

In his survey, Marx~\cite{marx_survey} introduces a hybrid of FPT reductions and gap preserving reductions and argues that it is conceivable that such reductions could be used to prove that a problem that has a PTAS and is FPT parameterized by the value of the objective function does not admit an EPTAS unless FPT=W[1]. We show that {\em Min-distance Consensus String with Outliers} does not admit an EPTAS unless FPT=W[1], giving the first example of this phenomenon.  At the core of our reduction is an analysis of one-dimensional random walks where some of the steps are ``double steps'' that are taken in the same direction. The results on random walks could turn out useful in other hardness proofs, and thus, might be of independent interest. Parameterized hardness results for a few other parameterizations of {\em Consensus String with Outliers} follow as simple corollaries of our construction.

\subsection*{Related Work} The problems considered in this paper belong to the more general class of stringology problems where a set of strings is given and the aim is to determine a single string that is representative for the set. The exact definition of what being a good representative means may vary and different definitions lead to abstractions of various problems in bioinformatics \cite{lanctot}. The {\em Consensus Patterns} problem is quite similar to our problem, however, in this context the aim is to find a substring in each of the input strings and consensus string so that the sum of the Hamming distances is minimized.  Li et al.~\cite{LMW022} gave a PTAS for this problem and there has been a significant effort in attempting on proving tighter bounds on the running time of the PTAS \cite{brona,brona2}.  The {\em Closest String} problem is another related problem where the goal is to find a string that minimizes the maximum Hamming distance to any string. This problem also admits a PTAS but no EPTAS \cite{LMW02}.  Both problems have been investigated in the framework of parameterized complexity by several authors, however, the parameterization of {\em Consensus Patterns} with respect to the distance appeared to be very challenging.  In 2005, Marx \cite{M08} showed {\em Consensus Patterns} is FPT when parameterized by $\delta = d/n$ and bounded alphabet size.  

\subsection*{Overview of DNA Fragment Assembly}

The general approach to large-scale sequencing is as follows: first the DNA is extracted from the cell and copied multiple times, then the DNA is cut into smaller fragments, each fragment is sequenced by a sequencing platform to produce a read, and finally the reads are assembled into large segments of the genome.  Figure \ref{fig:error_rate} illustrates this process. One important point is that sequencing platforms can produce many hundreds of thousands, or even millions of reads in a short time (on the order of a day), but can only handle small segments of DNA at a time and produce relatively short reads.  The copying step ensures that a position of the genome is sequenced multiple times and the reads overlap by an adequate amount.  This overlap is what allows for the assembly of the reads into large contiguous strings. Developing novel algorithms and tools for the fragment assembly process is, at present, a very active area of research in bioinformatics. Current assembly tools are efficient, however, their accuracy is substantially diminished by repeated regions in the genome sequence and sequencing errors.

As previously mentioned, error correction of the reads is an important step in genome sequencing, however, present algorithms are still unable to handle outliers in the data.  The majority of sequencing errors occur when the nucleotide found in a read deviates from the actual nucleotide in the DNA sample ({\em i.e.}~a read has the symbol \texttt{A} at a position where it should be a \texttt{C}), making Hamming distance the most reasonable metric to use.  In a read, for the first 50 positions the error rate is quite small but for subsequent positions the error probability increases exponentially \cite{altacyclic,SS11}.  See Figure \ref{fig:error_rate}. This is why the length of the reads is at most $70$-$100$.  Due to this change in the error probability and technical details related to fragment assembly\footnote{We leave out these details in this paper and direct interested readers to the work of Pevzner et al.~\cite{PTW01}}, error correction begins by computing the set of all consecutive, length-$\ell$ substrings from each read, where $\ell$ is an input parameter.  Hence, error correction is implicitly performed on the set for all length-$\ell$ contiguous substrings of reads rather than the full reads. 

The majority of error correction algorithms consider the first 50 positions in a read and ignore remaining positions \cite{KCS11,PTW01,YAD11}, eliminating a large portion of the data.  This is unsatisfactory since acquiring the data is both expensive and time consuming, and any loss of data will affect the accuracy of the assembly.  Due to the change in the error probability in the reads, some of the length-$\ell$ strings will have a significant but tolerable number of errors (i.e.~up to 15\% of positions being erroneous) that can be error corrected and thus, used in fragment assembly. On the other hand, the length-$\ell$ strings that stem from contaminated data or bad runs of the sequencing platform should be detected and removed.

\begin{figure}[t]
\centering
\includegraphics[width=\textwidth]{./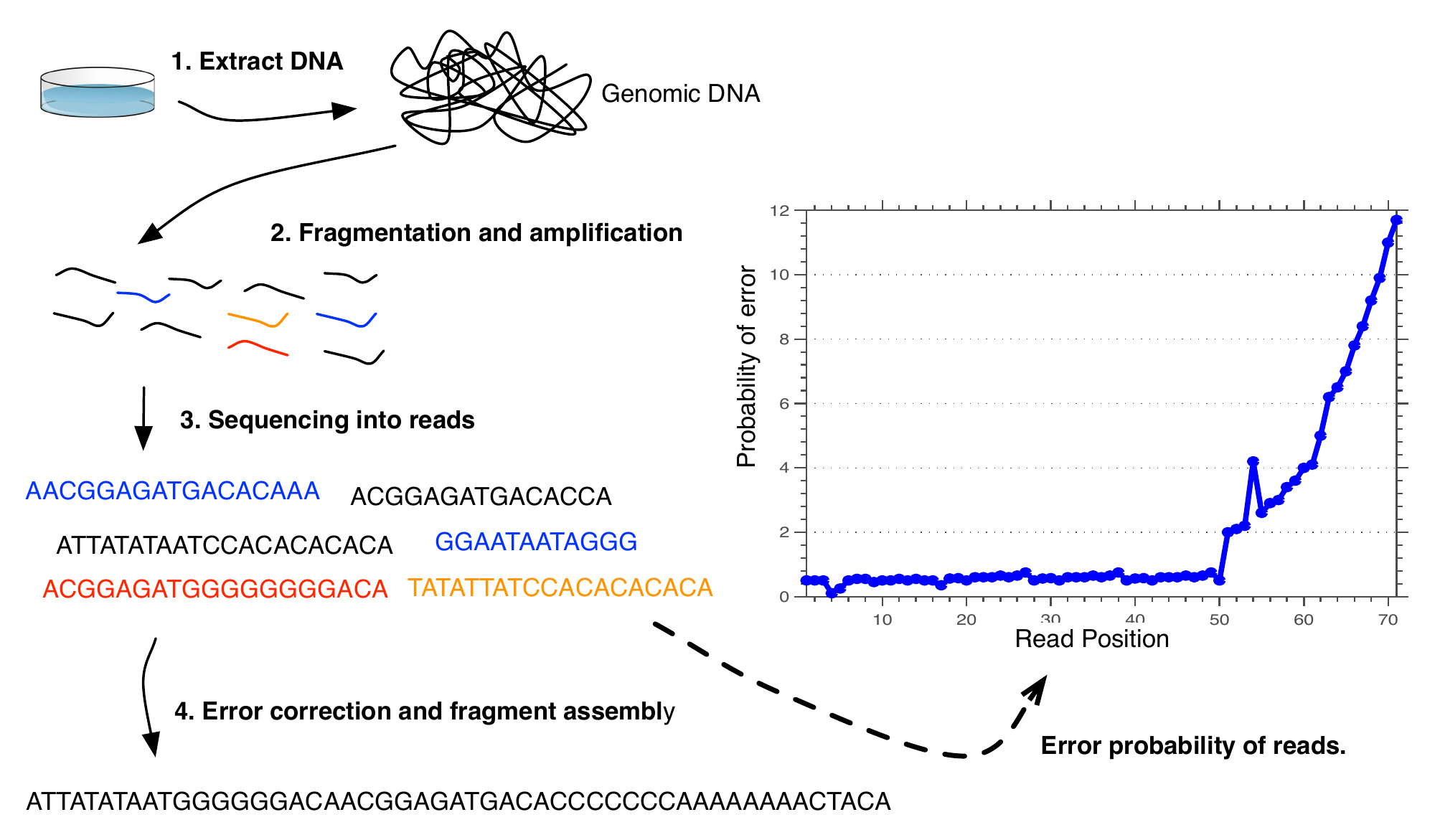}
\caption{A visualization of the basic steps needed for whole genome assembly of a biological sample.  The probability that a character in a read was sequenced incorrectly is highly dependent on its position within the read. The change in the error probability with respect to the read length is illustrated \cite{altacyclic}.}
\label{fig:error_rate}
\end{figure}

\subsubsection*{Preliminaries}

A maximization problem admits a PTAS if there is an algorithm $A(\mathcal{I},  \epsilon)$  such that, for any $\epsilon > 0$ and any instance $\mathcal{I}$ of $A(\mathcal{I}, \epsilon)$ outputs a $(1 - \epsilon)$-approximate solution in time $|\mathcal{I}|^{f(1/\epsilon)}$ for some function $f$. A PTAS for a minimization problem finds a $(1 + \epsilon)$-approximate solution in time $|\mathcal{I}|^{f(1/\epsilon)}$.  An approximation scheme where the exponent of $|\mathcal{I}|$ in the running time is independent of $\epsilon$ is called an {\em efficient} polynomial time approximation scheme (EPTAS). Formally, an EPTAS is a PTAS whose running time is $f(1 / \epsilon)^{O(1)} |\mathcal{I}|^{O(1)}$.  

We give a brief introduction to paramterized complexity. A problem $\varphi$ is said to be {\em fixed parameter tractable} with respect to parameter $k$ if there exists an algorithm that solves $\varphi$  in $f(k) \cdot n^{O(1)}$ time, where $f$ is a function of $k$ that is independent of $n$ \cite{DF99}.  The class of all fixed parameter tractable problems is denoted by FPT. The class W[1] of parameterized problems is the basic class for fixed parameter intractability, FPT $\subseteq$ W[1] and the containment is believed to be proper. A parameterized problem $\Pi$ with the property that an FPT algorithm for $\Pi$ would imply that FPT=W[1] is called W[1]-hard. Downey and Fellows \cite{DF99} define {\em fpt-reductions}, which preserve W[1]-hardness.
%


%
%
Let $L, L' \subseteq \sum^* \times \mathbb{N}$ be two parameterized problems. We say that $L$ {\em fpt-reduces} to $L'$ if there are functions $f, g : \mathbb{N} \rightarrow \mathbb{N}$ and an algorithm that given an instance $(\mathcal{I},k)$ runs in time $f(k)|\mathcal{I}|^{f(k)}$ and outputs an instance $(\mathcal{I'},k')$ such that $k' \leq g(k)$ and $(\mathcal{I},k) \in L \iff (\mathcal{I}',k') \in L'$.
%
These reductions work as expected; if $L$ fpt-reduces to $L'$ and $L'$ is FPT then so is $L'$. Furthermore, if $L$ fpt-reduces to $L'$ and $L$ is W[1]-hard then so is $L'$. We refer the reader to the textbooks~\cite{DF99,niedermeier,flum} for a more thorough discussion of parameterized complexity.

Let $s$ be a string over the alphabet $\Sigma$. We denote the length of $s$ as $|s|$, and the $j$th character of $s$ as $s[j]$.  Hence, $s = s[1]s[2]\ldots s[|s|]$. For a set $S$ of strings of the same length we denote by $S[i]$ as $\{s[i]~:~s \in S\}$. That is, if the same character appears at position $i$ in several strings it is counted several times in $S[i]$. For an interval $P=\{i,i+1,\ldots,j-1,j\}$ of integers, define $s[P]$ to be the substring $s[i]s[i+1]\ldots s[j]$ of $s$. For a set $S$ of strings and interval $P$ define $S[P]$ to be the (multi)set $\{s[P]~:~s \in S\}$. For a set $S$ of length-$\ell$ strings the consensus string of $S$, denoted as $c(S)$, is such that $c(S)[i]$ is the most-frequent character in $S[i]$ for all $i \leq \ell$. Ties are broken by selecting the lexicographically first such character, however, we note that the tie-breaking will not affect our arguments. 

We denote the sum Hamming distance between a string, $s$, and a set of strings, $S$, as $d(S, s)$. Observe that the consensus string $c(S)$ minimizes $d(S, c(S))$--that is no other string $x$ is closer to $S$ than $c(S)$. However, some $x \neq c(S)$ could achieve $d(S, x)=d(S, c(S))$ and we refer to such strings as {\em majority strings} because they are obtained by picking a most-frequent character at every position with ties broken arbitrarily. The {\em Consensus String With Outliers} problem can now be succinctly stated as follows: given a set $S$ of strings and integers $k$ and $d$, the objective is to find a subset $S^* \subseteq S$ of size $n^*=n-k$ such that $d(S, c(S)) \leq d$, if it exists.

Given a subset $S^* \subseteq S$ we can compute $c(S^*)$ in polynomial time by choosing a majority string for $c(S^*)$. If we are given $c(S^*)$ for the optimal solution $S^*$ (but not given $S^*$ itself) then we can recover $S^*$ from $c(S^*)$ and $S$ in polynomial-time since $S^*$ is the $n-k$ strings in $S$ that are closest to $c(S^*)$. Similarly, given any string $x$, we denote $S_x$ as the subset of $S$ containing the $n^*$ strings closest to $x$. By construction $S_x$ satisfies the following inequality: $d(S', x) \geq d(S_x, x) \geq d(S_x, c(S_x))$ for any subset $S'$ of $S$ of size $n^*$.

\section{Approximating Consensus String with Outliers} \label{sec:ptas}

We prove the existence of a PTAS for the {\em Min-distance Consensus String with Outliers} problem. Our algorithm is based on random sampling. For a given value of $\epsilon$, the algorithm selects a value for the parameter $r$ based on $\epsilon$, picks $r$ strings $S' = (s'_1, s'_2, ... s'_r)$ from $S$ uniformly at random (with replacement), and returns the consensus string corresponding to $S'$. The next lemma shows that if $S'$ was taken from a (unknown) optimal solution $S^*$, rather than from the entire input set $S$, then in expectation $c(S')$ is almost as good the consensus string for the set $S^*$.

Our arguments rely on well-known concentration bounds for sums of independent random variables. We use the following variant of the Hoeffding's bound~\cite{H63} given by Grimmett and Stirzaker \cite[p. 476]{GR}.  

\begin{proposition}\label{prop:hoeff} {\bf (Hoeffding's bound) } Let $X_1, X_2, ... X_n$ be independent random variables such that $a_i \leq X_i \leq b_i$ for all $i$. Let $X = \Sigma_i X_i$ and the expected value of $X$ be $E[X]$ then it follows that: $$\Pr[X-E[X] \geq t] \leq \exp\left(\frac{-2t^2}{\Sigma_{i=1}^n \left(b_i-a_i\right)^2}\right).$$ 
\end{proposition}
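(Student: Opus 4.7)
The plan is to prove the bound via the classical Chernoff-style exponential moment method, reducing the tail estimate to a one-variable moment generating function inequality. First, for any parameter $s > 0$, I would apply Markov's inequality to the nonnegative random variable $e^{s(X - E[X])}$:
$$\Pr[X - E[X] \geq t] = \Pr\!\left[e^{s(X - E[X])} \geq e^{st}\right] \leq e^{-st}\, E\!\left[e^{s(X-E[X])}\right].$$
Since the $X_i$ are independent, so are the centered variables $Y_i := X_i - E[X_i]$, and the joint moment generating function factorizes as $E[e^{s(X - E[X])}] = \prod_{i=1}^{n} E[e^{s Y_i}]$. Each $Y_i$ has mean zero and lies in an interval of length $b_i - a_i$, so the entire problem collapses to bounding the MGF of a single bounded, centered random variable.

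Second, I would establish Hoeffding's lemma: if $Y$ has mean zero and $a \leq Y \leq b$, then $E[e^{sY}] \leq \exp(s^2 (b-a)^2 / 8)$. By convexity of the exponential, $e^{sy} \leq \frac{b-y}{b-a}\, e^{sa} + \frac{y-a}{b-a}\, e^{sb}$ for $y \in [a,b]$; taking expectations and using $E[Y]=0$ yields
$$E[e^{sY}] \leq \frac{b}{b-a}\, e^{sa} - \frac{a}{b-a}\, e^{sb}.$$
With the substitutions $p := -a/(b-a) \in [0,1]$ and $u := s(b-a)$, the right-hand side becomes $(1-p)\, e^{-pu} + p\, e^{(1-p)u}$; calling its logarithm $\phi(u)$, one checks that $\phi(0) = \phi'(0) = 0$ and rewrites $\phi''(u)$ as $q(u)(1 - q(u))$ for a suitable $q(u) \in [0,1]$, so $\phi'' \leq 1/4$ uniformly. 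A second-order Taylor expansion then gives $\phi(u) \leq u^2/8$, which is the lemma.

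Finally, I would assemble the pieces. Setting $\sigma^2 := \sum_{i=1}^{n} (b_i - a_i)^2$, the lemma yields $E[e^{s(X - E[X])}] \leq \exp(s^2 \sigma^2 / 8)$, and combining with the Markov step gives
$$\Pr[X - E[X] \geq t] \leq \exp\!\left(-st + \frac{s^2 \sigma^2}{8}\right).$$
Optimizing this free parameter by choosing $s = 4t/\sigma^2$ produces the claimed bound $\exp(-2t^2/\sigma^2)$. The main obstacle is the calculus step inside Hoeffding's lemma: every other ingredient is either convexity of the exponential, factorization by independence, or a routine one-variable minimization, whereas the bound $\phi'' \leq 1/4$ requires spotting the right probabilistic rewriting and invoking the elementary inequality $q(1-q) \leq 1/4$.
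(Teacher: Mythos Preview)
Your argument is the standard and correct proof of Hoeffding's inequality via the exponential moment method and Hoeffding's lemma; the steps (Markov on $e^{s(X-E[X])}$, factorization by independence, the convexity bound and the $\phi''\le 1/4$ calculation, then optimizing in $s$) are all sound.

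The paper, however, does not prove this proposition at all: it is quoted as a known result with a citation to Hoeffding~\cite{H63} and to Grimmett and Stirzaker~\cite[p.~476]{GR}, and is used as a black box in the subsequent arguments. So there is nothing to compare against --- you have supplied a full proof where the paper simply invokes the literature.
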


\begin{lemma}\label{lem:min_d} For all $\epsilon > 0$ and $\sigma$, there exists a value of $r$ such that the following holds: if $S$ is a set of length-$\ell$ strings over the alphabet $\Sigma$, where $|\Sigma| = \sigma$, and $S'$ is a subset of $S$ of size $r$, $(s'_1, s'_2, ... s'_r)$, chosen uniformly at random, then $E[d(S, c(S'))] \leq (1+ \epsilon)d(S, c(S))$. \end{lemma}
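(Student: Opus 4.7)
The plan is to reduce the claim to a position-by-position analysis and apply Hoeffding's inequality (Proposition~\ref{prop:hoeff}) coordinatewise to bound the probability that the sample mode differs from the true majority character.

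For each position $i\in[\ell]$ and each character $a\in\Sigma$, let $f_i(a):=|\{s\in S:s[i]=a\}|/n$ be the empirical frequency in $S$, set $\mu_i:=f_i(c(S)[i])=\max_a f_i(a)$, and write $A_i:=c(S')[i]$ for the sample mode at position $i$. Because Hamming distance decomposes coordinate-wise and the sample characters at different positions are mutually independent,
\[
  E[d(S,c(S'))]-d(S,c(S))\;=\;n\sum_{i=1}^{\ell}\sum_{a\neq c(S)[i]}\Delta_{i,a}\,\Pr[A_i=a], \qquad \Delta_{i,a}:=\mu_i-f_i(a)\ge 0,
\]
so it suffices to show that each per-position term satisfies $E[\mu_i-f_i(A_i)]\le\epsilon(1-\mu_i)$ and then to sum over $i$.

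For each alternative character $a$, the event $\{A_i=a\}$ forces the sample-frequency inequality $\hat f_i(a)\ge \hat f_i(c(S)[i])$. Applying Proposition~\ref{prop:hoeff} to the sum $\sum_{j=1}^{r}Y_j$ of independent $[-1,1]$-valued random variables $Y_j:=\mathbb{1}[s'_j[i]=a]-\mathbb{1}[s'_j[i]=c(S)[i]]$, which have common mean $-\Delta_{i,a}$, yields $\Pr[A_i=a]\le\exp(-r\Delta_{i,a}^2/2)$. Summing over $a$ and using the elementary inequality $\Delta\exp(-r\Delta^2/2)\le 1/\sqrt{er}$ (with maximum at $\Delta=1/\sqrt{r}$) gives the generic estimate $E[\mu_i-f_i(A_i)]\le(\sigma-1)/\sqrt{er}$.

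The remaining step is to choose $r=r(\epsilon,\sigma)$ large enough to upgrade this additive estimate to a multiplicative bound. I would split on $q_i:=1-\mu_i$. When $q_i$ is bounded below by a threshold depending only on $\epsilon$ and $\sigma$, the generic Hoeffding estimate above is already at most $\epsilon q_i$ once $r$ is polynomial in $\sigma/\epsilon$. In the complementary regime where $q_i$ is very small, the structural identity $\sum_{a\ne c(S)[i]}f_i(a)=q_i$ forces each $f_i(a)\le q_i$ and hence $\Delta_{i,a}\ge 1-2q_i$, so the event $A_i\ne c(S)[i]$ requires the empirical frequency $\hat f_i(c(S)[i])$ to fall below $1/2$; a Chernoff-type tail bound on the binomial $r\,\hat f_i(c(S)[i])$ in the form $\Pr[A_i\ne c(S)[i]]\le (4\mu_i(1-\mu_i))^{r/2}=(4q_i(1-q_i))^{r/2}$ makes the per-position ratio $\Pr/q_i$ decay uniformly in $q_i$ for $r$ depending only on $\epsilon$. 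Summing the two regimes yields $E[d(S,c(S'))]\le(1+\epsilon)\,d(S,c(S))$.

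The main obstacle will be the case split itself: the bare Hoeffding estimate is not multiplicatively tight when $\mu_i$ is close to $1$, so one must invoke a sharper Chernoff-type bound in that range and verify that a single value of $r$, depending only on $\epsilon$ and $\sigma$, handles both regimes and the transition between them uniformly across all admissible frequency distributions and all values of $n$ and $\ell$. Tie-breaking in $\hat f_i$ is benign because any tied alternative $a$ automatically satisfies $\Delta_{i,a}=0$ and contributes nothing to the excess.
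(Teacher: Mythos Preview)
Your plan is correct and mirrors the paper's proof: per-column decomposition, a case split on whether $q_i=d_i/n$ is small or large, Hoeffding's inequality in the large-$q_i$ regime, and a sharper binomial tail estimate in the small-$q_i$ regime (the paper expands the binomial sum directly to obtain $\Pr[A_i\ne c(S)[i]]\le 2(d_i/n)^2$ where you invoke the Chernoff form $(4q_i(1-q_i))^{r/2}$, and in the large case the paper phrases Hoeffding via ``good'' versus ``bad'' symbols rather than your $\Delta e^{-r\Delta^2/2}\le 1/\sqrt{er}$ bound, but these are cosmetic differences). One minor slip: your closing remark on tie-breaking gives the wrong reason---a \emph{sample} tie $\hat f_i(a)=\hat f_i(c(S)[i])$ does not force the \emph{population} gap $\Delta_{i,a}$ to vanish---though the conclusion is fine, since your Hoeffding step already bounds $\Pr[\hat f_i(a)\ge\hat f_i(c(S)[i])]$, which includes ties regardless of how $A_i$ is defined on them.
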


\begin{proof} 
We prove that there exists a $r$ such that $E[d(S, c(S'))] \leq (1+ 2\epsilon)d(S, c(S))$. Applying this weaker inequality with $\epsilon' = \epsilon/2$ then proves the statement of the Lemma. We assume, without loss of generality, that $c(S)$ is equal to $0^{\ell}$, $\epsilon \leq 1/16$, and $r \geq 8$. We restrict interest to column $i$ of $S$, where $0 \leq i \leq \ell$, let $d_i$ be the number of nonzero symbols in column $i$ and let $z_i = n-d_i$. Observe that $d(S, c(S'))$ is equal to the sum over $i$ of the number of strings $s \in S$ such that $s[i] \neq c(S')[i]$. By linearity of expectation it is sufficient to prove that for every $i$ we have  $E[d(S[i], c(S')[i])] \leq (1+ 2\epsilon)d_i$.

First, we assume $d_i$ is at most $\epsilon n$. Let $q$ be the probability that $c(S')[i] \neq 0$. It follows that $E[d(S[i], c(S')[i])]$ is at most $d_i(1-q) + qn$. We determine an upper bound on the probability $q$ as follows: $$ q \leq \sum_{x = \lceil r / 2 \rceil}^r {r \choose x} \left( d_i / n \right)^x \left(1 - d_i / n \right)^{r - x} \leq  \sum_{x = \lceil r / 2 \rceil}^r 2^r \left( d_i / n \right)^x \leq  2^r \left( d_i / n \right)^{\lceil r / 2 \rceil} \frac{1- \left( d_i / n \right)^{\lceil r / 2 \rceil}}{1 - (d_i/n)}.$$  
Since $d_i / n \leq \epsilon \leq 1/16$, we get: $$q \leq 2^{r + 1}  \left(d_i / n \right)^{\lceil r / 2 \rceil} \leq 2^{r + 1}  \epsilon^{\lceil r /4 \rceil} \left( d_i / n \right)^{\lfloor r / 4 \rfloor} \leq 2^r \left( \frac{1}{16}\right)^{\lceil r / 4 \rceil} \cdot 2 \left( d_i / n \right)^{\lfloor r / 4 \rfloor} = 2 \left( d_i / n \right)^{\lfloor r / 4 \rfloor}.$$ It follows from the last inequality, and that $r \geq 8$, that $q \leq 2 \left( d_i / n \right)^2$. Hence, we obtain the following bound on $E[d(S[i], c(S')[i])]$:

$$E[d(S[i], c(S')[i])]  \leq d_i(1 - q) + qn   
					   		\leq d_i + 2 \left( \frac{d_i}{n}\right)^2 n  
							\leq (1 + 2 \epsilon)d_i $$

Next, we assume that  $d_i > \epsilon n$.  We say that a symbol $\alpha \in \Sigma$ is a {\em good} symbol if there are at least $z_i-n\epsilon^2$ strings in $S$ that have the symbol $\alpha$ at column $i$; any symbol that is not good is {\em bad}. If $c(S')[i]$ is a good symbol then $d(S[i], c(S')[i])$ is at most $d_i + n\epsilon^2$ and hence, is at most $(1+\epsilon)d_i$ since $d_i > \epsilon n$. 
Let $p$ be the probability that $c(S')[i]$ is a bad symbol then, $E[d(S[i], c(S')[i])]$ is upper bounded by $(1-p)(1+\epsilon)d_i + pn$. Lastly, we determine an upper bound on $p$ to complete the proof.

Let $\alpha$ be a bad symbol and $p_\alpha$ be the probability that $c(S')[i]$ is equal to $\alpha$. We note that in order for $c(S')[i]$ to be $\alpha$, there has to be more positions equal to $\alpha$ than $0$ in $S'[i]$.  Let $X$ be the difference between the number of positions equal to $\alpha$ and the number of positions equal to $0$ in $S'[i]$.  It follows that $p_\alpha \leq \Pr[X \geq 0]$.
Let $X_j$ be an indicator variable which is $1$ if $s'_j[i]$ is equal to $\alpha$, -1 if it is equal to $0$, and $0$ otherwise.  Since $\alpha$ is a bad symbol, there are at least $\epsilon^2$ more positions equal to $0$ than positions equal to $\alpha$ in $S'[i]$ and therefore, $E[X_j] = \Pr[s'_j[i] = 0] - \Pr[s'_j[i] = \alpha ] \leq - \epsilon^2$. By linearity of expectation, we obtain $E[X] = \Sigma_{j=1}^r E[X_j] \leq - r\epsilon^2$. Using this inequality, we get $\Pr[X \geq 0] \leq \Pr[X - E[X] \geq r\epsilon^2]$. Since the $X_j$ variables are independent and difference between the upper and lower bound of $X_j$ is $2$, we can use Hoeffding's inequality to obtain the following bound.
$$\Pr[X-E[X] \geq r\epsilon^2] \leq \exp \left( \frac{-2 r^2 \epsilon^4}{r 2^2} \right)= \exp \left( \frac{r \epsilon^4}{2}\right)$$
By choosing $r= \max\left(\frac{2\ln(\frac{\sigma}{\epsilon^2})}{\epsilon^4}, 8\right)$, we get $p_\alpha \leq \frac{\epsilon^2}{\sigma}$. 
Finally, we bound $p$ as follows: $p \leq \sum\limits_\alpha p_\alpha \leq \sigma \frac{\epsilon^2}{\sigma} = \epsilon^2$. We can now use the upper bound on $p$ and our assumption that $d_i > \epsilon n$ to bound $E[d(S[i], c(S')[i])]$:
$$ E[d(S[i], c(S')[i])] 	  \leq (1-p)(1+\epsilon)d_i + pn  \leq (1+ \epsilon)d_i + \epsilon^2n  \leq (1+ 2\epsilon)d_i.$$
This concludes the proof. 
\end{proof}

Lemma~\ref{lem:min_d} gives a simple, deterministic PTAS for {\em  Min-distance Consensus String with Outliers}.

\begin{theorem} \label{lem:min_d_det} There exists a PTAS for Min-distance Consensus String with Outliers. \end{theorem}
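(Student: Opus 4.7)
The plan is to derandomize the random sampling procedure suggested just before Lemma~\ref{lem:min_d} and combine it with the observation from the preliminaries that any candidate string $x$ is automatically upgraded into a feasible solution by taking $S_x$, the $n^*=n-k$ strings of $S$ closest to $x$. First I would fix $\epsilon>0$, set $r=r(\epsilon,\sigma)$ to be the sample size from Lemma~\ref{lem:min_d}, and have the algorithm enumerate every ordered $r$-tuple $S'$ drawn from $S$ with replacement. There are at most $n^r$ such tuples, which is polynomial in $n$ since $r$ depends only on $\epsilon$ and $\sigma$ (the alphabet being fixed). For each $S'$ the algorithm would compute $x=c(S')$, form $S_x$, record the cost $d(S_x,c(S_x))$, and finally return the subset $S_x$ of minimum recorded cost.

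For the approximation guarantee I would apply Lemma~\ref{lem:min_d} not to $S$ itself, but to an (unknown) optimal non-outlier set $S^*$ of cost $d(S^*,c(S^*))$. The lemma then guarantees that a uniform random $r$-tuple $\widetilde S$ drawn from $S^*$ satisfies $E[\,d(S^*,c(\widetilde S))\,]\leq(1+\epsilon)\,d(S^*,c(S^*))$. A standard averaging argument then produces at least one concrete $r$-tuple $S^{\circ}\subseteq S^*$ achieving $d(S^*,c(S^{\circ}))\leq(1+\epsilon)\,d(S^*,c(S^*))$. Since the algorithm enumerates every $r$-tuple of $S$, it in particular encounters $S^{\circ}$, so no prior knowledge of $S^*$ is required to find a ``good'' sample.

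The last step transfers the quality of $c(S^{\circ})$ as an approximate centre for $S^*$ into the quality of what the algorithm actually outputs. Letting $x=c(S^{\circ})$, the preliminaries give $d(S_x,c(S_x))\leq d(S_x,x)\leq d(S^*,x)\leq(1+\epsilon)\,d(S^*,c(S^*))$, where the first inequality uses that the consensus minimizes sum-distance, the second uses that $S_x$ consists of the $n^*$ strings in $S$ closest to $x$ while $S^*$ is a competing $n^*$-subset, and the third is the bound from the previous paragraph. Since the algorithm returns the best enumerated solution, its output is $(1+\epsilon)$-approximate.

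The step I expect to require the most care is conceptual rather than computational: Lemma~\ref{lem:min_d} speaks about uniform sampling from the given multiset of strings, whereas the approximation argument asks me to apply it with the \emph{unknown} optimal subset $S^*$ playing that role. The resolution is that the lemma is completely agnostic about which multiset is plugged in for ``$S$'', and that brute-force enumeration of all $r$-tuples of the input serves as an efficient deterministic stand-in for sampling from $S^*$, since every $r$-tuple of $S^*$ is also an $r$-tuple of $S$. Once this point is accepted, the remaining work is routine manipulation of the $S_x$ operation.
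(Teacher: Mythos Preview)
Your proposal is correct and follows essentially the same approach as the paper's proof: apply Lemma~\ref{lem:min_d} to the unknown optimal subset $S^*$, use an averaging argument to find a concrete $r$-tuple $S^{\circ}\subseteq S^*$ achieving the expectation bound, enumerate all $n^r$ tuples from $S$ to guarantee encountering $S^{\circ}$, and finish with the chain $d(S_x,c(S_x))\leq d(S_x,x)\leq d(S^*,x)\leq(1+\epsilon)d(S^*,c(S^*))$. The paper's argument is terser but structurally identical, including the same chain of inequalities and the same ``some sample must achieve expectation'' derandomization step.
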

\begin{proof}
It follows from Lemma~\ref{lem:min_d} that there exists an integer $r$ such that $E[d(S^*, c(S'))] \leq (1+\epsilon)d(S^*, c(S^*))$ if $S'$, the set of $r$ of strings chosen from $S$, is from an (unknown) optimal solution $S^*$. Some subset $S'$ of $S^*$ must achieve expectation. The algorithm guesses this set $S'$ by trying all possible $n^r$ subset of $S$ of size $r$. Let $x=c(S')$. The algorithm returns the set $S_x$ of the $n^*$ strings closest to $x$. This set satisfies $d(S_x, c(S_x)) \leq d(S_x, x) \leq d(S^*, x) \leq (1+\epsilon)d(S^*,c(S^*))$, concluding the proof.  
\end{proof}

If the number of outliers $k$ is small compared to $n$, i.e.~$k \leq n/2$, then with probability $1/2^r$ a random subset $S'$ of $r$ strings is a subset of an optimal solution $S^*$. We use this to give a randomized EPTAS for {\em Min-distance Consensus String with Outliers}.


\begin{theorem} \label{lem:min_d_rand} There exists a randomized EPTAS for Min-distance Consensus String with Outliers for inputs when $k \leq cn$ for $c < 1$. The algorithm runs in time $\frac{1}{(1-c)^r} \cdot f(\epsilon)(n\ell)^{O(1)}$ and outputs a $(1+\epsilon)$-approximate solution with probability $1/2$.
\end{theorem}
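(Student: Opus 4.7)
The plan is to randomize the enumeration step of the PTAS of Theorem~\ref{lem:min_d_det}: instead of trying all $n^r$ size-$r$ subsamples of $S$, draw only a few of them uniformly at random and return the best consensus obtained. Write $S^*$ for an (unknown) optimal subset, so $|S^*| = n^* = n-k \geq (1-c)n$, and set $\mathrm{OPT} = d(S^*, c(S^*))$.

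First I would apply Lemma~\ref{lem:min_d} with error parameter $\epsilon/4$ to the set $S^*$, obtaining an $r$ depending only on $\epsilon$ and the constant alphabet size such that a uniform with-replacement sample $S'$ of size $r$ from $S^*$ satisfies $E[d(S^*, c(S'))] \leq (1+\epsilon/4)\mathrm{OPT}$. Since $c(S^*)$ is a sum-Hamming-distance minimizer over $S^*$, the excess $d(S^*, c(S')) - \mathrm{OPT}$ is a nonnegative random variable of expectation at most $(\epsilon/4)\mathrm{OPT}$, so Markov's inequality yields
\[
\Pr\bigl[\, d(S^*, c(S')) \leq (1+\epsilon/2)\mathrm{OPT} \,\bigr] \geq \tfrac{1}{2}.
\]
To avoid having to know $S^*$, I would instead draw $S'$ uniformly with replacement from $S$; the probability that all $r$ picks land in $S^*$ is at least $((n-k)/n)^r \geq (1-c)^r$, and on that event $S'$ is distributed exactly as required above. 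Hence a single trial produces an $S'$ meeting the quality bound with probability at least $(1-c)^r/2$.

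Running $T = \lceil 2/(1-c)^r \rceil$ independent trials and, for each, computing $x = c(S')$, the set $S_x$ of the $n^*$ strings of $S$ closest to $x$, and the value $d(S_x, c(S_x))$, then returning the pair minimizing this value across trials, drives the overall failure probability below $1/2$ by a standard amplification estimate. For any successful trial the inequality chain $d(S_x, c(S_x)) \leq d(S_x, x) \leq d(S^*, x) \leq (1+\epsilon/2)\mathrm{OPT} \leq (1+\epsilon)\mathrm{OPT}$, exactly as in the proof of Theorem~\ref{lem:min_d_det}, certifies a $(1+\epsilon)$-approximation. Each trial uses $(n\ell)^{O(1)}$ time, so the total running time is $\frac{1}{(1-c)^r} \cdot f(\epsilon)(n\ell)^{O(1)}$, with the exponent of $n$ independent of $\epsilon$. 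The one point requiring care is the passage from Lemma~\ref{lem:min_d}'s expectation bound to a constant-probability quality guarantee; Markov is only applicable here because $c(S^*)$ is a minimizer over $S^*$, which makes the excess nonnegative. Everything else is routine probability amplification and bookkeeping.
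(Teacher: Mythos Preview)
Your proof is correct and follows essentially the same approach as the paper: sample $r$ strings from $S$, exploit the $(1-c)^r$ chance that all land in $S^*$, apply Markov's inequality to pass from the expectation bound of Lemma~\ref{lem:min_d} to a probability bound, then amplify by repetition and finish with the same chain $d(S_x,c(S_x)) \leq d(S_x,x) \leq d(S^*,x)$. The one difference is that you apply Markov to the nonnegative excess $d(S^*,c(S'))-\mathrm{OPT}$ rather than to $d(S^*,c(S'))$ itself, which yields a constant per-trial success probability of $1/2$ instead of one that shrinks with $\epsilon$; this is a mild sharpening but not a different route.
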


\begin{proof} 
We give a polynomial-time algorithm that returns a $(1+\epsilon)$-approximate solution with probability $(1-c)^r \cdot f(\epsilon)$. Repeating this algorithm $O\left(\frac{1}{(1-c)^r \cdot f(\epsilon)} \right)$ times then yields the statement of the theorem. The algorithm selects a value for $r$ such that for a random subset $S'$ of the unknown optimal solution $S^*$ the inequality $E[d(S^*, c(S'))] \leq (1+\frac{\epsilon}{3})d(S^*, c(S^*))$ holds.  It follows from Lemma~\ref{lem:min_d} that this can be done so that $r$ only depends on $\epsilon$.  Next, $r$ strings from $S$ are selected uniformly at random (with replacement) to form a subset $S'$. Let $x = c(S')$. The algorithm then returns the set $S_x$ of the $n^*$ strings closest to $x$.

It remains to find a sufficient lower bound of the probability that the returned set is a $(1+\epsilon)$-approximation. Since $k \leq cn$, it follows that the probability that $S'$ is taken from an (unknown) optimal solution $S^*$ is at least $(\frac{n-cn}{n})^r = (1-c)^r$.  If $S'$ is taken from $S^*$ then by Lemma~\ref{lem:min_d} we have that $E[d(S^*, c(S'))] \leq (1+\frac{\epsilon}{3})d(S^*, c(S^*))$. Next, we assume otherwise.  By Markov's inequality \cite[p. 311]{GR} the probability that $d(S^*, c(S'))$ exceeds expectation by a factor at least $1+\frac{\epsilon}{3}$ is at most $\frac{1}{1+\frac{\epsilon}{3}}$. Hence, with probability $f(\epsilon)$ for some function $f$ of $\epsilon$ we have that: $$d(S^*, c(S')) \leq \left( 1+\frac{\epsilon}{3} \right) d(S^*, c(S^*)) \cdot \left( 1+\frac{\epsilon}{3} \right),$$ which is at most $(1+\epsilon)d(S^*, c(S^*))$ when $2\left( \frac{\epsilon}{3} \right)^2 \leq \frac{\epsilon}{3}$. In particular, this holds if $\epsilon \leq 1/3$, concluding the proof.
\end{proof}

The best way to view Theorem~\ref{lem:min_d_rand} is as a performance guarantee on a natural heuristic for the problem when the parameter $r$ is chosen appropriately. We note that one would expect natural inputs to contain substantially fewer outliers than $n/2$, and that Markov's inequality is a very pessimistic bound for the probability of achieving expectation. Hence, it is likely that for reasonable inputs the above algorithm will perform much better in practise than the proved bounds. 

We now show that the PTAS for the {\em Min-distance Consensus String with Outliers} problem can be extended to obtain a PTAS for the {\em Consensus String with Max Non-Outliers} problem.  We recall that for this optimization problem, we are given $S$ and an integer $d$ and asked to find a set $S^* \subseteq S$ that maximizes $|S^*|$ and satisfies the constraint $d(S^*, c(S^*)) \leq d$.

\begin{theorem} \label{thm:max_n_star} There exists a PTAS for Consensus String with Max Non-Outliers.\end{theorem}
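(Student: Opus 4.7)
The plan is to reduce \emph{Consensus String with Max Non-Outliers} to \emph{Min-distance Consensus String with Outliers} and invoke Theorem~\ref{lem:min_d_det} as a black box. For every $k \in \{0, 1, \ldots, n-1\}$ I run the PTAS of Theorem~\ref{lem:min_d_det} with accuracy $\epsilon' = \epsilon/2$, obtaining a candidate set $S_k^*$ of size $n-k$ satisfying $d(S_k^*, c(S_k^*)) \leq (1 + \epsilon/2)\,\mathrm{OPT}(k)$, where $\mathrm{OPT}(k)$ denotes the minimum sum distance over subsets of $S$ of size $n-k$. The algorithm outputs $S_k^*$ for the smallest $k$ whose candidate actually meets $d(S_k^*, c(S_k^*)) \leq d$. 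Sizes $n-k < 2/\epsilon$ are handled by direct enumeration of the at-most $n^{O(1/\epsilon)}$ subsets of that cardinality, which stays within the PTAS time budget.

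Feasibility of the output is immediate, so it remains to lower bound the returned size. Let $n^*_{opt} = |S^*_{opt}|$ and $k_{opt} = n - n^*_{opt}$; I will show that $\widehat{k} := k_{opt} + \lceil \epsilon n^*_{opt}/2 \rceil$ is always accepted, and therefore the output has size at least $(1-\epsilon)n^*_{opt}$ whenever $n^*_{opt} \geq 2/\epsilon$. The key ingredient is the elementary ``top-fraction'' fact: removing the $m$ largest of $N$ non-negative numbers with total $T$ leaves a residual sum of at most $T(N-m)/N$, because each removed entry is at least the average $T/N$. Applying this to the multiset $\{d(s, c(S^*_{opt})) : s \in S^*_{opt}\}$ and discarding the top $\lceil \epsilon n^*_{opt}/2 \rceil$ strings yields a subset $S' \subseteq S$ of size $n - \widehat{k}$ with $d(S', c(S^*_{opt})) \leq (1-\epsilon/2)d$; hence $\mathrm{OPT}(\widehat{k}) \leq d(S', c(S')) \leq (1-\epsilon/2)d$, and the PTAS then returns a candidate of total distance at most $(1+\epsilon/2)(1-\epsilon/2)d \leq d$, so $\widehat{k}$ is accepted as claimed.

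The main obstacle I anticipate is keeping the two $\epsilon$'s in sync: the PTAS inflates distance by a factor of $1+\epsilon'$, so I need a matching $1-\Theta(\epsilon)$ contraction of $\mathrm{OPT}(k)$ to remain within the hard distance bound $d$, and the top-fraction observation supplies exactly this by discarding a $\Theta(\epsilon)$ fraction of the optimal non-outliers. Beyond this bookkeeping and the small-$n^*_{opt}$ brute force, no new ideas are needed; the running time $O(n) \cdot n^{f(1/\epsilon)} = n^{f'(1/\epsilon)}$ is inherited from the PTAS invocations, and the approximation ratio on the number of non-outliers is $1-\epsilon$ as required.
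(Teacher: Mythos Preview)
Your proof is correct and follows essentially the same approach as the paper: both guess $n^*$, invoke the Min-distance PTAS of Theorem~\ref{lem:min_d_det}, handle small $n^*$ by brute force, and use the ``discard the top $\Theta(\epsilon)$ fraction'' averaging argument to absorb the $(1+\epsilon)$ blow-up from the PTAS. The only cosmetic difference is that the paper trims the $\lceil \epsilon n^* \rceil$ farthest strings from the PTAS \emph{output} to restore feasibility, whereas you apply the trimming in the \emph{analysis} to the optimal solution (showing $\mathrm{OPT}(\widehat{k}) \leq (1-\epsilon/2)d$, so the PTAS output at $\widehat{k}$ is already feasible); both routes yield the same $(1-O(\epsilon))n^*$ guarantee.
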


\begin{proof} 
We give a $(1-2\epsilon)$-approximation algorithm that runs in $O((n\ell)^{f(\epsilon)})$-time. We denote an (unknown) optimal solution as $S^*$, and let $n^* = |S^*|$. A subset $S' \subseteq S$ is said to be {\em feasible} if $d(S', c(S')) \leq d$. First, the algorithm enumerates all subsets of $S$ of size at most $1/\epsilon$ and keeps the largest feasible set. Next, the algorithm guesses $n^*$ (by trying all possibilities) and applies the algorithm from Theorem~\ref{lem:min_d_det} to find a set $S_x$ of size $n^*$ and a string $x$ such that $d(S_x, x) \leq (1+\epsilon)d(S^*, c(S^*)) \leq (1+\epsilon)d$. It then constructs $S''$ by removing the $\lceil \epsilon n^* \rceil$ strings furthest away from $x$ from $S_x$. Since 
$$d(S'', c(S'')) \leq d(S'', x) \leq  (1-\epsilon)  d(S_x, x) \leq (1-\epsilon)(1+\epsilon)d \leq d,$$ 
it follows that $S''$ is feasible. The algorithm returns either $S''$ or the largest feasible set found in the first phase, which ever is largest. The running time is clearly bounded by $O\left((n\ell)^{f(\epsilon)} \right)$, so it remains to prove that the returned set is in fact a $(1-\epsilon)$-approximation of $S^*$. If $n^* < \frac{1}{\epsilon}$ the algorithm finds and returns $S^*$. Next, if $n^* \geq \frac{1}{\epsilon}$ it follows that $|S''| \geq |S'|-\lceil \epsilon n^* \rceil \geq n^*(1-\epsilon)-1 \geq n^*(1-2\epsilon)$, concluding the proof. 
\end{proof}
\section{Hardness Results}\label{sec:approx}
For reasonable instances of the {\em Min-distance Consensus String With Outliers} problem, we expect the number of non-outliers to be greater than the number of outliers.  As we have seen, Theorem~\ref{lem:min_d_rand} gives an EPTAS for most instances of the {\em Min-distance Consensus String With Outliers} problem--namely those where $k \leq cn$ for $c < 1$. When $k$ cannot be upper bounded in this manner then the noise is much stronger than the signal, and there is little hope for accurate error correction.  Further, {\em Min-distance Consensus String with Outliers} should not be seen as an error correction problem when $k$ is almost equal to $n$, but rather the problem of finding the ``densest possible'' cluster of points in Hamming space.  Determining whether the requirement that $k \leq cn$ for $c < 1$ is necessary in order to improve the PTAS from Theorem~\ref{lem:min_d_det} to an EPTAS warrants further investigation.  Now, we prove this requirement is unavoidable since the general version of {\em Min-distance Consensus String with Outliers} does not admit an EPTAS unless FPT=W[1].

\begin{theorem}\label{thm_no_eptas} There exists no EPTAS for {\em Min-distance Consensus String With Outliers}, unless FPT = W[1]. \end{theorem}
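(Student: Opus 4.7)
The plan is to rule out an EPTAS by combining a parameterized reduction with a gap-preserving construction. The usual recipe---showing W[1]-hardness of the decision problem parameterized by the optimum value $d$---is unavailable to us, because we have already observed that \emph{Consensus String with Outliers} is FPT when parameterized by $d$. Consequently, the reduction must produce a multiplicative gap in the optimum whose size depends only on the source parameter, not on the size of the produced instance; a pure FPT reduction to a decision version is insufficient.

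Concretely, I would fpt-reduce from a W[1]-hard problem such as \emph{Multicolored Clique} with parameter $\kappa$, producing in polynomial time an instance $(S,k,d)$ of \emph{Min-distance Consensus String with Outliers} together with a gap function $\gamma(\kappa) > 0$ such that yes-instances give optimum at most $d$, whereas no-instances force optimum at least $(1 + \gamma(\kappa))d$. Given a hypothetical EPTAS with running time $f(1/\epsilon)(n\ell)^{O(1)}$, setting $\epsilon = \gamma(\kappa)/3$ would distinguish the two cases in time $f(3/\gamma(\kappa))(n\ell)^{O(1)}$, yielding an FPT algorithm for Multicolored Clique and hence the contradiction $\mathrm{FPT} = \mathrm{W}[1]$.

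For the gadget I would arrange that choosing the $n-k$ non-outliers encodes a selection of one vertex per color class, with dedicated ``verification'' blocks of columns that check each edge constraint. Within each verification block the selected strings should induce a one-dimensional $\pm 1$ random walk, so that the sum of Hamming distances to the consensus is controlled by the range of that walk. When the guess corresponds to an actual multicolored clique, every step is an independent symmetric step and the expected excess distance is governed by an $O(\sqrt{\text{length}})$ variance term. When an edge constraint is violated, the construction forces one or more ``double steps''---steps taken deterministically in a fixed direction rather than uniformly at random---which bias the walk and inflate the expected distance by an additive $\Omega(1)$ per faulty column. An auxiliary lemma about random walks with a bounded number of double steps, quantifying this additive separation, is what converts a failed verification into the required multiplicative gap.

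The main obstacle is calibrating the verification blocks and the random padding so that $\gamma(\kappa)$ depends only on $\kappa$ and not on $n$ or $\ell$. If the random portion is too large, the $\Theta(\sqrt{\text{length}})$ fluctuations of the unbiased walks swamp the $\Omega(1)$ drift contributed by the double steps and the multiplicative gap vanishes with the instance size; if it is too small, honest solutions cannot fit inside the budget $d$ and the yes-side of the gap breaks. Striking this balance---while simultaneously keeping the construction an fpt-reduction so that the number of outliers and any other relevant output parameters are bounded by a function of $\kappa$---is where the analysis of one-dimensional random walks with double steps does the essential work, and it is the technical crux of the argument.
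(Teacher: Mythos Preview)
Your high-level plan matches the paper's approach: a gap-creating fpt-reduction from \emph{Multicolored Clique} in which the gap $\gamma(\kappa)$ depends only on $\kappa$, with the technical core being an analysis of one-dimensional random walks containing ``double steps.'' However, your description of what the double steps are and how they act is inverted relative to what actually works. In the paper's construction a double step is not a deterministic step in a fixed direction but a \emph{random} step of length $2$ (taking value $\pm 2$ with equal probability), arising whenever two of the selected strings share an identification substring and hence agree at that column. The per-column cost $d(S^*[j],c(S^*)[j])$ is distributed as $n^*/2 - |X|$, where $X$ is the endpoint of such a walk, so \emph{more} double steps (higher variance) mean \emph{lower} expected cost, not higher. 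The clique case is good not because every step is an independent unit step, but because it places exactly one double step in each edge block; the crucial lemma is a concavity statement---the marginal gain of the $t$-th double step is strictly largest at $t=1$---so any other distribution of matched pairs across the edge blocks strictly increases the expected total cost. Your ``biased walk with additive drift'' picture does not produce this separation; you need the variance/concavity argument instead.

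Two further points you will need to address. First, the chosen non-outliers encode \emph{edge endpoints} (so $n^* = 2\binom{k}{2}$), not one vertex per color class; this is what makes every column behave like a walk of fixed total length $n^*$ and lets the concavity comparison go through. Second, the construction as you have sketched it is randomized and would only yield $\mathrm{W}[1]\subseteq$ randomized $\mathrm{FPT}$; to reach the stated conclusion $\mathrm{FPT}=\mathrm{W}[1]$ one must derandomize, which the paper does by replacing the random bit-pool with a near-$n^*$-wise independent sample space.
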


The proof of Theorem~\ref{thm_no_eptas} is by reduction from the {\em MultiColored Clique (MCC)} problem. Here input is a graph $G$, an integer $k$ and a partition of $V(G)$ into $V_1 \uplus V_2 \ldots V_k$ such that for each $i$, $G[V_i]$ is an independent set. The task is to determine whether $G$ contains a clique $C$ of size $k$. Observe that such a clique must contain exactly one vertex from each $V_i$, since for each $i$ we have $C \cap V_i \leq 1$. It is known that MCC cannot be solved in time $f(k)n^{O(1)}$, unless FPT=W[1]~\cite{FellowsHRV09}.

Given an instance $(G,k)$ of MCC we produce in time $f(k)n^{O(1)}$ an instance $(S,n^*)$ of {\em Min-distance Consensus String with Outliers} with the following property. If $G$ has a $k$-clique then there exists an $S' \subset S$ of size $n^*$ such that $d(S', c(S')) \leq D_{yes}$, whereas if no $k$-clique exists in $G$ then for each $S' \subset S$ of size $n^*$ we have $d(S', c(S')) \geq D_{no}$. The values of $D_{yes}$ and $D_{no}$ will be chosen later in the proof, but the crux of the construction is that $D_{no} \geq \left( 1+\frac{1}{h(k)} \right)D_{yes}$. Hence, one could use the reduction together with an EPTAS for {\em Min-distance Consensus String with Outliers} setting $\epsilon = \frac{1}{2h(k)}$ to solve the MCC problem in time $g(k)n^{O(1)}$. This reduction is a parameterized, gap-creating reduction where the size of gap decreases as $k$ increases but the decrease is a function of $k$ only.

\smallskip
\paragraph{Construction.} We describe how the instance $(S,n^*)$ is constructed from $(G,k)$. Our construction is randomized, and will succeed with probability $\frac{2}{3}$. To prove Theorem~\ref{thm_no_eptas} we have to change the construction to make it deterministic but for now let us not worry about that. We start by considering the instance $(G,k)$ and let $E(G) = \{e_1, e_2, \ldots e_m\}$. We partition the edge set $E(G)$ into sets $E_{p, q}$ where $1 \leq p < q \leq k$ as follows; $e_i \in E_{p, q}$ if $e_i = uv$, $u \in V_p$ and $v \in V_q$. 

Edges of $G$ are unordered pairs $uv$ of vertices of $G$. An {\em edge endpoint} $\hat{e}$ is an {\em ordered} pair $(u,v)$ of vertices of $G$ such that $uv$ is an edge of $G$. We denote the set of all edge endpoints of $G$ by $\hat{E}(G) = \{\hat{e}_1, \hat{e}_2, \ldots \hat{e}_{2m}\}$. There are two edge endpoints that correspond to the same edge. For two edge endpoints $\hat{e}_p$ and $\hat{e}_q$ that both correspond to the edge $e_r$ we say that $\hat{e}_p \sim \hat{e}_q$, $\hat{e}_p \sim e_r$ and that $\hat{e}_q \sim e_r$. For every $i \leq k$ define the set $\hat{E}_i = \{(u,v) \in \hat{E}~:~u \in V_i\}$.

Based on $G$ and $k$, we select two integers $\ell_1$ and $\ell_2$, that satisfy the following proerties; $\ell_1 = f \cdot \log n$, $\ell_2 = g \cdot \ell_1$ for some $f \geq 1$ and $g \geq 1$ that depend only on $k$. The exact value of $\ell_1$ and $\ell_2$ will be discussed later in the proof. We construct a set $Z = {z_1, z_2, \dots z_{2m}}$ of strings, $Z$ will act as a ``pool of random bits'' in our construction. For each endpoint $\hat{e}_i \in \hat{E}(G)$ we make a string $z_i$ as follows. 
$$z_i = \overline{a}_i^1 \circ \overline{a}_i^2 \ldots \circ \overline{a}_i^k \circ \overline{b}_i^{1,2} \circ \overline{b}_i^{1,3} \ldots \overline{b}_i^{1,k} \circ \overline{b}_i^{2,3} \circ \overline{b}_i^{2,4} \ldots \circ \overline{b}_i^{k-1,k}$$
For every $p$, $\overline{a}_i^p$ is a random binary string of length $\ell_1$. For every $p$ and $q$, $\overline{b}_i^{p, q}$ is a random binary string of length $\ell_2$. For each $p$ and vertex $u \in V_p$ we make an identification string $id(u)$ of length $\ell_1$. Let $i$ be the smallest integer such that the edge endpoint $\hat{e}_i$ is $(u,v)$ for some $v$. We set $id(u) = \overline{a}_i^p$. Similarly, for every pair of integers $p \leq q$ and each edge $e \in E_{p,q}$ make an identification string $id(e)$ of length $\ell_2$. Let $i$ be the smallest integer such that $\hat{e}_i \sim e$. We set $id(e) =  \overline{b}_i^{p, q}$. We now make the set $S$ of strings in our instance. For each endpoint $\hat{e}_i \in \hat{E}(G)$ we make a string $s_i$ as follows. 
$$s_i = a_i^1 \circ a_i^2 \ldots \circ a_i^k \circ b_i^{1,2} \circ b_i^{1,3} \ldots b_i^{1,k} \circ b_i^{2,3} \circ b_i^{2,4} \ldots \circ b_i^{k-1,k}$$
Here $a_i^p = id(u)$ if $\hat{e}_i = (u,v) \in \hat{E}_p$ and $a_i^p = \overline{a}_i^p$ otherwise. Also, $b_i^{p, q} = id(uv)$ if $\hat{e}_i \sim uv$, $u \in V_p$ and $v \in V_q$. Otherwise $b_i^{p, q} = \overline{b}_i^{p, q}$. We refer to $a_i^1$ through $a_i^k$ as the {\em vertex blocks} of $s_i$ and the $b_i^{p, q}$'s are the {\em edge blocks} of $s_i$. We refer to $a_i^p$s as the p'th vertex block and to the the $b_i^{p, q}$s as the $(p, q)$'th edge block. We set $n^* = 2{k\choose 2}$, $L = k\cdot \ell_1 + {k \choose 2}\cdot \ell_2$, and $N = |S| = 2m$, this concludes the construction. Recall that $n^*$ is the size of the solution $S^*$ sought for and observe that $L$ is the length of the constructed strings in $S$.



We consider the constructed strings $s_i$ as random variables, and for every $j$ the character $s_i[j]$ is also a random variable which takes value $1$ with probability $1/2$ and $0$ with probability $1/2$. Observe that for $j \neq j'$ and any $i$ and $i'$ the random variables $s_i[j]$ and $s_{i'}[j']$ are independent. On the other hand $s_i[j]$ and $s_{i'}[j]$ could be dependent. However, if $s_i[j]$ and $s_{i'}[j]$ are dependent then, by construction $s_i[j] = s_{i'}[j]$. Let $S^* \subset S$ such that $|S^*| = n^*$. Here we consider $S^*$ as a set of random string variables, rather than a set of strings. We are interested in studying $d(S^*, c(S^*))$ for different choices of the set $S^*$. We can write out $d(S^*, c(S^*))$ as $\sum_{p=1}^{L} d(S^*[p], c(S^*)[p])$ and so $d(S^*, c(S^*))$ is the sum of $L$ independent random variables, each taking values from $0$ to $n^*$. Thus, when $L$ is large enough $d(S^*, c(S^*))$ is sharply concentrated around $E[d(S^*, c(S^*))]$. 

We turn our attention to $E[d(S^*, c(S^*))]$ for different choices of $S^*$. The two main cases that we distinguish between is whether $S^*$ corresponds to the set of edge endpoints of a clique in $G$ or not. Before proceeding to these cases, we need some additional definitions. Let $\vec{v}$ be a vector of positive integers. We define the random variable $X_{\vec{v}} = \vec{W} \cdot \vec{v}$ where $\vec{W}$ is a random vector with same dimension as $\vec{v}$, such that each coordinate of $\vec{W}$ is drawn from $\{-1, 1\}$ uniformly at random. The variable $X_{\vec{v}}$ is interpreted as follows: start a one-dimensional random walk at $0$, in each step of the walk we go left or right with probability $1/2$. However, the length of the different steps varies, in step $i$ the walk jumps $\vec{v}[i]$ to the left or right. The value of $X_{\vec{v}}$ is the offset from the origin at the end of the walk. The total {\em length} of the random walk is $\sum_i \vec{v}[i]$ whereas the {\em number of steps} of the walk is the dimension of $\vec{v}$

Let $j$ be a position in an edge block. What we mean by this is that $s_i[j]$ is a character in $y_i^{p, q}$. Suppose no two strings of $S^*$ correspond to edge endpoints of the same edge. Then $d(S^*[j], c(S^*[j]))$ is distributed as $n^*/2 - |X_{\vec{v}}|$ where $\vec{v}$ is a $n^*$-dimensional vector of $1$s. Specifically for all $s_i \in S^*$ the $s_i[j]$s are independent so $c(S^*[j])$ is the majority character out of $n^*$ characters independently drawn from $\{0,1\}$, and $d(c(S^*[j], S^*[j]))$ is the number of occurrences of the minority character. This is distributed as $n^*/2 - |X_{\vec{v}}|$.

Again, let $j$ be a position in the $(p, q)$-edge block, but now suppose that $S^*$ contains $t$ pairs of edge endpoints that correspond to the same edge in $E_{p, q}$. $S^*$ can also contain single endpoints of edges from $E_{p, q}$ or both endpoints of edges in $E_{p',q'}$ for $(p',q') \neq (p, q)$ but we do not count these. From the construction of the $(p, q)$-edge block it follows that $d(S^*[j], c(S^*[j]))$ is distributed as $n^*/2 - |X_{\vec{v}}|$ where $\vec{v}$ is a $n^*-t$ dimensional vector with $t$ entries of value $2$ and $n^*-2t$ entries with value $1$. We define the random variable $X^i_{r,t} = i + X_{\vec{v}}$ where $v$ is a vector with $r-2t$ entries that are $1$ and $t$ entries that are $2$. Intuitively $X^i_{r,t}$ is the offset from $0$ of a random walk starting at $i$ of length $r$, with $t$ steps of length $2$ and the remaining steps of length $1$. We set $x^i_{r,t} = E[|X^i_{r,t}|]$. Finally, we define $E_{yes}$ as
\begin{align}\label{eqn:eyes}
E_{yes} = k \cdot \ell_1 \cdot (n^*/2 - x^{k-1}_{n^*-k+1,0}) + {k \choose 2} \cdot \ell_2 \cdot (n^*/2 - x^0_{n^*,1})
\end{align}

\begin{lemma}\label{lem:expyes} Let $S^*$ be a subset of $S$ of size $n^*$ that corresponds to the set of edge endpoints of a $k$-clique in $G$. Then $E[d(S^*, c(S^*))] = E_{yes}$.
\end{lemma}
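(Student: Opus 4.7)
The plan is to use linearity of expectation, breaking $d(S^*, c(S^*))$ into the per-column contributions $\sum_{j=1}^{L} d(S^*[j], c(S^*)[j])$ and computing each expected contribution by identifying the exact distribution of the column's multiset of symbols. Since $S^*$ corresponds to the edge endpoints of a $k$-clique, there is a unique clique vertex $u_p \in V_p$ for every $p$ and a unique clique edge $e_{p,q} \in E_{p,q}$ for every $p<q$; the total number of endpoints is $n^* = 2\binom{k}{2} = k(k-1)$.

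First I would handle a position $j$ in the $p$-th vertex block. Among the $n^*$ strings in $S^*$, exactly $k-1$ correspond to endpoints of the form $\hat{e}_i = (u_p, v)$ (the clique edges incident to $u_p$); each of these satisfies $a_i^p = id(u_p)$, so at position $j$ they contribute $k-1$ copies of the single random bit $id(u_p)[j]$. The remaining $n^* - (k-1) = (k-1)^2$ endpoints have $\hat{e}_i \notin \hat{E}_p$, so their value at position $j$ is $\overline{a}_i^p[j]$, an independent random bit. Independence between $id(u_p)[j]$ and these $\overline{a}_i^p[j]$'s uses the fact that the defining index $i_0$ for $id(u_p)$ lies among the $k-1$ ``matched'' indices, and so does not collide with the remaining indices. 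Converting bits to $\pm 1$ signs, the signed imbalance at position $j$ is $W = (k-1)\epsilon_0 + \sum_{i=1}^{n^*-k+1} \epsilon_i$ with Rademacher $\epsilon$'s. By the sign symmetry $\sum \epsilon_i \stackrel{d}{=} -\sum \epsilon_i$, conditioning on $\epsilon_0 = \pm 1$ gives $|W| \stackrel{d}{=} |(k-1) + \sum \epsilon_i| = |X^{k-1}_{n^*-k+1,0}|$, so the expected contribution of column $j$ matches the paper's expression $n^*/2 - x^{k-1}_{n^*-k+1,0}$.

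Next I would handle a position $j$ in the $(p,q)$-th edge block. Exactly two strings of $S^*$ correspond to the two endpoints of the clique edge $e_{p,q}$; both have $b_i^{p,q} = id(e_{p,q})$ and thus contribute two identical copies of $id(e_{p,q})[j]$. Since no other edge of $E_{p,q}$ contributes any endpoint to $S^*$ (the clique has a unique edge in $E_{p,q}$), the remaining $n^* - 2$ strings all have independent random bits $\overline{b}_i^{p,q}[j]$ at position $j$, again independent of $id(e_{p,q})[j]$ because the index $i_0$ defining $id(e_{p,q})$ is one of the two matched endpoints. The signed imbalance at this column is $W = 2\epsilon_0 + \sum_{i=1}^{n^* - 2} \epsilon_i$, which is precisely $X^0_{n^*,1}$, and so the expected contribution is $n^*/2 - x^0_{n^*,1}$.

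Summing via linearity of expectation over the $k \ell_1$ vertex-block positions and the $\binom{k}{2} \ell_2$ edge-block positions yields exactly $E_{yes}$ as defined in (\ref{eqn:eyes}). The main obstacle, and the step that requires the most care, is the symmetry argument in the vertex block case: unlike the edge block case, the "heavy" step of the walk has a random sign rather than a fixed $+(k-1)$ offset, and one has to invoke the sign-reversal symmetry of the simple random walk to match the distribution of $|W|$ with $|X^{k-1}_{n^*-k+1,0}|$. A secondary bookkeeping obstacle is confirming that the pool bits $id(\cdot)[j]$ are independent of all the $\overline{a}_i^p[j]$ and $\overline{b}_i^{p,q}[j]$ that appear in the other rows of the same column, which reduces to checking that the defining index $i_0$ always lies inside the "matched" subset of rows rather than outside it.
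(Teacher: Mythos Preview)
Your proposal is correct and follows essentially the same approach as the paper's proof: decompose $E[d(S^*,c(S^*))]$ column by column via linearity of expectation, identify the distribution of each column's signed imbalance as a random walk with one ``heavy'' step, and invoke the sign-reversal symmetry in the vertex-block case to match $|X_{\vec{v}}|$ with $|X^{k-1}_{n^*-k+1,0}|$. The paper states this symmetry argument in almost the same words.

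One minor inaccuracy in your bookkeeping: you assert that the defining index $i_0$ for $id(u_p)$ lies among the $k-1$ matched clique-endpoint indices. This need not be true---$i_0$ is the smallest index among \emph{all} endpoints $(u_p,v)$, and $v$ could be a non-clique neighbor of $u_p$. However, the conclusion you need (that $i_0$ is distinct from the $n^*-(k-1)$ ``remaining'' indices) still holds, simply because $\hat{e}_{i_0}\in\hat{E}_p$ while the remaining indices all lie outside $\hat{E}_p$. So the independence claim survives; only the stated reason needs adjusting. The analogous claim for edge blocks is correct as written, since both endpoints of the clique edge $e_{p,q}$ are in $S^*$.
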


\begin{proof}
For each position $j$ in a vertex block, consider the distribution of $d(S^*[j], c(S^*)[j])$. There are $k-1$ edge endpoints in $S^*$ which are all incident to the same vertex $v$, so the strings corresponding to these endpoints all have the same character at position $j$. The remaining strings all have random characters at this position. Hence $d(S^*[j], c(S^*)[j])$ is distributed as $n^*/2 - |X_{\vec{v}}|$ where $\vec{v}$ is a $n^*-(k-2)$ dimensional vector with $n^*-(k-1)$ entries of value $1$ and one entry with value $k-1$. It is easy to see that $|X_{\vec{v}}|$ is in fact distributed as $|X_{n^*-k+1,0}^{k-1}|$ since we can make the step corresponding to the entry of value $k-1$ first, and this step will take the random walk to position $k-1$ or $-(k-1)$, but with respect to distance from $0$ these positions are symmetric. Since there are $k \cdot \ell_1$ positions in vertex blocks this accounts for the first term of the equation.

For each position $j$ in an edge block $(p, q)$ there are two strings in $S^*$ that correspond to edge endpoints of the same edge in $E_{p, q}$. These two strings have the same character at position $j$. All the other strings in $S^*$ correspond to edge endpoints of strings in $E_{p',q'}$ where $p' \neq p$ or $q' \neq q$. The characters at position $j$ for these strings are drawn independently. Hence $d(S^*[j], c(S^*)[j])$ is distributed as $n^*/2 - E[|X_{n^*,1}^{0}|]$. Since there are ${k \choose 2} \cdot \ell_2$ positions in edge blocks this accounts for the second term of the equation.
\end{proof}

We now proceed to show that for any set $S^*$ that does not correspond to a set of edge endpoints of a $k$-clique in $G$, $E[d(S^*, c(S^*))]$ is at least factor $\epsilon$ greater than $E_{yes}$, where $\epsilon$ depends only on $k$. Let $\hat{E}^*$ be the set of edge endpoints corresponding to $S^*$. Define $E^*$ to be the set of edges $uv \in E(G)$ such that $(u,v) \in \hat{E}^*$ and $(v,u) \in \hat{E}^*$. Clearly, $|E^*| \leq {k \choose 2}$, hence if $E_{p, q} \cap E^* \neq \emptyset$ for every $p$, $q$ then $|E_{p, q} \cap E^*|=1$ for every $p$,$q$. We start by proving that if there exists a $p$, $q$ such that $E_{p, q} \cap E^* = \emptyset$ then $E[d(S^*, c(S^*))]$ is big. This proof is based on ``differentiating'' $x^0_{n^*,t}$ with respect to $t$. In particular for integers $i$, $r$, $t$ such that $r \geq 1$ and $t \geq 2$ define $\delta x^i_{r,t} =  x^i_{r,t} - x^i_{r,t-1}$. 
\begin{claim}\label{clm:randwalk} $x^0_{n^*,0} < x^0_{n^*,1}$. If $n^*$ is divisible by $4$ then $\delta x^0_{n^*,1} > \delta x^0_{n^*,t}$ for all $t > 1$. Furthermore, for every $i$,$t$ and $r$ we can compute $x^i_{r,t}$ in time polynomial in $i$ and $r$.
\end{claim}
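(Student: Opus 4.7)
Both inequalities follow from a single coupling construction. For each $s$ with $0 \leq s \leq n^*/2 - 1$ let $W_s$ denote a random walk with $s$ double steps and $n^* - 2s - 2$ single steps, and couple $X^0_{n^*, s}$ and $X^0_{n^*, s+1}$ by writing
$$X^0_{n^*, s} = W_s + A, \qquad X^0_{n^*, s+1} = W_s + B,$$
where $A = S_1 + S_2$ is a sum of two extra independent $\pm 1$ steps, $B = 2D$ is one extra $\pm 2$ step, and $W_s, A, B$ are independent. A direct pointwise calculation gives $E_B[|w+B|] - E_A[|w+A|] = g(w)$ with $g(w) := \max(0, 1 - |w|/2)$ (the ``tent'' supported on $[-2,2]$), so
$$\delta x^0_{n^*, s+1} = E[g(W_s)].$$
The polynomial-time computability of $x^i_{r,t}$ is separate: the distribution of $X^i_{r,t}$ is supported on integers in $[i-r, i+r]$, and a standard $O(r^2)$ dynamic program builds this distribution one step at a time, after which $x^i_{r,t}$ is a simple weighted sum.

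For part~(a), apply the identity at $s = 0$: $W_0$ is a simple random walk of $n^* - 2$ steps starting at $0$, so it lands in $\{-1, 0, 1\}$ with positive probability, making $E[g(W_0)] > 0$. This is exactly $x^0_{n^*, 1} > x^0_{n^*, 0}$.

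For part~(b), the hypothesis $4 \mid n^*$ forces every $W_s$ to take only even integer values, so $g(W_s) = \mathbf{1}[W_s = 0]$ and hence $\delta x^0_{n^*, s+1} = \Pr[W_s = 0]$. The claim reduces to $\Pr[W_0 = 0] > \Pr[W_{t-1} = 0]$ for $t \geq 2$. Applying the same coupling one level deeper,
$$\Pr[W_s = 0] - \Pr[W_{s+1} = 0] = \tfrac{1}{2}\bigl(\Pr[W_s' = 0] - \Pr[W_s' = 2]\bigr),$$
where $W_s'$ has $s$ doubles and $n^* - 2s - 4$ singles. Telescoping for $s = 0, \ldots, t - 2$, it suffices to show each term is non-negative, with strict positivity at $s = 0$. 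Strict positivity at $s = 0$ is immediate since $W_0'$ is a simple walk of $n^* - 4$ single steps whose central binomial coefficient strictly exceeds its neighbor (with the trivial case $n^* = 4$ handled separately). For $s \geq 1$, the distribution of $W_s'$ is symmetric about $0$ and log-concave on the even integers---being a convolution of two log-concave binomial-type distributions---so its mode lies at $0$ and the required non-negativity follows.

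The main obstacle is the last mode-at-zero statement: the ``symmetric plus log-concave implies central mode'' implication can fail if the support of $W_s'$ consists only of multiples of $4$ (so that the true mode sits at $\pm 4$ while $\Pr[W_s' = 0] = 0$). Such a degeneracy occurs precisely when $n^* - 2s - 4 = 0$ with $s$ odd, i.e., when $n^* \equiv 2 \pmod 4$, and the hypothesis $4 \mid n^*$ is exactly what excludes it: it forces $s = (n^* - 4)/2$ to be even whenever $W_s'$ has no single steps, keeping $0$ in the support and preserving the central-mode property. This explains why the divisibility condition is required in part~(b) but not in part~(a).
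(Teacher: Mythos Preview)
Your coupling derivation of $\delta x^0_{n^*,s+1}=E[g(W_s)]$ with $g(w)=\max(0,1-|w|/2)$, and the further reduction of part~(b) to showing $\Pr[W'_s=0]\ge\Pr[W'_s=2]$, is correct and more transparent than the paper's route. The paper instead builds recurrences for $x^i_{r,t}$, $\delta x^i_{r,t}$, and $\delta^2 x^i_{r,t}$, expands the last to obtain $\delta^2 x^0_{r,t}=\bigl(w^2_{r-4,t-2}-w^0_{r-4,t-2}\bigr)/2^{r-3}$ in terms of walk counts $w^i_{r,t}$, and then must prove $w^0_{r,t}\ge w^2_{r,t}$---precisely your inequality in different notation. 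Your treatment of part~(a) and of polynomial-time computability is also fine.

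The gap is in your justification of $\Pr[W'_s=0]\ge\Pr[W'_s=2]$ for $s\ge 1$. You assert that the law of $W'_s$ on the even integers is log-concave ``being a convolution of two log-concave binomial-type distributions,'' but the double-step summand $2U$ is supported on a sublattice of $2\mathbb{Z}$ with spacing~$4$ and therefore has internal zeros on $2\mathbb{Z}$; it is \emph{not} log-concave there, so the preservation-of-log-concavity-under-convolution theorem does not apply. In fact log-concavity of $W'_s$ on $2\mathbb{Z}$ is outright false at the boundary $m:=n^*-2s-4=0$: with $s=2$ one has $W'_s\in\{-4,0,4\}$, so $\Pr[W'_s=2]^2=0<\Pr[W'_s=0]\Pr[W'_s=4]$, even though the desired inequality $\Pr[W'_s=0]\ge\Pr[W'_s=2]$ holds trivially. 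Your closing paragraph addresses only the separate degeneracy $\Pr[W'_s=0]=0$, not this failure of the log-concavity premise.

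The inequality you need is nonetheless true, and the paper proves the stronger monotonicity $w^{2i}_{r,t}\ge w^{2(i+1)}_{r,t}$ (for $r-2t$ even and positive, all $i\ge 0$) by a short induction on $r-t$: for $i\ge 1$ one uses the double-step recurrence $w^{2i}_{r,t}=w^{2i-2}_{r-2,t-1}+w^{2i+2}_{r-2,t-1}$ together with the inductive hypothesis; for $i=0$ one expands two single steps to reduce to $w^0_{r-2,t}\ge w^4_{r-2,t}$, which follows from the inductive hypothesis when $r-2t>2$ and from the explicit count $w^i_{2t,t}=\binom{t}{(2t-i)/4}$ (nonzero only when $i\equiv 2t\pmod 4$) when $r-2t=2$. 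Splicing this induction in place of the log-concavity assertion would complete your argument.
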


The intuition of Claim~\ref{clm:randwalk} is as follows. A random walk with double steps is just the sum of independent random variables, with variables corresponding to single steps taking values from $\{-1,1\}$ and variables corresponding to double steps taking values from $\{-2,2\}$. A double step has higher variance than the sum of two single steps. Hence, if we do a random walk starting from $0$ of total length $n^*$ with $t$ double steps, then the expected distance from $0$ should increase as $t$ increases. Furthermore, as $t$ increases the variance of the random walk increses linearly, so the standard deviation increases less and less with each increment of $t$. Thus it is natural to expect that as $t$ increases, each successive step increases the expected offset from $0$ less and less. Quite surprisingly this does not hold in general (we do not prove this, as it is not important for our results). However, when the length of the random walk is a multiple of $4$, the claim does hold.

\begin{proof}[Proof of Claim~\ref{clm:randwalk}]
Recall that $x^i_{r,t}=E[|X^i_{r,t}|]$ where $X^i_{r,t}$ is a random variable denoting the final position of a random walk of length $r$, with $t$ double steps, starting at $i$. Here $i$ is an integer and might be negative. Conditional expectation yields the following recurrence for $x^i_{r,t}$, $r \geq 2t \geq 0$.
\begin{equation*}
x^i_{r,t} =
\begin{cases}
|i| & \text{if } r = 0,\\
(x^{i+1}_{r-1,t}+x^{i-1}_{r-1,t})/2 & \text{if } r > 2t,\\
(x^{i+2}_{r-2,t-1}+x^{i-2}_{r-2,t-1})/2 & \text{if } t \geq 1.
\end{cases}
\end{equation*}
It is easy to see that one of the three cases must apply when $r \geq 2t \geq 0$ - and $x^i_{r,t}$ is only defined for these values. Observe that if $r > 2t$ and $t \geq 1$ then both the second and the third case apply. The recurrence above also yields a polynomial time algorithm to compute $x^i_{r,t}$. The recurrence above together with definition of $\delta x^i_{r,t}$ yields the following recurrence for $\delta x^i_{r,t}$, for $r \geq 2t$ and $t \geq 1$.
\begin{equation*}
\delta x^i_{r,t} =
\begin{cases}
0 & \text{if } r = 2, |i| \geq 2,\\
1/2 & \text{if } r = 2,|i| = 1,\\
1 & \text{if } r = 2, |i| = 0,\\
(\delta x^{i+1}_{r-1,t}+\delta x^{i-1}_{r-1,t})/2 & \text{if } r > 2t,\\
(\delta x^{i+2}_{r-2,t-1}+\delta x^{i-2}_{r-2,t-1})/2 & \text{if } t \geq 2.
\end{cases}
\end{equation*}
A straightforward induction using this recurrence shows that $\delta x^0_{r,1} > 0$ for all $r \geq 0$, proving that $x^0_{n^*,0} < x^0_{n^*,1}$. Define $\delta^2 x^i_{r,t} = \delta x^i_{r,t} - \delta x^i_{r,t-1}$. Observe that $\delta^2 x^i_{r,t}$ is only well defined when $r \geq 2t$ and $t \geq 2$. Inserting the recurrence for $\delta x^i_{r,t}$ into the definition of $\delta^2 x^i_{r,t}$ yields the following recurrence for $\delta^2 x^i_{r,t}$.
\begin{equation}\label{eqn:delta2}
\delta^2 x^i_{r,t} =
\begin{cases}
0 & \text{if } r = 4, |i| \geq 4,\\
1/8 & \text{if } r = 4,|i| = 3,\\
1/4 & \text{if } r = 4, |i| = 2,\\
-1/8 & \text{if } r = 4,|i| = 1,\\
-1/2 & \text{if } r = 4, |i| = 0,\\
(\delta^2 x^{i+1}_{r-1,t}+\delta^2 x^{i-1}_{r-1,t})/2 & \text{if } r > 2t,\\
(\delta^2 x^{i+2}_{r-2,t-1}+\delta^2 x^{i-2}_{r-2,t-1})/2 & \text{if } t \geq 3.
\end{cases}
\end{equation}
We prove that if $r$ is divisible by $4$ then $\delta^2 x^0_{r,2} < 0$ and for all $t > 2$ we have $\delta^2 x^0_{r,t} \leq 0$. These two facts prove that for $t \geq 2$ we have
$$\delta x^0_{r,t} = \delta x^0_{r,1} + \sum_{j=2}^t \delta^2 x^0_{r,j} < \delta x^0_{r,1},$$
which is precisely the last statement of the claim. 

For integers $i$, $r \geq 0$, $t$ such that $r \geq 2t$ define $w^i_{r,t}$ to be the number of one dimensional walks of length $r$ with $t$ double steps and $r-2t$ unit steps that start in $0$ and end in $i$. Observe that $w^i_{r,t} = w^{-i}_{r,t}$. For even $r \geq 4$, expanding Equation~\ref{eqn:delta2} for $\delta^2 x^0_{r,t}$ exhaustively yields the following expression.
\begin{align*}
\delta^2 x^0_{r,t} = \Big{[}-\frac{1}{2}w^0_{r-4,t-2} + \frac{1}{4}w^2_{r-4,t-2} + \frac{1}{4}w^{-2}_{r-4,t-2}\Big{]} \Big{/} 2^{r-4} = \Big{[}-w^0_{r-4,t-2} + w^2_{r-4,t-2}\Big{]} \Big{/} 2^{r-3}.
\end{align*}
Hence to prove the statement of the claim it suffices to show that if $r$ is divisible by $4$ then $w^0_{r,0} > w^2_{r,0}$ and $w^0_{r,t} \geq w^2_{r,t}$ for $t \geq 1$. For non-negative $i$, the number of walks satisfies the following recurrence. The number of walks satisfies the following recurrence.
\begin{equation}\label{eqn:walks}
w^i_{r,t} =
\begin{cases}
1 & \text{if } r = 0,i = 0,\\
0 & \text{if } r = 0, i \neq 0,\\
w^{i-1}_{r-1,t}+w^{i+1}_{r-1,t} & \text{if } r > 2t, \\
w^{i-2}_{r-2,t-1}+w^{i+2}_{r-2,t-1} & \text{if } t > 1, i \geq 2\\
\end{cases}
\end{equation}
It is easy to see that when $r$ is even, $w^{2i}_{r,0} = {r \choose \frac{r}{2}-i}$. Since ${{r} \choose {x}} > {{r} \choose {x-1}}$ when $x \leq r/2$ it follows that
\begin{align}\label{eqn:walksNoDouble} 
w^{2i}_{r,0} > w^{2(i+1)}_{r,0} \text{for all } i
\end{align}
Equation~\ref{eqn:walksNoDouble} directly implies $w^0_{r,0} > w^2_{r,0}$.

It remains to prove that if $r$ is divisible by $4$ then $w^{0}_{r,t} \geq w^{2}_{r,t}$. For the case that $r=2t$, expanding Equation~\ref{eqn:walks} exhaustively yields the following expression.
\begin{equation}\label{eqn:walksOnlyDouble}
w^i_{2t,t} =
\begin{cases}
{t \choose (2t-i)/4} & \text{if } i \equiv 2t \text{ (mod $4$)}, \\
0 & \text{otherwise} \\
\end{cases}
\end{equation}
Most importantly, if $0 \leq i \leq i'$ and $w^i_{2t,t}$ is non-zero, then $w^i_{2t,t} \geq w^{i'}_{2t,t}$. 

We now prove that when $r-2t$ is an even, positive integer and $i \geq 0$ then $w^{2i}_{r,t} \geq w^{2(i+1)}_{r,t}$. A special case of this inequality is that when $r$ is divisible by $4$ then $w^{0}_{r,t} \geq w^{2}_{r,t}$. Observe that when $t=0$ the inequality follows by Equation~\ref{eqn:walksNoDouble}. We prove the inequality by induction on $r-t$. Observe that when $r$ decreases by $2$ while $t$ decreases by $1$, $r-t$ decreases. Hence for $t \geq 1$ and $i \geq 1$ we have 
\begin{align*}
w^{2i}_{r,t} = w^{2i-2}_{r-2,t-1} + w^{2i+2}_{r+2,t-1} \geq w^{2i}_{r-2,t-1} + w^{2i+4}_{r+2,t-1} = w^{2(i+1)}_{r,t}.
\end{align*}
Now, for $t \geq 1$ and $i = 0$ we have that $w^0_{r,t} = 2w^0_{r-2,t}+2w^2_{r-2,t}$ and $w^2_{r,t} = w^0_{r-2,t}+2w^2_{r-2,t}+ w^4_{r-2,t}$. Hence to prove that $w^0_{r,t} \geq w^2_{r,t}$ it suffices to prove $w^0_{r-2,t} \geq w^4_{r-2,t}$. If $r-2t = 2$ then by Equation~\ref{eqn:walksOnlyDouble} we have that either $w^0_{r-2,t} = w^4_{r-2,t} = 0$ or  $w^0_{r-2,t} \geq w^4_{r-2,t}$. In both cases this implies $w^0_{r,t} \geq w^2_{r,t}$. Finally, if $r-2t > 2$ then the induction hypothesis yields
$$w^0_{r,t} = 2w^0_{r-2,t}+2w^2_{r-2,t} \geq w^0_{r-2,t}+2w^2_{r-2,t}+ w^4_{r-2,t} \geq w^2_{r,t}.$$
Hence, when $r$ is divisible by $4$ then $w^{0}_{r,t} \geq w^{2}_{r,t}$, concluding the proof of the claim.
\end{proof}

Set $\Delta = \min_{i \leq n^*} (\delta x^0_{n^*,1} - \delta x^0_{n^*,i})$. By Claim~\ref{clm:randwalk}, $\Delta > 0$. Define 
\begin{align}\label{eqn:eno1}
E_{no}^1 = {k \choose 2} \cdot \ell_2 \cdot (n^*/2 - x^0_{n^*,1}) + \ell_2\Delta
\end{align}
Observe that if $\ell_2 > \frac{\ell_1 \cdot k \cdot (n^*/2)}{\Delta}$ then $E_{no}^1 > E_{yes}$. Selecting $\ell_2$ slightly larger than this will ensure the desired gap between $E_{no}^1$ and $E_{yes}$, so we set
\begin{align}\label{eqn:l2froml1}
\ell_2 = \ell_1 \cdot \left\lceil\frac{k \cdot n^*}{\Delta}\right\rceil.
\end{align}
Observe that the ratio between $\ell_2$ and $\ell_1$ is a function of $k$.

\begin{lemma}\label{lem:goodSet1} 
Let $S^*$ be a subset of $S$ of size $n^*$, where the corrresponding edge set $E^*$ has the property that $|E^*\cap E_{p, q}| \neq 1$ for at least one pair $p, q \leq k$. Then $E[d(S^*, c(S^*))] \geq E_{no}^1$.

\end{lemma}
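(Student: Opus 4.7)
The plan is to bound $E[d(S^*, c(S^*))]$ block by block, using linearity of expectation to drop the non-negative vertex-block contributions and focus entirely on the edge blocks. For each pair $1 \le p < q \le k$ set $t_{p,q} := |E^* \cap E_{p, q}|$; the distributional identity worked out in the paragraph preceding Lemma~\ref{lem:expyes} then gives $E[d(S^*[j], c(S^*)[j])] = n^*/2 - x^0_{n^*, t_{p,q}}$ for every one of the $\ell_2$ positions $j$ in the $(p,q)$-edge block. Summing over blocks yields
\[
E[d(S^*, c(S^*))] \;\geq\; \binom{k}{2}\ell_2 \bigl(n^*/2 - x^0_{n^*,1}\bigr) \;+\; G,
\]
where $G := \ell_2 \sum_{(p,q)} \bigl(x^0_{n^*, 1} - x^0_{n^*, t_{p,q}}\bigr)$, so the task reduces to proving $G \geq \ell_2 \Delta$.

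To analyze $G$, I will group pairs by matching count: let $N_t$ be the number of $(p,q)$ with $t_{p,q} = t$. Then $\sum_t N_t = \binom{k}{2}$ and $\sum_t t N_t = |E^*| \leq n^*/2 = \binom{k}{2}$, which together give the pigeonhole inequality $\sum_{t \geq 2}(t-1) N_t \leq N_0$. Each $t_{p,q} = 0$ block contributes a surplus $\ell_2 \delta x^0_{n^*,1}$ to $G$, each $t_{p,q} = 1$ block contributes $0$, and each $t_{p,q} \geq 2$ block contributes a deficit $\ell_2 \sum_{j=2}^{t_{p,q}} \delta x^0_{n^*, j}$ obtained by telescoping $x^0_{n^*, t_{p,q}} - x^0_{n^*, 1}$. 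I will then exchange the order of summation using $M_j := \sum_{t \geq j} N_t$, which gives the identity $\sum_{j \geq 2} M_j = \sum_{t \geq 2}(t-1) N_t$, and pair surplus against deficit term-by-term. Combined with $N_0 \geq \sum_{j \geq 2} M_j$ this yields
\[
G \;\geq\; \ell_2 \sum_{j \geq 2} M_j \bigl(\delta x^0_{n^*, 1} - \delta x^0_{n^*, j}\bigr) \;\geq\; \ell_2 \Delta \sum_{j \geq 2} M_j,
\]
where the final step is the definition of $\Delta$ together with Claim~\ref{clm:randwalk}.

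Whenever some block has $t_{p,q} \geq 2$, $\sum_{j \geq 2} M_j \geq 1$ and the desired bound $G \geq \ell_2 \Delta$ is immediate. Otherwise every $t_{p,q} \in \{0,1\}$, and the lemma's hypothesis forces some $t_{p,q} = 0$, so $N_0 \geq 1$, the deficit vanishes, and $G \geq \ell_2 \delta x^0_{n^*,1}$; non-negativity of $\delta x^0_{n^*, j}$ for $j \geq 2$, a direct consequence of the recurrences in Claim~\ref{clm:randwalk}'s proof, implies $\Delta \leq \delta x^0_{n^*,1}$ so the bound still holds. The hardest step will be the trade-off accounting: a single over-matched block locally \emph{decreases} the lower bound relative to the all-ones baseline, and the argument must show that the $t_{p,q} = 0$ blocks that pigeonhole forces to exist compensate in precisely the right quantitative way. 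The clean mechanism is the double-counting identity $\sum_{j \geq 2} M_j = \sum_{t \geq 2}(t-1) N_t$, which reduces everything to Claim~\ref{clm:randwalk}'s comparison of successive increments $\delta x^0_{n^*, \cdot}$ of the random-walk expectations.
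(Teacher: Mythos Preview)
Your proposal is correct and follows essentially the same route as the paper's proof: both drop the vertex-block contributions, write the edge-block contribution as $\binom{k}{2}\ell_2(n^*/2 - x^0_{n^*,1})$ plus a correction term, telescope via the increments $\delta x^0_{n^*,\cdot}$, and invoke Claim~\ref{clm:randwalk} together with the constraint $\sum_{p,q} t_{p,q} \le \binom{k}{2}$ to bound the correction by $\ell_2\Delta$. Your accounting with $N_t$, $M_j$, and the identity $\sum_{j\ge 2} M_j = \sum_{t\ge 2}(t-1)N_t$ makes explicit the step the paper compresses into ``It follows that $\sum_{p,q}\sum_{t=1}^{t[p,q]}\delta x^0_{n^*,t} \le \binom{k}{2}\delta x^0_{n^*,1} - \Delta$'', but the underlying argument is the same.
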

\begin{proof}
For any position $j$ in an edge block number $p$, $q$, $d(S^*[j], c(S^*)[j])$ is distributed as $X^0_{n^*,t[p, q]}$ where $t[p, q]$ is the number of edges $e$ in $E_{p, q}$ such that for both endpoints of $e$ the strings corresponding to them are in $S^*$. It follows that $E[d(S^*, c(S^*))] \geq \ell_2 \cdot\sum_{p, q} x^0_{n^*,t[p, q]}$ since here we are just counting the contribution of the edge block positions to the expectation. Since $|S^*|=n^*$ it follows that $\sum_{p, q} x^0_{n^*,t[p, q]} \leq {k \choose 2}$. We can now use Claim~\ref{clm:randwalk} to lower bound the expectation of $d(S^*, c(S^*))$. In particular, we have that
\begin{align*}
E[d(S^*, c(S^*))] & \geq \sum_{p, q}\big{(}\ell_2 \cdot (n^*/2 - x^0_{n^*,t[p, q]}) \big{)} 
 = \sum_{p, q}\big{(}\ell_2 \cdot (n^*/2 - x^0_{n^*,1} + x^0_{n^*,1} - x^0_{n^*,t[p, q]}) \big{)} \\
& = {k \choose 2} \cdot \ell_2 \cdot (n^*/2 - x^0_{n^*,1}) + \ell_2 \cdot ({k \choose 2} x^0_{n^*,1} - \sum_{p, q}x^0_{n^*,t[p, q]}) \\
& = {k \choose 2} \cdot \ell_2 \cdot (n^*/2 - x^0_{n^*,1}) + \ell_2 \cdot \Big{(}{k \choose 2} x^0_{n^*,1} - {k \choose 2} x^0_{n^*,0} - \sum_{p, q}\sum_{t=1}^{t[p, q]} \delta x^0_{n^*,t}\Big{)} \\
& = {k \choose 2} \cdot \ell_2 \cdot (n^*/2 - x^0_{n^*,1}) + \ell_2 \cdot \Big{(}{k \choose 2} \delta x^0_{n^*,1} - \sum_{p, q}\sum_{t=1}^{t[p, q]} \delta x^0_{n^*,t}\Big{)}
\end{align*}
Observe that if $t[p, q]=1$ for all $p$, $q$ then 
$$\sum_{p, q}\sum_{t=1}^{t[p, q]} \delta x^0_{n^*,t} = {k \choose 2} \delta x^0_{n^*,1}$$
and the second term of the last equation cancels. However this is not the case, since $S^*$ is assumed {\em not} to correspond to the set of endpoints of a set of edges that intersects with every $E_{p, q}$. It follows that 
$$ \sum_{p, q}\sum_{t=1}^{t[p, q]} \delta x^0_{n^*,t} \leq {k \choose 2} \delta x^0_{n^*,1} - \Delta$$
which in turn implies that 
$$E[d(S^*, c(S^*))] \geq {k \choose 2} \cdot \ell_2 \cdot (n^*/2 - x^0_{n^*,1}) + \ell_2\Delta = E_{no}^1$$
\end{proof}

By Lemma~\ref{lem:goodSet1} we know that any set $S^*$ such that $E[d(S^*, c(S^*))] < E_{no}^1$ corresponds to all the endpoints of an edge set $E^*$ such that for every $p, q \leq k$ we have $E^* \cap E_{p, q} \neq \emptyset$. It remains to prove that if $E^*$ does not correspond to the edge set of a $k$-clique in $G$ then $E[d(S^*, c(S^*))] \geq E_{no}^2$ for an integer $E_{no}^2$ which is sufficiently large compared to $E_{yes}$. Observe that for each $i \leq k$ there are exactly $k-1$ edges in $E^*$ that are incident to vertices of $V_i$. What we prove is that if $E[d(S^*, c(S^*))] < E_{no}^2$ then for every $i$, the set of edges in $E^*$ that have an endpoint in $V_i$ all come from the same vertex. Just as for the proof of Lemma~\ref{lem:goodSet1} we need a preliminary claim about the properties of certain random walks. Let ${\cal V}$ be the set of all vectors with all positive integer entries such that the sum of the entries is exactly $n^*$ and the sum of all the entries that are not $1$ is at most $k-1$. Let ${\cal V}' = {\cal V} \setminus \vec{u}$ where $\vec{u}$ is the vector in ${\cal V}$ with one entry equal to $k-1$. Observe that for this choice of $\vec{u}$, $E[|X_{\vec{u}}|]= x^{k-1}_{n^*-k+1,0}$. Set $\Delta_2 = \min_{\vec{v} \in {\cal V}'} x^{k-1}_{n^*-k+1,0} - E[|X_{\vec{v}}|]$

\begin{claim}\label{clm:randomwalk2} For every $\vec{v} \in {\cal V}'$, $x^{k-1}_{n^*-k+1,0} - E[|X_{\vec{v}}|] > 0$. Hence $\Delta_2$ is positive. Furthermore, $\Delta_2$ can be computed in time $f(k)$ for some function $f$. \end{claim}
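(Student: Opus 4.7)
The plan is to show that $\vec{u}$ is the strict maximizer of $E[|X_{\vec{v}}|]$ over $\vec{v} \in \mathcal{V}$. This immediately yields $\Delta_2 > 0$, since $\mathcal{V}$ is finite: all entries are positive integers, the sum of non-unit entries is at most $k-1$, and so there are at most $k-1$ non-unit entries, each bounded by $k-1$. For computability in $f(k)$ time, recall that $n^* = 2\binom{k}{2} = k(k-1)$ depends only on $k$; thus one can enumerate all profiles in $\mathcal{V}$ (multisets of non-unit entries summing to at most $k-1$, a count bounded by the integer partition function $p(k-1)$), pad each with the appropriate number of unit entries, and compute $E[|X_{\vec{v}}|]$ by convolving the step distributions in time polynomial in $n^*$, hence a function of $k$.

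The core tool is a \emph{merging lemma}: for positive integers $a_1, a_2$ and any symmetric integer-valued $Y$ independent of uniform $\pm 1$ variables $Z_1, Z_2$,
$$E[|a_1 Z_1 + a_2 Z_2 + Y|] \leq E[|(a_1+a_2) Z_1 + Y|],$$
with strict inequality whenever $|Y|$'s support meets an appropriate interval. To prove it, I would condition on $Z_1, Z_2$ and use symmetry of $Y$: the right-hand side reduces to $E|a_1+a_2+Y|$ and the left-hand side to $\tfrac{1}{2}(E|a_1+a_2+Y| + E||a_1-a_2|+Y|)$, so the gap equals $\tfrac{1}{2}(E|a_1+a_2+Y| - E||a_1-a_2|+Y|)$. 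The map $c \mapsto E|c+Y|$ is even, convex, and piecewise linear on $[0,\infty)$, with slope $\Pr[|Y| \leq n]$ on $(n, n+1)$; since $a_1+a_2 > |a_1-a_2|$, integrating these slopes over the interval gives non-negativity, and strict positivity iff some integer $n \in [|a_1-a_2|, a_1+a_2-1]$ satisfies $\Pr[|Y|\leq n] > 0$.

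Given the merging lemma, I would reduce any $\vec{v} \in \mathcal{V}'$ to $\vec{u}$ through a sequence of merges, each staying in $\mathcal{V}$ and each strictly increasing $E[|X_{\vec{v}}|]$. While $\vec{v}$ has two or more non-unit entries, merge two of them: the total sum $n^*$ is preserved and the sum of non-unit entries is unchanged, so the vector remains in $\mathcal{V}$. After at most $k-1$ such merges the vector has the form $\vec{v}^s = (s, 1, \ldots, 1)$ for some $s \leq k-1$. If $s < k-1$, absorb a unit entry into the $s$-entry to get $\vec{v}^{s+1}$ (still in $\mathcal{V}$ as $s+1 \leq k-1$); iterate until the vector equals $\vec{u}$. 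Because $\vec{v} \neq \vec{u}$, at least one merge is performed, yielding strictly $E[|X_{\vec{v}}|] < E[|X_{\vec{u}}|] = x^{k-1}_{n^*-k+1,0}$.

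The main obstacle is ensuring strictness at every merge: $E|c+Y|$ can be genuinely flat on short intervals when $|Y|$'s support has gaps (for example, if $Y$ is a single size-$2$ step, then $E|Y| = E|2+Y|$, and the lemma is only non-strict). Fortunately, in each merge step of the reduction above, $Y$ is a walk of length close to $n^* = k(k-1)$ that retains many unit steps after removing the two merged entries, so $\Pr[|Y| \leq 1] > 0$ and strictness is secured; a careful case analysis, especially for the boundary situation $a_1 = a_2$ where $|a_1 - a_2| = 0$, completes the argument.
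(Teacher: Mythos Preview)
Your proof is correct, but it takes a genuinely different route from the paper's. The paper argues as follows: for any $\vec{v}\in\mathcal V$, peel off from $\vec v$ all non-unit entries together with enough unit entries so that the peeled part $\vec v'$ has total sum exactly $k-1$; the remainder is a pure unit walk of length $n^*-k+1$. Conditioning on the value $i$ reached by $X_{\vec v'}$ gives
\[
E[|X_{\vec v}|]=\sum_{i} \Pr[X_{\vec v'}=i]\cdot x^{i}_{n^*-k+1,0}.
\]
A short induction shows $x^{i}_{r,0}>x^{i'}_{r,0}$ whenever $|i'|\le |i|-2$, and parity forces $\Pr[X_{\vec v'}=i]=0$ unless $i\equiv k-1\pmod 2$. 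Since $|X_{\vec v'}|\le k-1$ always, and for $\vec v\in\mathcal V'$ the variable $X_{\vec v'}$ puts positive mass on some $|i|<k-1$, the weighted average is strictly below $x^{k-1}_{n^*-k+1,0}$.

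Your approach instead proves a local ``merging lemma'' (merging two steps $a_1,a_2$ into a single step $a_1+a_2$ never decreases, and under a mild support condition strictly increases, $E[|X_{\vec v}|]$), and then chains merges to walk from any $\vec v\in\mathcal V'$ to $\vec u$ inside $\mathcal V$. The merging lemma is a clean and reusable fact; the convexity/slope computation you give is correct, and your strictness criterion $\Pr[|Y|\le n]>0$ for some $n\in[|a_1-a_2|,\,a_1+a_2-1]$ reduces to $\Pr[|Y|\le a_1+a_2-1]>0$, which is indeed implied by $\Pr[|Y|\le 1]>0$ since $a_1+a_2\ge 2$. In both phases of your reduction, $Y$ retains at least $(k-1)^2-1$ unit steps for $k\ge 3$, more than enough to cancel the remaining non-unit contribution (whose total magnitude is at most $k-1$), so $|Y|$ hits $0$ or $1$ with positive probability and strictness goes through. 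The paper's one-shot decomposition is shorter and avoids the per-step strictness bookkeeping; your argument is a bit longer but yields a more broadly applicable monotonicity principle for $E[|X_{\vec v}|]$ under step-merging.
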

\begin{proof}
To see that $\Delta_2$ can be computed in time $f(k)$ for some function $f$ it is sufficient to observe that the size of the sample space of any variable $X_{\vec{u}}$ is bounded by a function of $k$ and that $|\cal V|$ is upper bounded by a function of $k$ as well. We now prove that $\Delta_2$ is positive. Consider a vector $\vec{v} \in {\cal V}$ and let $\vec{v}'$ be a vector that contains all the entries of $\vec{v}$ that are greater than $1$, and possibly some entries that are $1$ such that the sum of the entries in $\vec{v}'$ is exactly $k-1$. Conditional expectation yields:
$$E[|X_{\vec{v}}|] = \sum_{i=-k+1}^{k-1} P[X_{\vec{v}'} = i] \cdot x^i_{n^*-k+1,0}.$$
Observe that for every $i$ whose parity is not the same as $k-1$ we have that $P[X_{\vec{v}'} = i] = 0$ since every entry of $\vec{v}'$ is either added or subtracted to get $X_{\vec{v}'}$ and $-1 \equiv 1 (\mod 2)$. Furthermore, a simple induction (see below) shows that for every non-negative $i'$ such that $|i'| \leq |i|-2$, $x^i_{r,0} > x^{i'}_{r,0}$. Together these two facts imply that for all $i \notin \{k-1,-k+1\}$ for which $P[X_{\vec{v}'} = i]$ is non-zero we have $x^i_{n^*-k+1,0} < x^{k-1}_{n^*-k+1,0}$. Since such an $i$ must exist for any $\vec{v} \in {\cal V}'$ we have that $x^{k-1}_{n^*-k+1,0} - E[|X_{\vec{v}}|] > 0$.

Finally, we have to prove that for every non-negative $i' \leq i-2$, $x^i_{r,0} > x^{i'}_{r,0}$. We do this by induction on $r$. For $r=0$ this clearly holds as $x^i_{0,0}=|i|$. For $r \geq 1$ conditional expectation yields that $x^i_{r,0} = 1/2(x^{i-1}_{r-1,0}+x^{i+1}_{r-1,0}) > 1/2(x^{i'-1}_{r-1,0}+x^{i'+1}_{r-1,0}) = x^{i'}_{r,0}$ if $i' \neq 0$ and $x^i_{i,0} = 1/2(x^{i-1}_{r-1,0}+x^{i+1}_{r-1,0}) > 1/2(2x^{1}_{r-1,0}) = x^{i'}_{r,0}$ if $i'=0$. This concludes the proof.
\end{proof}

We are now ready to prove the last part of the reduction. Let
\begin{align}\label{eqn:eno2}
E_{no}^2 =  {k \choose 2} \cdot \ell_2 \cdot (n^*/2 - x^0_{n^*,1}) + k \cdot \ell_1 \cdot (n^*/2 - x^{k-1}_{n^*-k+1,0}) + \ell_1\Delta_2.
\end{align}

\begin{lemma}\label{lem:noCliqueExp} Let $S^*$ be a subset of $S$ of size $n^*$ that corresponds to all edge endpoints of a set $E^*$ of edges such that for every $p, q \leq k$ we have $|E^* \cap E_{p, q}|=1$. If there exist two distinct vertices $v_1, v_2 \in V_i$ such that $E^*$ contains edges incident to both $v_1$ and $v_2$ then $E[d(S^*, c(S^*))] \geq E_{no}^2$.
\end{lemma}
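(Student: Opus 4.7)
The plan is to decompose $E[d(S^*, c(S^*))]$ by position, handling edge blocks and vertex blocks separately, in the same spirit as Lemma~\ref{lem:expyes}. For the edge blocks the contribution is immediate: because $|E^* \cap E_{p, q}| = 1$ for every pair $p, q \leq k$, at any position $j$ in the $(p, q)$-edge block exactly two strings of $S^*$ (the two endpoints of the unique chosen edge) share their bit, while the remaining $n^* - 2$ characters are mutually independent fair bits. Hence $d(S^*[j], c(S^*)[j])$ is distributed as $n^*/2 - |X^0_{n^*, 1}|$, and summing over the ${k\choose 2}\ell_2$ edge-block positions contributes exactly ${k \choose 2} \cdot \ell_2 \cdot (n^*/2 - x^0_{n^*, 1})$, matching the first term of $E_{no}^2$.

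Next I would handle the vertex blocks. Fix a block $p$. Because $|E^*|={k\choose 2}$ and $E^*$ picks one edge between every pair of parts, exactly $k - 1$ strings of $S^*$ correspond to endpoints lying in $\hat{E}_p$, and each such string carries $id(u)$ in block $p$, where $u \in V_p$ is its anchor vertex. Group these $k - 1$ strings by anchor and let $c_{u_1}, c_{u_2}, \ldots$ denote the group sizes, so $\sum_j c_{u_j} = k-1$. Strings in the same group share the same random identification bit at position $j$, while strings in different groups (as well as the $n^* - (k - 1)$ strings not in $\hat{E}_p$) carry independent random bits. Consequently $d(S^*[j], c(S^*)[j])$ is distributed as $n^*/2 - |X_{\vec{v}^p}|$, where $\vec{v}^p$ is the positive integer vector consisting of the $c_{u_j}$'s together with $n^* - (k-1)$ entries equal to $1$. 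Its entries sum to $n^*$ and the non-$1$ entries sum to at most $k - 1$, so $\vec{v}^p \in {\cal V}$; by Claim~\ref{clm:randomwalk2} we have $E[|X_{\vec{v}^p}|] \leq x^{k-1}_{n^*-k+1, 0}$.

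For the distinguished block $p = i$, the hypothesis that $E^*$ contains edges incident to two distinct vertices $v_1, v_2 \in V_i$ means that the composition $(c_{u_j})$ has at least two nonzero entries, so $\vec{v}^i \neq \vec{u}$, i.e. $\vec{v}^i \in {\cal V}'$. By the definition of $\Delta_2$ this yields $E[|X_{\vec{v}^i}|] \leq x^{k-1}_{n^*-k+1, 0} - \Delta_2$, so the $\ell_1$ positions of block $i$ jointly contribute at least $\ell_1(n^*/2 - x^{k-1}_{n^*-k+1, 0}) + \ell_1 \Delta_2$. Each of the other $k - 1$ vertex blocks contributes at least $\ell_1 (n^*/2 - x^{k-1}_{n^*-k+1, 0})$ by the general bound. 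Adding the edge-block total to the vertex-block totals gives exactly $E_{no}^2$ as in Equation~\ref{eqn:eno2}.

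I do not foresee a real obstacle: the argument is just linearity of expectation over positions plus a single appeal to Claim~\ref{clm:randomwalk2}. The only bookkeeping I would be careful with is verifying that the vector $\vec{v}^p$ extracted from any vertex block really lies in ${\cal V}$ (sum $n^*$, sum of non-$1$ entries at most $k - 1$), and that the hypothesis on $V_i$ translates cleanly into $\vec{v}^i \neq \vec{u}$ so that the quantitative gap $\Delta_2$ can be invoked.
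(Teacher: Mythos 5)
Your proposal is correct and follows essentially the same route as the paper: edge blocks contribute exactly ${k \choose 2}\ell_2(n^*/2 - x^0_{n^*,1})$, each vertex block is encoded as a step-length vector in ${\cal V}$, and the block with two distinct anchor vertices yields a vector in ${\cal V}'$ so that Claim~\ref{clm:randomwalk2} supplies the extra $\ell_1\Delta_2$. Your bookkeeping (group sizes summing to $k-1$, hence $\vec{v}^p\in{\cal V}$, and two anchors forcing $\vec{v}^i\neq\vec{u}$) matches the paper's argument.
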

\begin{proof}

Consider a set $S^*$ satisfying the conditions of the lemma. The contribution to $E[d(S^*, c(S^*))]$ of the edge blocks is exactly ${k \choose 2} \cdot \ell_2 \cdot (n^*/2 - x^0_{n^*,1})$. Now, consider a vertex block $i$ such that there is exactly one vertex $v_i \in V_i$ that is incident to edges of $E^*$. This block contributes exactly $\ell_1 \cdot (n^*/2 - x^{k-1}_{n^*-k+1,0})$ to $E[d(S^*, c(S^*))]$. Finally, consider a block $i$ such that there are two distinct vertices $v_1, v_2 \in V_i$ such that $E^*$ contains edges incident to both $v_1$ and $v_2$. Let $\vec{v}$ be a vector that for each vertex $v \in V_i$ which is incident to at least one edge in $E^*$, contains an entry which is exactly the number of edges in $E^*$ that $v$ is incident to. For each edge which is not incident to any vertices in $V_i$ the vector $\vec{v}$ contains a entry with value $1$. Hence the sum of the entries of $\vec{v}$ is $n^*$, the sum of all entries in $\vec{v}$ that are greater than $1$ is at most $k-1$, and $\vec{v}$ contains no entry which is $k-1$. This is because exactly $k-1$ edges are incident to vertices in $V_i$ and two such edges are incident to distinct vertices. Hence $\vec{v} \in {\cal V}'$. Now, observe that the vertex block $i$ contributes exactly $\ell_1 \cdot (n^*/2 - E[|X_{\vec{v}}|])$ to $E[d(S^*, c(S^*))]$. By Claim~\ref{clm:randomwalk2}, $E[|X_{\vec{v}}|] \leq x^{k-1}_{n^*-k+1,0} - \Delta_2$. Thus,
$$E[d(S^*, c(S^*))] \geq {k \choose 2} \cdot \ell_2 \cdot (n^*/2 - x^0_{n^*,1}) +  k \cdot \ell_1 \cdot (n^*/2 - x^{k-1}_{n^*-k+1,0}) + \ell_1\Delta_2 = E_{no}^2$$
\end{proof}

Set $E_{no} = \min(E_{no}^1, E_{no}^2) = E_{no}^2$. From Equations~\ref{eqn:eyes}, \ref{eqn:eno1}, \ref{eqn:eno2} and \ref{eqn:l2froml1} we conclude that there exist constants $\kappa_{yes}$, $\kappa_{no}$ and $\kappa_{L}$ depending only on $k$ such that $E_{yes} = \kappa_{yes}\ell_1$, $E_{no} = \kappa_{no}\ell_1$ and $L = \kappa_L\ell_1$. Furthermore, $\kappa_{yes} < \kappa_{no}$ and the value of $\kappa_{yes}$, $\kappa_{no}$ and $\kappa_{L}$ can be computed in time $f(k)$ for some function $f$. Set $\kappa_{yes}' = (2\kappa_{yes}+\kappa_{no}) / 3$ and $\kappa_{no}' = (\kappa_{yes}+2\kappa_{no}) / 3$. Then $\kappa_{yes} < \kappa_{yes}' < \kappa_{no}' < \kappa_{no}$. We set $D_{yes} = \kappa_{yes}'\ell_1$ and $D_{no} = \kappa_{no}'\ell_1$. 

\paragraph{A randomized analogue of Theorem~\ref{thm_no_eptas}.}
Before proving Theorem~\ref{thm_no_eptas} we argue that the randomized construction works. Specifically, we show that if {\em Min-distance Consensus String With Outliers} has an EPTAS then {\em MCC} has a randomized FPT algorithm, implying that W[1] $\subseteq$ randomized FPT. The results proved in this section are not used in the proof of Theorem~\ref{thm_no_eptas}, but they provide useful insights on how the deterministic construction works.
\begin{lemma}\label{lem:noeptas:concentration} For any $S^* \subset S$ such that $|S^*|=n^*$, 
\begin{align*}
P\Big{[}|d(S^*, c(S^*))-E[d(S^*, c(S^*))]| > x \cdot \ell_1\Big{]} \leq 2\exp\left(-2\frac{x^2}{\kappa_L (n^*)^2} \ell_1\right).
\end{align*}
\end{lemma}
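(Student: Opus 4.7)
The plan is a direct application of Hoeffding's inequality (Proposition~\ref{prop:hoeff}) to a column-wise decomposition of the sum distance. I would start by writing
$$d(S^*, c(S^*)) = \sum_{j=1}^{L} d(S^*[j], c(S^*)[j]),$$
where $L = \kappa_L \ell_1$ is the length of the constructed strings. The first task is to verify that the $L$ summands $Y_j := d(S^*[j], c(S^*)[j])$ are mutually independent. This follows from the construction: every bit of every string $s_i$ is ultimately a function of the random bits in the strings $z_1, \dots, z_{2m}$, and those random bits are drawn independently across the positions $j$. Although within a single column $j$ the characters $s_i[j]$ and $s_{i'}[j]$ may be perfectly correlated (via shared identification substrings), the random vector $(s_1[j], \ldots, s_{2m}[j])$ is independent of $(s_1[j'], \ldots, s_{2m}[j'])$ for $j \neq j'$, so each $Y_j$ depends only on the column-$j$ randomness.

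Next I would record the trivial a priori bounds $0 \leq Y_j \leq n^*$ (in fact $n^*/2$, but the loose bound suffices), so that $b_j - a_j \leq n^*$. Applying Proposition~\ref{prop:hoeff} with $t = x\ell_1$ to the sum $Y = \sum_{j=1}^L Y_j = d(S^*, c(S^*))$ gives
$$P\bigl[d(S^*, c(S^*)) - E[d(S^*, c(S^*))] \geq x\ell_1\bigr] \leq \exp\!\left(\frac{-2 x^2 \ell_1^2}{L \cdot (n^*)^2}\right) = \exp\!\left(\frac{-2 x^2 \ell_1}{\kappa_L (n^*)^2}\right),$$
where I substituted $L = \kappa_L \ell_1$ in the denominator. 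The symmetric lower-tail bound follows by applying the same argument to $-Y$ (Hoeffding is two-sided), and a union bound over the two tails yields the factor of $2$ and the absolute value in the statement.

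The only subtle point, and the step I would write out with the most care, is the independence claim across columns; everything else is mechanical. In particular, one must be explicit that although $c(S^*)[j]$ depends on all of $S^*[j]$ (so $Y_j$ is a complicated function of that column), this complication is entirely confined to column $j$, and so independence across $j$ is preserved. Once that is stated cleanly, the Hoeffding application is a one-line computation with $t = x \ell_1$, no optimization of constants is needed, and the stated bound drops out immediately.
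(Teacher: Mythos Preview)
Your proposal is correct and is essentially identical to the paper's own proof: both decompose $d(S^*,c(S^*))$ columnwise into $L=\kappa_L\ell_1$ independent summands bounded in $[0,n^*]$, apply Hoeffding's inequality with $t=x\ell_1$, and simplify. You are in fact a bit more explicit than the paper about why the columns are independent and about the union bound giving the factor of~$2$, but these are exactly the points the paper takes for granted.
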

\begin{proof}
We have that $d(S^*, c(S^*)) = \sum_{p=1}^{L} d(S^*[p], c(S^*)[p])$. The $d(S^*[p], c(S^*)[p])$'s are independent random variables taking values from $0$ to $n^*$. Since $L=\kappa_L\ell_1$ it follows that $P[|d(S^*, c(S^*))-E[d(S^*, c(S^*))]| > x \cdot \ell_1] = P[|d(S^*, c(S^*))-E[d(S^*, c(S^*))]| > \frac{x}{\kappa_L} \cdot L]$. By Hoeffding's inequality (Proposition~\ref{prop:hoeff}) it follows that
\begin{align*}
P\left[|d(S^*, c(S^*))-E[d(S^*, c(S^*))]| > \frac{x}{\kappa_L} \cdot L\right] & \leq 2 \exp\left(-2\left(\frac{x}{\kappa_Ln^{*}}\right)^2L\right) \\
 & = 2\exp\left(-2\frac{x^2}{\kappa_L (n^*)^2} \ell_1\right)
\end{align*}
\end{proof}
We now define $\ell_1$. This value for $\ell_1$ is only valid for the randomized construction, and a different value for $\ell_1$ is used in the proof of Theorem~\ref{thm_no_eptas}.
\begin{align}\label{eqn:ell1random}
\ell_1 = \frac{(n^*)^2\kappa_L}{2(\kappa_{yes}' - \kappa_{yes})}\ln\left(20(2m)^{n^*}\right).
\end{align}
Recall that $m$ is the number of edges in the graph $G$, so $m \leq n^2$ and hence $\ell_1 \leq f \cdot \log n$ for some $f$ depending only on $k$.
\begin{lemma}\label{lem:randConstr} If $G$ has a $k$-clique $C$, let $S^*$ be the set of strings corresponding to edge endpoints of edges in $C$. Then $P[d(S^*, c(S^*)) > D_{yes}] \leq \frac{1}{10(2m)^{n^*}}$. If $G$ does not contain a $k$-clique, then the probability that $S$ contains a subset $S^*$ of size $n^*$ such that $d(S^*, c(S^*)) < D_{no}$ is at most $1/10$. \end{lemma}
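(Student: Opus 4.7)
The plan is to combine the expectation bounds already established (Lemma~\ref{lem:expyes} for the yes case, Lemmas~\ref{lem:goodSet1} and~\ref{lem:noCliqueExp} for the no case) with the Hoeffding-style concentration statement of Lemma~\ref{lem:noeptas:concentration}, applied with the specific value of $\ell_1$ given by Equation~\ref{eqn:ell1random}. The guiding idea is that $\ell_1$ is chosen precisely so that deviations of order $(\kappa_{yes}'-\kappa_{yes})\ell_1 = (\kappa_{no}-\kappa_{no}')\ell_1$ from the mean are exponentially unlikely, with enough margin to survive a union bound over all $n^*$-subsets of $S$.

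For the first claim, let $S^*$ be the set of edge-endpoint strings corresponding to a $k$-clique. By Lemma~\ref{lem:expyes}, $E[d(S^*, c(S^*))] = E_{yes} = \kappa_{yes}\ell_1$. Since $D_{yes} = \kappa_{yes}'\ell_1$, the event $d(S^*, c(S^*)) > D_{yes}$ is precisely the event that $d(S^*, c(S^*)) - E[d(S^*, c(S^*))] > (\kappa_{yes}' - \kappa_{yes})\ell_1$. Instantiating Lemma~\ref{lem:noeptas:concentration} with $x = \kappa_{yes}'-\kappa_{yes}$ then gives an upper bound of $2\exp\!\left(-2\frac{(\kappa_{yes}' - \kappa_{yes})^2}{\kappa_L (n^*)^2}\ell_1\right)$; substituting the value of $\ell_1$ from Equation~\ref{eqn:ell1random} makes the exponent equal $-\ln(20(2m)^{n^*})$, yielding a bound of $\frac{2}{20(2m)^{n^*}} = \frac{1}{10(2m)^{n^*}}$, as required.

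For the second claim, fix any candidate subset $S^*\subseteq S$ with $|S^*|=n^*$ that does not correspond to the edge endpoints of a $k$-clique in $G$. Lemmas~\ref{lem:goodSet1} and~\ref{lem:noCliqueExp} together give $E[d(S^*, c(S^*))] \geq E_{no} = \kappa_{no}\ell_1$, so the event $d(S^*, c(S^*)) < D_{no} = \kappa_{no}'\ell_1$ is contained in the one-sided deviation event $E[d(S^*, c(S^*))] - d(S^*, c(S^*)) > (\kappa_{no}-\kappa_{no}')\ell_1$. By the definition of $\kappa_{yes}'$ and $\kappa_{no}'$ we have $\kappa_{no}-\kappa_{no}' = \kappa_{yes}'-\kappa_{yes}$, so Lemma~\ref{lem:noeptas:concentration} produces exactly the same exponential bound as in part~one, namely at most $\frac{1}{10(2m)^{n^*}}$ for this fixed $S^*$. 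A union bound over all $\binom{2m}{n^*} \leq (2m)^{n^*}$ subsets of $S$ of size $n^*$ then yields a total failure probability of at most $1/10$.

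The only step requiring care is verifying that the $\ell_1$ prescribed by Equation~\ref{eqn:ell1random} actually drives the Hoeffding tail below $1/(10(2m)^{n^*})$; this is a matter of matching constants so that the extraneous factor of $2$ from the two-sided concentration bound and the $(2m)^{n^*}$ factor coming from the union bound are both absorbed into the $\ln(20(2m)^{n^*})$ term. Once this is done, both probability bounds follow immediately, and observing that $m \leq n^2$ confirms the earlier assertion that $\ell_1 = f\cdot \log n$ for some $f$ depending only on $k$.
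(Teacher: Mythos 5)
Your argument is essentially identical to the paper's proof: both combine the expectation bounds of Lemmas~\ref{lem:expyes}, \ref{lem:goodSet1} and~\ref{lem:noCliqueExp} with the concentration bound of Lemma~\ref{lem:noeptas:concentration} at deviation $x=\kappa_{yes}'-\kappa_{yes}=\kappa_{no}-\kappa_{no}'$, and finish the ``no'' case with a union bound over the at most $(2m)^{n^*}$ subsets of size $n^*$. The one point you flag for care --- that substituting the $\ell_1$ of Equation~\ref{eqn:ell1random} actually yields exponent $-(\kappa_{yes}'-\kappa_{yes})\ln\left(20(2m)^{n^*}\right)$ rather than $-\ln\left(20(2m)^{n^*}\right)$, so the claimed bound needs $\kappa_{yes}'-\kappa_{yes}\geq 1$ or a squared denominator in the definition of $\ell_1$ --- is glossed over in the paper as well, which does not carry out the substitution at all.
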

\begin{proof}
If $G$ has a $k$-clique $C$, let $S^*$ be the set of strings corresponding to edge endpoints of edges in $C$. Then by Lemma~\ref{lem:expyes}, $E[d(S^*, c(S^*))] = D_{yes}$. Now, $D_{yes}-E_{yes} = (\kappa_{yes}' - \kappa_{yes})\ell_1$ and hence, by Lemma~\ref{lem:noeptas:concentration}, 
$$P[d(S^*, c(S^*)) > D_{yes}] \leq P[|d(S^*, c(S^*))-E_{yes}| > (\kappa_{yes}' - \kappa_{yes})\ell_1] \leq \frac{1}{10(2m)^{n^*}}.$$
On the other hand, consider a set $S^*$ of size $n^*$ that does not correspond to the edge endpoints of a clique. If $S^*$ does not correspond to a set $E^*$ of edges such that $|E^* \cap E_{p, q}|=1$ for every $p$,$q$, then $E[d(S^*, c(S^*))] \geq E_{no}^1 > E_{no}$. If $S^*$ corresponds to a set $E^*$ of edges such that $|E^* \cap E_{p, q}|=1$, but $E^*$ is not the edge set of a clique in $G$ then there exists an $i$ and $v_1$, $v_2 \in V_i$ such that $E ^*$ contains edges incident to $v_1$ and to $v_2$. In this case Lemma~\ref{lem:noCliqueExp} yields that $E[d(S^*, c(S^*))] \geq E_{no}^2 = E_{no}$. Hence $E[d(S^*, c(S^*))] \geq E_{no}$. Finally, $E_{no}-D_{no} = (\kappa_{no}-\kappa_{no}')\ell_1$ and $(\kappa_{no}-\kappa_{no}')\ell_1 = (\kappa_{yes}'-\kappa_{yes})\ell_1$ and hence, by Lemma~\ref{lem:noeptas:concentration}, 
$$P[d(S^*, c(S^*)) \leq D_{no}] = P[E_{no}-d(S^*, c(S^*)) > (\kappa_{yes}' - \kappa_{yes})\ell_1] \leq \frac{1}{10(2m)^{n^*}}.$$
Thus, if $G$ does not contain a clique of size $k$, the union bound yields that the probability that $S$ contains a subset $S^*$ of size $n^*$ such that $d(S^*, c(S^*)) < D_{no}$ is at most $1/10$.
\end{proof}
We now prove a randomized analogue of Theorem~\ref{thm_no_eptas}.
\begin{lemma}\label{lem:randNoEptas}
If {\em Min-distance Consensus String With Outliers} has an EPTAS then W[1] $\subseteq$ randomized FPT.
\end{lemma}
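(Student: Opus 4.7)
The plan is to combine the randomized construction developed in this section with a hypothetical EPTAS to obtain a randomized FPT algorithm for MCC; since MCC is W[1]-hard, this immediately gives W[1] $\subseteq$ randomized FPT. Starting from an MCC instance $(G,k)$, I would first invoke the construction to produce an instance $(S,n^*)$ of \emph{Min-distance Consensus String with Outliers}. Since $m \leq n^2$, Equation~\ref{eqn:ell1random} yields $\ell_1 \leq h(k) \log n$ for some function $h$, so the length $L = \kappa_L \ell_1$ and size $N = 2m$ of the produced instance are polynomial in $n$ once $k$ is fixed. The constants $\kappa_{yes}, \kappa_{no}, \kappa_{yes}', \kappa_{no}', \kappa_L, \Delta, \Delta_2$ are all computable in time depending only on $k$ by Claims~\ref{clm:randwalk} and~\ref{clm:randomwalk2}, so the construction itself runs in time $g_1(k) \cdot n^{O(1)}$.

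Next, because $\kappa_{yes}' < \kappa_{no}'$ and both depend only on $k$, I would fix $\epsilon(k) > 0$ small enough that $(1+\epsilon(k))\kappa_{yes}' < \kappa_{no}'$, equivalently $(1+\epsilon(k)) D_{yes} < D_{no}$, and then run the assumed EPTAS on $(S,n^*)$ with precision $\epsilon(k)$. Its running time is $f(1/\epsilon(k)) \cdot |\mathcal{I}|^{O(1)} = g_2(k) \cdot n^{O(1)}$. The MCC decision is then read off a single threshold test: declare $(G,k)$ a yes-instance iff the returned value $v$ satisfies $v < D_{no}$.

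For correctness, Lemma~\ref{lem:randConstr} does the heavy lifting. If $G$ contains a $k$-clique $C$, then the set $S^*$ of edge endpoints of $C$ achieves $d(S^*, c(S^*)) \leq D_{yes}$ with probability at least $1 - 1/(10(2m)^{n^*})$, and consequently the EPTAS returns $v \leq (1+\epsilon(k)) D_{yes} < D_{no}$. Conversely, if $G$ has no $k$-clique, then with probability at least $9/10$ every size-$n^*$ subset of $S$ satisfies $d(S^*, c(S^*)) \geq D_{no}$, forcing $v \geq D_{no}$. Thus the whole procedure decides MCC correctly with constant probability in time $g(k) \cdot n^{O(1)}$, giving a randomized FPT algorithm for this W[1]-hard problem and establishing the lemma.

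The principal subtlety, already absorbed by the earlier analysis, is that the multiplicative gap $\kappa_{no}'/\kappa_{yes}'$ must be bounded away from $1$ by a function of $k$ alone; otherwise $\epsilon(k)$ would have to shrink with the input size, and the dependence $f(1/\epsilon)$ in the EPTAS would fail to yield an FPT bound in $k$. The strict positivity of $\Delta$ and $\Delta_2$ furnished by Claims~\ref{clm:randwalk} and~\ref{clm:randomwalk2} is precisely what produces a $k$-dependent (constant) gap; everything else in the proof is routine bookkeeping with Lemma~\ref{lem:randConstr}.
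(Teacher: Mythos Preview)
Your proposal is correct and follows essentially the same approach as the paper: build the randomized instance, set $\epsilon$ depending only on $k$ so that $(1+\epsilon)D_{yes}$ does not exceed $D_{no}$, run the EPTAS, threshold the output, and invoke Lemma~\ref{lem:randConstr} for both directions of correctness. The only cosmetic difference is that the paper takes $\epsilon = \kappa_{no}'/\kappa_{yes}' - 1$ exactly (so $(1+\epsilon)D_{yes} = D_{no}$) with threshold $\leq D_{no}$, whereas you choose $\epsilon(k)$ strictly smaller and threshold $< D_{no}$; both work.
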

\begin{proof}
Assiming that {\em Min-distance Consensus String With Outliers} has an EPTAS we give a randomized fixed parameter tractable algorithm for {\em MCC} with two sided error. We construct the instance to {\em Min-distance Consensus String With Outliers} as described and run the EPTAS with $\epsilon = \frac{D_{no}}{D_{yes}} - 1 = \frac{\kappa_{no}'}{\kappa_{yes}'} - 1$. If the EPTAS returns a set $S^*$ such that $d(S^*, c(S^*)) \leq D_{no}$ the algorithm returns that the input graph $G$ contains a $k$-clique, otherwise we return that $G$ has no $k$-clique. The construction takes time $O(f(k)n^{O(1)})$ for some function $f$, and $\epsilon$ depends only on $k$. Hence the EPTAS runs in time $g(k)n^c$ for some function $g$. Thus the algorithm terminates in FPT time.

If $G$ contains a $k$-clique, then by Lemma~\ref{lem:randConstr}, with probability at least $1-\frac{1}{10(2m)^{n^*}} \geq 1-\frac{1}{n^k}$ there is a set $S^*$ of size $n^*$ such that $d(S^*, c(S^*)) \leq D_{yes}$. If this event occurs, the EPTAS will find a solution $S'$ such that $d(S', c(S')) \leq D_{yes}(1+\epsilon) \leq D_{no}$ and hence the algorithm will correctly return ``yes''. Hence the probability of false negatives is at most $\frac{1}{n^k}$.

If $G$ does not contain a $k$-clique, then by Lemma~\ref{lem:randConstr}, with probability at least $9/10$ for every set $S^*$ of size $n^*$ we hae $d(S^*, c(S^*)) > D_{no}$. If this event occurs the algorithm correctly returns ``no'' and hence the probability if false positives is at most $1/10$. This implies that there is a randomized fixed parameter tractable algorithm for {\em MCC}, which in turn shows that  W[1] $\subseteq$ randomized FPT.
\end{proof}

\paragraph{A Deterministic Construction and Proof of Theorem~\ref{thm_no_eptas}.}
In order to prove Theorem~\ref{thm_no_eptas} we need to make the construction deterministic. We only used randomness to construct the set $Z$, all other steps are deterministic. We now show how $Z$ can be computed deterministically instead of being selected at random, preserving the properties of the reduction. For this, we need the concept of near $p$-wise independence defined by Naor and Naor~\cite{NaorN93}. The original definition of near $p$-wise independence is in terms of sample spaces, we define near $p$-wise independence in terms of collections of binary strings. This is only a notational difference, and one may freely translate between the two variants.
\begin{definition}[\cite{NaorN93}]
A set $C = \{c_1, c_2, \ldots c_t\}$ of length $\ell$ binary strings is $(\epsilon,p)$-independent if for any subset $C'$ of $C$ of size $p$, if a position $i \leq t$ is selected uniformly at random, then
$$\sum_{\alpha \in \{0,1\}^p} |P[C'[i] = \alpha] - 2^{-p}| \leq \epsilon.$$
\end{definition}
Naor and Naor~\cite{NaorN93} and Alon et al.~\cite{AlonGHP92} give determinsitic constructions of small nearly $k$-wise independent sample spaces. Reformulated in our terminology, Alon et al. prove a slightly stronger version of the following theorem.
\begin{theorem}[\cite{AlonGHP92}]\label{thm:nearlyindep} For every $t$, $p$, and $\epsilon$ there is a $(\epsilon,p)$-independent set $C = \{c_1, c_2, \ldots c_t\}$ of binary strings of length $\ell$, where $\ell = O(\frac{2^k \cdot k\log t}{\epsilon})$. Furthermore, $C$ can be computed in time $O(|C|^{O(1)})$.
\end{theorem}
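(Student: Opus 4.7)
The plan is to follow the standard two-step pipeline of Naor--Naor and Alon--Goldreich--H\aa stad--Peralta: first construct an $\epsilon'$-biased sample space on $t$ bits for a suitably chosen $\epsilon'$, then apply Vazirani's XOR lemma to convert small bias into near $p$-wise independence. Translating the statement into the usual ``sample space'' language, we are looking for $t$ Boolean random variables whose joint distribution is supported on a sample space of size $\ell$ (the length of each string in $C$), such that any $p$ of them have joint distribution within $L_1$-distance $\epsilon$ of the uniform distribution on $\{0,1\}^p$.

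For the conversion step I would argue as follows. Given an $\epsilon'$-biased set, fix a subset $C' = \{c_{j_1}, \ldots, c_{j_p}\}$ and let $q(\alpha) = \Pr_i[C'[i]=\alpha]$ be the induced distribution on $\{0,1\}^p$. Expanding $q - u$ in the Fourier basis on $\{0,1\}^p$, every nontrivial Fourier coefficient has the form $2^{-p}\,\mathbb{E}_i\bigl[(-1)^{\bigoplus_{j \in S} c_j[i]}\bigr]$ for a nonempty $S \subseteq \{j_1, \ldots, j_p\}$, and is therefore bounded in absolute value by $\epsilon'/2^p$ by the definition of $\epsilon'$-bias. Parseval's identity then yields $\|q - u\|_2 \le \epsilon'$, and a single application of Cauchy--Schwarz gives
\[
\sum_{\alpha \in \{0,1\}^p} \bigl| q(\alpha) - 2^{-p} \bigr| \;=\; \|q - u\|_1 \;\le\; 2^{p/2}\,\|q - u\|_2 \;\le\; 2^{p/2}\epsilon'.
\]
Setting $\epsilon' = \epsilon / 2^{p/2}$ therefore yields $(\epsilon, p)$-independence for every choice of $p$ strings.

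For the construction step I would invoke any standard explicit $\epsilon'$-biased construction on $t$ bits, all of which produce sample spaces of size $\mathrm{poly}(\log t, 1/\epsilon')$ and are computable in time polynomial in the output size. Suitable choices include the AGHP construction based on sums of powers in $\mathbb{F}_{2^m}$, the Naor--Naor construction via concatenation of Justesen-style codes, or expander walks on a small-bias gadget. Plugging $\epsilon' = \epsilon/2^{p/2}$ into such a bound produces $\ell = 2^{O(p)} \cdot \mathrm{poly}(\log t, 1/\epsilon)$, matching the stated bound after extracting the leading exponents. The deterministic construction time is polynomial in $\ell$ and in $t$, so the overall runtime is $|C|^{O(1)}$ as required.

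The main obstacle is the Fourier-to-statistical-distance conversion: one must show that uniform control on the XOR biases of \emph{all} nonempty subsets translates into $L_1$ closeness of every induced $p$-dimensional marginal, losing only a $2^{O(p)}$ factor rather than, say, $2^{\Omega(p^2)}$. Once this is handled cleanly via Parseval plus Cauchy--Schwarz as above, the remaining work is bookkeeping around an off-the-shelf algebraic construction of small-bias sets.
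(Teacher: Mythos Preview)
The paper does not prove this statement at all: Theorem~\ref{thm:nearlyindep} is quoted verbatim from Alon--Goldreich--H\aa stad--Peralta~\cite{AlonGHP92} (after a reformulation of Naor--Naor's definition) and is used as a black box in the derandomization step. There is therefore no ``paper's own proof'' to compare against.

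That said, your sketch is exactly the argument that underlies the cited result: build an $\epsilon'$-biased space on $t$ bits, then use the Fourier/Parseval/Cauchy--Schwarz route (Vazirani's lemma) to pass from $\epsilon'$-bias to $(\epsilon,p)$-independence with the $2^{p/2}$ loss, and finally plug in an explicit small-bias construction of size $\mathrm{poly}(\log t,1/\epsilon')$. Your computation of the $L_1$ bound is correct, and the resulting $\ell = 2^{O(p)}\cdot\mathrm{poly}(\log t,1/\epsilon)$ matches the intended bound up to the exact polynomial, which depends on which of the several AGHP constructions one chooses. One small remark: the $k$ appearing in the statement's bound is almost certainly a typo for $p$ (the independence parameter); you correctly worked with $p$ throughout.
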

We use Theorem~\ref{thm:nearlyindep} to construct the set $Z$. We set
$$\epsilon = \frac{\kappa_{yes}'-\kappa_{yes}}{\kappa_L \cdot n^*}$$
and construct an $(\epsilon,n^*)$-independent set $C$ of $2m$ strings. These strings have length $\ell = f \cdot \log(n)$ for some $f$ depending only on $k$, and $C$ can be constructed in time $O(gn^{O(1)})$ for some $g$ depending only on $k$. We set $\ell_1 = \ell$. Observe that since $\ell_2$ is an integer multiple of $\ell_1$, the length of the strings in $Z$ is an integer multiple of $\ell_1$. For every $i$ we set $z_i = c_i \circ c_i \circ \ldots \circ c_i$, where we used $\kappa_L$ copies of $c_i$ such that $z_i$ is a string of length $L$. The remaining part of the construction, i.e the construction of $S$ from $Z$ remains unchanged. To distinguish between the deterministically constructed $S$ and the randomized construction, we refer to the deterministically constructed $S$ as $S_{det}$. We now prove that for every $S^*_{det} \subseteq S_{det}$ of size $n^*$, if $S^*$ is the set of strings in the randomized construction that corresponds to the same edge endpoints as $S^*_{det}$, then $d(S^*_{det}, c(S^*_{det}))$ is almost equal to $E[d(S^*, c(S^*))]$.

For a subset $I$ of $\{1,2,\ldots,2m\}$ define $S^*(I) = \{s_i \in S~:~i \in I\}$ and $S^*_{det}(I) = \{s_i \in S_{det}~:~i \in I\}$. For every $j \leq \kappa_L$, define $P_j = \{\kappa_L \cdot j + 1,\kappa_L \cdot (j+1)\}$. Hence for every $i$ and $j$, $z_i[P_j] = c_i$. The construction of $S_{det}$ (and $S$) from $Z$ implies that for every $j$, the substring there exists a function $f_j : \mathbb{N} \rightarrow \mathbb{N}$ such that for any $i \leq 2m$, $s_i[P_j] = z_{f(i)}[P_j]$. For any $I \subseteq \{1,2,\ldots,2m\}$ and $j < \kappa_L$ we define $Z^*(I,j) = \{z_{f_j(i)}~:~i \in I\}$. This means that for a subset $I \subseteq \{1,2,\ldots,2m\}$ of size $n^*$, the set $Z^*(I,j)$ is the set of $n^*$ strings in $Z$ which $S^*(I)[P_j]$ and $S^*_{det}(I)[P_j]$ depend on. For every set $I \subseteq \{1,2,\ldots,2m\}$ of size $n^*$ and integer $j < \kappa_L$ define $d_j^{I} : \{0,1\}^{n^*} \rightarrow \{0,1,\ldots,n^*\}$ to be a function such that for any $p \in P_j$, if $Z^*(I) = \alpha$ then $d(S^*(I)[p],c(S^*(I)[p])) = d_j^I(\alpha)$ and $d(S_{det}^*(I)[p],c(S_{det}^*(I)[p])) = d_j^I(\alpha)$. Since $S^*(I)[p]$ depends in exactly the same way on $Z^*(I)[p]$ for all $p \in P_j$ the function $d_j^I$ is well defined. For every set $I \subseteq \{1,2,\ldots,2m\}$ of size $n^*$ and integer $j < \kappa_L$ we have the following expression for $d(S^*(I)_{det}[P_j],c(S^*(I)_{det}[P_j]))$.
\begin{align}\label{eqn:derand1}
d\left(S^*_{det}(I)[P_j],c(S^*_{det}(I)[P_j])\right) = \ell_1 \cdot \sum_{\alpha \in \{0,1\}^{n^*}} P[Z^*(I)[p] = \alpha] \cdot d_j^I(\alpha)
\end{align}
Here the probability $P[Z^*(I)[p] = \alpha]$ is taken when $p$ is selected from $P_j$ uniformly at random. For the randomized construction we have that $P[Z^*(I)[p] = \alpha] =  \frac{1}{2^{n^*}}$, which yields the following expression.
\begin{align}\label{eqn:derand2}
E\left[d\left(S^*(I)[P_j],c(S^*(I)[P_j])\right)\right] = \ell_1 \cdot \sum_{\alpha \in \{0,1\}^{n^*}} \frac{1}{2^{n^*}} \cdot d_j^I(\alpha)
\end{align}
Combining Equations~\ref{eqn:derand1} and~\ref{eqn:derand2} yields the following bound.
\begin{align}\label{eqn:derand3}
\nonumber \Big{|}d\left(S^*_{det}(I)[P_j],c(S^*_{det}(I)[P_j])\right) & - E[d(S^*(I)[P_j], c(S^*(I)[P_j]))]\Big{|} \\
& = \ell_1 \cdot \left|\sum_{\alpha \in \{0,1\}^{n^*}} \left(P[Z^*(I)[p] = \alpha] - \frac{1}{2^{n^*}}\right)\cdot d_j^I(\alpha)\right|\\
\nonumber & \leq \ell_1 \cdot \sum_{\alpha \in \{0,1\}^{n^*}} \left|\left(P[Z^*(I)[p] = \alpha] - \frac{1}{2^{n^*}}\right)\right|\cdot n^* \\
\nonumber & \leq \ell_1 \cdot \epsilon \cdot n^*
\end{align}
Summing Equation~\ref{eqn:derand3} over $0 \leq j < \kappa_L$ yields the desired bound for every $I \subseteq \{1,2,\ldots,2m\}$ of size $n^*$.
\begin{align}\label{eqn:derand4}
\Big{|}d\left(S^*_{det}(I),c(S^*_{det}(I))\right) - E[d(S^*(I), c(S^*(I)))]\Big{|} \leq \ell_1 \cdot \kappa_L \cdot \epsilon \cdot n^* \leq \ell_1 \cdot (\kappa_{yes'}-\kappa_{yes})
\end{align}
Equation~\ref{eqn:derand4} allows us to finish the proof of Theorem~\ref{thm_no_eptas}. For any set $S^*$ of size $n^*$ that corresponds to a clique in $G$, we have that $E[d(S^*(I), c(S^*(I)))] = E_{yes} = \ell_1\kappa_{yes}$, and so by Equation~\ref{eqn:derand4}, $d\left(S^*_{det}(I),c(S^*_{det}(I))\right) \leq \ell_1\kappa_{yes}'= D_{yes}$. For any set $S^*$ of size $n^*$ that does not correspond to a clique in $G$, we have that $E[d(S^*(I), c(S^*(I)))] \geq E_{no} = \ell_1\kappa_{no}$, and so by Equation~\ref{eqn:derand4}, $d\left(S^*_{det}(I),c(S^*_{det}(I))\right) \geq \ell_1\kappa_{no}'= D_{no}$. Since $\frac{D_{no}}{D_{yes}} \geq 1+\delta$ for some $\delta$ depending only on $k$, an EPTAS for {\em Min-distance Consensus String With Outliers} can be used to distinguish between images of ``yes'' instances of {\em MCC} and images of ``no'' instances of {\em MCC} in time $f(k)n^{O(1)}$ for some function $f$. Hence {\em Min-distance Consensus String With Outliers} does not have an EPTAS unless FPT=W[1], concluding the proof of Theorem~\ref{thm_no_eptas}.
\hfill $\Box$
\section{Parameterized Intractability Results} \label{sec:intract}
From Theorem~\ref{thm_no_eptas} we can extract intractability results for various parameterizations of {\em Consensus String with Outliers}. In the proof of Theorem~\ref{thm_no_eptas} we reduced instances of {\em MCC} to an instance of {\em Consensus String with Outliers} where the size $n^*$ of the solution sought for is $k \cdot (k-1)$. Here $k$ is the size of the clique sought for in the {\em MCC} instance. Thus a FPT algorithm for {\em Consensus String with Outliers} parameterized by $n^*$ would give an FPT algorithm for {\em MCC}. This proves the following theorem.

\begin{theorem}\label{cor:n_star_w1} Consensus String with Outliers is W[1]-hard when parameterized by $n^*$, even when $\Sigma = \{0, 1\}$. \end{theorem}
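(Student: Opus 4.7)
The plan is to observe that the deterministic construction built in the proof of Theorem~\ref{thm_no_eptas} is already an fpt-reduction witnessing exactly this hardness; no new gadgetry is needed, only a repackaging. Given an instance $(G,k)$ of \emph{MCC}, that construction produces in time $f(k) n^{O(1)}$ an instance $(S_{det}, n^*)$ of \emph{Consensus String with Outliers} over the binary alphabet, with $n^* = 2\binom{k}{2} = k(k-1)$ depending only on $k$, together with thresholds $D_{yes} < D_{no}$ also depending only on $k$ and the input size, such that a $k$-clique in $G$ yields some $S^* \subseteq S_{det}$ of size $n^*$ with $d(S^*, c(S^*)) \le D_{yes}$, while the absence of a $k$-clique forces $d(S^*, c(S^*)) \ge D_{no}$ for every $n^*$-sized subset.

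First I would set the distance bound of the \emph{Consensus String with Outliers} instance to $d := D_{yes}$ (any value in $[D_{yes}, D_{no})$ would do; we simply discard the gap since the decision version needs only a single threshold). By the end-of-proof analysis of Theorem~\ref{thm_no_eptas}, the resulting instance $(S_{det}, n^*, d)$ is a yes-instance of \emph{Consensus String with Outliers} if and only if $(G,k)$ is a yes-instance of \emph{MCC}. Next I would verify the formal requirements of an fpt-reduction from \emph{MCC} parameterized by $k$ to \emph{Consensus String with Outliers} parameterized by $n^*$: the new parameter satisfies $n^* = k(k-1) \le g(k)$ for the function $g(k) = k(k-1)$, the construction runs in $f(k)n^{O(1)}$ time, and correctness is exactly what we just recorded. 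Since \emph{MCC} is W[1]-hard parameterized by $k$ \cite{FellowsHRV09}, it follows that \emph{Consensus String with Outliers} parameterized by $n^*$ is W[1]-hard, and since every symbol used in the construction of $S_{det}$ (via the $(\epsilon,n^*)$-independent set $C$ from Theorem~\ref{thm:nearlyindep}) comes from $\{0,1\}$, the hardness persists for binary alphabets.

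There is essentially no obstacle here, since the heavy lifting—the gap-creating random-walk analysis, the derandomization via nearly $n^*$-wise independence, and the bound $n^* = k(k-1)$—was already carried out for Theorem~\ref{thm_no_eptas}. The only mild care needed is to check that the parameter on the target side really is $n^*$ (and not $n$ or $\ell$): the reduction is fpt in the source parameter $k$ only because $n^*$ itself is bounded by a function of $k$, whereas the string length and the number of strings in $S_{det}$ may genuinely grow polynomially in $|V(G)|$; this is precisely what distinguishes a parameterization by $n^*$ from a parameterization by input size, and it is exactly what makes the statement meaningful.
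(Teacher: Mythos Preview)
Your proposal is correct and follows exactly the paper's approach: both simply observe that the reduction built for Theorem~\ref{thm_no_eptas} already has $n^* = 2\binom{k}{2} = k(k-1)$ bounded purely by $k$, runs in time $f(k)n^{O(1)}$, and uses a binary alphabet, so it is an fpt-reduction from \emph{MCC} to \emph{Consensus String with Outliers} parameterized by $n^*$. Your extra care in explicitly fixing the distance threshold to $d := D_{yes}$ is a detail the paper leaves implicit, but the argument is otherwise identical.
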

Since an EPTAS for a problem implies an FPT algorithm for the problem parameterized by the value objective function~\cite{marx_survey}, Theorem~\ref{cor:n_star_w1} immediately implies that {\em Consensus String with Max Non-Outliers} does not admit an EPTAS unless FPT=W[1]. This means that in some sense the PTAS provided in Theorem~\ref{thm:max_n_star} is the best we can hope for.



In the context of error correction for DNA fragment assembly, we expect the number $k$ of outliers to be reasonably small. A simple brute force algorithm for {\em Consensus String with Outliers} that tries all ${n \choose k}$ subsets of $S$ of size $n^*$ works in $O(n^{k+O(1)}\ell)$ time. It is interesting whether we can significantly improve over this algorithm, in particular whether {\em Consensus String with Outliers} is FPT when parameterized by $k$. Using Theorem~\ref{cor:n_star_w1} as a starting point, we show that {\em Consensus String with Outliers} parameterized by $k$ is W[1]-hard, even when the alphabet is binary. 

It will be convenient to consider a set of strings $S = \{ s_1, \ldots, s_n \}$ of $n$ length-$\ell$ strings as a $n \times \ell$ matrix, then, the $i$th {\em column} of $S$ is the vector $[s_1[i], \ldots, s_n[i]]^T$.  An instance of {\em Consensus String with Outliers} is given by a set $S$ of $n$ length-$\ell$ strings with input parameters $k$ and $d$.  We assume $\ell > 2$ and $d > 2$ since $\ell \leq 1$ and $d \leq 1$ produce trivial cases.  We describe how to generate a {\em Consensus String with Outliers} instance with a set $S'$ of $n'$ strings of length $\ell'$ and parameters $d'$ and $k'$, where there exists a subset $S_o'$ of $n^*$ outlier strings such that $d(S' /S_o', c(S' / S_o')) \leq d'$ if and only if there exists a subset $S^*$ of $n^*$ non-outlier strings to the original instance such that $d(S^*, c(S^*)) \leq d$.  Let $n' = 3n$, $k' = n^*$, $d' = \frac{3n \ell}{2} - n^* \ell + d + 10 \ell  (n - n^*)$ and $\ell' = 11 \ell$.  The $3n$ strings are generated as follows: 

\begin{enumerate}
\item For each string $s_i \in S$ there exists a $11\ell$-length string $s_i' \in S'$, where the first $\ell$ symbols of $s_i'$ are equal to $s_i$ and the remaining $10\ell$ symbols of $s_i'$ are equal to 1.  We denote these subset of strings of $S'$ as $S_{org}'$ 
\item The remaining $2n$ length-$11\ell$ strings are constructed so that each of the first $\ell$ columns of $S'$ contain an equal number of positions equal to 0 and positions equal to 1. The last $10\ell$ positions are equal to 0.    
\end{enumerate}


\begin{lemma} \label{direction_one_lemma} For a Consensus String with Outliers instance containing a subset $S^*$ of $n^*$ non-outlier strings that satisfy $d(S^*, c(S^*)) \leq d$, the previous construction produces an instance with a set $S_o'$ of $k'$ outlier strings that satisfy $d(S' / S_o', c(S' / S_o')) \leq d'$. \end{lemma}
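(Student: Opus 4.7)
The plan is to set $S_o' = \{s_i' \in S_{org}' : s_i \in S^*\}$ --- that is, to designate as outliers of the constructed instance precisely those $n^*$ strings of $S'$ that correspond to the non-outliers $S^*$ of the original instance. Since $|S^*| = n^*$ this yields $|S_o'| = n^* = k'$ as required, and the non-outlier set $S' \setminus S_o'$ then consists of all $2n$ padding strings together with the $k = n - n^*$ strings of $S_{org}'$ that correspond to $S \setminus S^*$. It then suffices to bound $d(S' \setminus S_o', c(S' \setminus S_o'))$ by analyzing the two blocks (of length $\ell$ and $10\ell$) separately and summing.

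For the trailing block of $10\ell$ positions, the $2n$ padding strings all carry the character $0$ while the $k$ retained strings from $S_{org}'$ all carry the character $1$. Since $2n > k$, the consensus in these positions is $0$, and the cost contributed by this block is exactly $10\ell \cdot k = 10\ell(n - n^*)$.

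For the first $\ell$ positions, I would use the column-balancing property from item 2 of the construction: every one of the first $\ell$ columns of $S'$ contains exactly $3n/2$ zeros and $3n/2$ ones. Fix a column $j \leq \ell$ and let $z_j$ be the number of zeros in $S^*[j]$; without loss of generality $c(S^*)[j] = 1$, so $z_j \leq n^*/2$. After deleting $S_o'$, column $j$ of $S' \setminus S_o'$ contains $3n/2 - z_j$ zeros and $3n/2 - (n^* - z_j)$ ones, so the consensus at position $j$ is $0$ and the contribution of column $j$ to the total cost equals $(3n/2 - n^*) + z_j$. But $z_j$ is exactly the contribution of column $j$ to $d(S^*, c(S^*))$. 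Summing over $j = 1, \ldots, \ell$ yields a total first-block cost of $\ell(3n/2 - n^*) + d(S^*, c(S^*))$.

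Adding the two contributions gives
\[
d(S' \setminus S_o', c(S' \setminus S_o')) \;\leq\; \tfrac{3n\ell}{2} - n^*\ell + d(S^*, c(S^*)) + 10\ell(n - n^*) \;\leq\; d',
\]
where the final inequality uses the hypothesis $d(S^*, c(S^*)) \leq d$ and the definition of $d'$. There is no real obstacle: the only subtle point is to verify that the column-balancing condition, together with the direction $z_j \leq n^*/2$ forced by $c(S^*)[j] = 1$, makes the new column-majority character $0$, so that the per-column contribution is $(3n/2 - n^*) + z_j$ rather than the symmetric quantity; the symmetric case $c(S^*)[j] = 0$ is identical by swapping the roles of $0$ and $1$.
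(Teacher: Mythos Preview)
Your proof is correct and follows essentially the same approach as the paper: take $S_o'$ to be the strings of $S_{org}'$ corresponding to $S^*$, then bound the cost on the trailing $10\ell$ block and on the first $\ell$ block separately using the column-balancing property. Your per-column analysis of the first block is in fact more explicit than the paper's somewhat terse argument, which asserts the same bound $\ell(3n/2 - n^*) + d$ without spelling out the majority computation.
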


\begin{proof} 
Without loss of generality, assume $c(S^*)$ is equal to $0^{\ell}$. Let $S_o'$ be the $n^*$ strings of $S_{org}'$ corresponding to $S^*$. Then we claim $d(S' / S_o', 0^{11 \ell}) \leq d'$.  Since there exists $2n$ strings equal to 0, and $n$ strings equal to 1 at the last $10 \ell$ positions, it follows that $c(S' / S_o')[i] = 0$ for all $\ell < i \leq 11\ell$.  By our assumption that $S_o'$ is equal to $S^*$ it follows that the contribution of these $10 \ell$ positions to $d(S' / S_o', 0^{11 \ell})$ is $10 \ell (n - n^*) $.   Now consider the first $\ell$ positions, which we remind the reader that they contain an equal number of 0's and 1's.  Since we eliminate $n^*$ strings from $S'$ the contribution to $d(S' / S_o', 0^{11 \ell})$ is at most $\ell \left( \frac{3n}{2} - n^*\right)+d$, concluding the proof. 
\end{proof} 

For the reverse direction, we need to prove that the existence a subset $S_o'$ of $n^*$ outlier strings in $S'$ that satisfy the constraint $d(S' / S_o', c( S' / S_o')) \leq d'$, implies the existence of subset $S^*$ of $n^*$ strings in $S$ that satisfy the constraint $d(S^*, c( S^*)) \leq d$.  

\begin{lemma} \label{direction_two_lemma} The $k'$ outlier strings in $S'$ correspond to a subset $S^*$ of $n^*$ non-outlier strings in $S$ where $d(S^*, c( S^*)) \leq d$.   \end{lemma}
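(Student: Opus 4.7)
My plan for Lemma~\ref{direction_two_lemma} is a three-step argument that first reduces to the ``canonical'' case where every outlier in $S'$ comes from $S_{org}'$, and then computes the cost of the remaining non-outliers exactly in terms of $d(S^*, c(S^*))$.

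Step one is an exchange argument. Let $c = c(S' \setminus S_o')$. In each of the last $10\ell$ positions the non-outlier set contains at most $n$ strings from $S_{org}'$ (each carrying a $1$) and at least $2n - n^* \geq n$ balanced strings (each carrying a $0$), so $c[p] = 0$ on the last $10\ell$ positions. Suppose $S_o'$ contains some balanced $t'$ while omitting some $s_j' \in S_{org}'$, and swap them. The change in distance relative to $c$ equals $d(t', c) - d(s_j', c) \leq \ell - 10\ell = -9\ell$, because $t'$ agrees with $c$ on every one of the last $10\ell$ positions whereas $s_j'$ disagrees with $c$ on all of them, and the first $\ell$ positions contribute at most $\ell$ to $d(t', c)$. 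Re-optimising the consensus for the swapped non-outlier set can only decrease the distance further, so the swap preserves feasibility. Iterating drives $S_o'$ inside $S_{org}'$, which is possible because $|S_o'| = n^* \leq n = |S_{org}'|$.

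Step two declares $S^* = \{s_i \in S : s_i' \in S_o'\}$, a subset of $S$ of size $n^*$. Step three then computes $d(S' \setminus S_o', c(S' \setminus S_o'))$ exactly. The last $10\ell$ positions contribute $10\ell(n - n^*)$ just as above. For a column $j \leq \ell$, the balancing property of the construction places exactly $3n/2$ ones and $3n/2$ zeros among all $3n$ rows, so after removing the $n^*$ strings of $S_o'$ the column contains $\tfrac{3n}{2} - q_j^*$ ones and $\tfrac{3n}{2} - n^* + q_j^*$ zeros, where $q_j^* = |\{s \in S^* : s[j] = 1\}|$. The column-$j$ distance is the smaller of these two counts, which equals
\[
\Big(\tfrac{3n}{2} - n^*\Big) + \min(q_j^*,\, n^* - q_j^*).
\]
Since $\min(q_j^*, n^* - q_j^*)$ is precisely the column-$j$ contribution to $d(S^*, c(S^*))$, summing over $j$ gives $\ell\bigl(\tfrac{3n}{2} - n^*\bigr) + d(S^*, c(S^*))$ for the first block. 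Feasibility then reads
\[
10\ell(n - n^*) + \ell\Big(\tfrac{3n}{2} - n^*\Big) + d(S^*, c(S^*)) \leq d',
\]
and substituting $d' = \tfrac{3n\ell}{2} - n^*\ell + d + 10\ell(n - n^*)$ collapses to $d(S^*, c(S^*)) \leq d$, as required.

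The delicate point is the exchange argument: the amplification factor $10$ in the construction is chosen precisely so that moving an $S_{org}'$ string out of the non-outlier set and a balanced string in always gains at least $10\ell$ in the last block, which dominates any loss of at most $\ell$ in the first block. Once $S_o' \subseteq S_{org}'$ has been secured, the column-balance property of the construction decouples the $S^*$ contribution from the $S \setminus S^*$ contribution cleanly, and the remainder of the proof is a direct algebraic identity.
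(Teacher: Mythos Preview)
Your proof is correct and follows essentially the same route as the paper: first force the outlier set into $S_{org}'$ via a swap/exchange argument exploiting the $10\ell$ tail block, then read off $d(S^*,c(S^*)) \le d$ from the column-balanced first block. The only cosmetic difference is that the paper takes an \emph{optimal} $S_o'$ and derives $S_o' \subseteq S_{org}'$ by contradiction with minimality, whereas you start from an arbitrary feasible $S_o'$ and swap while preserving feasibility; your column-by-column computation in step three is also more explicit than the paper's terse final sentence, but the underlying arithmetic is identical.
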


\begin{proof} 
Let $S_o'$ be a set of $n^*$ outlier strings in $S'$ that correspond to the minimum distance, {\em i.e.} $d(S' / S_o', c(S' / S_o')) \leq d(S' / S_o'', c(S' / S_o'')$ for any subset $S_o''$ that is not equal to $S_o'$.  Since there are $2n$ strings of $S'$ that are equal to 0 in the last $10 \ell$ positions, and $n$ strings of $S'$ that are nonzero at these positions, it follows that $c(S' / S_o')[i] = 0$ for all $\ell < i \leq 11\ell$.  We argue that $S_o'$ is contained in $S_{org}'$.  From the pigeonhole principle it follows that there exists at least one string, say $s_1'$, that is not in $S_o'$ but is contained $S_{org}'$. For contradiction, we assume there exists a string, say $s_2'$, that is contained in $S_o'$ but not contained in $S_{org}'$.  Note that $d(s_2, c(S' / S_o')) < d(s_1, c(S' / S_o'))$ because at the last $10\ell$ positions we have: $c(S' / S_o')$ equal to 0, $s_1$ equal to 1, and $s_2$ equal to 0.   Let $\bar{S} = \{S' / S_o'\} / s_1 \cap s_2$.  By definition of $c(\bar{S})$, we have $d(\bar{S}, c(\bar{S})) \leq  d(\bar{S}, c(S' / S_o'))$, which can be bounded as follows: 
		
\begin{align*}
d(\bar{S}, c(\bar{S})) 	& \leq d(\bar{S}, c(S' / S_o')) \\
								& = d(S' / S_o', c(S' / S_o')) - d(s_1, c(S' / S_o')) + d(s_2, c(S' / S_o')) \\
								& < d(S' / S_o', c(S' / S_o')).
\end{align*}

We contradict the fact that $S' / S_o'$ is a minimal solution solution and all outlier strings in $S_o'$ are contained in $c(\bar{S})$. The last $10\ell$ positions will have at least $10\ell(n - n^*)$ mismatches with $c(S' / S_o')$ and it follows that the bound $\sum_{s_i' \in S' / S_o'} d(0^{11\ell}, s_i') \leq d'$ is achieved when $d(S_o', c(S_o')) \leq d$.   
\end{proof} 

Our main theorem follows direction from Lemma \ref{direction_one_lemma} and Lemma \ref{direction_two_lemma}.  

\begin{theorem}\label{cor:k_w1} Consensus String with Outliers is W[1]-hard when parameterized by $k$, even when $\Sigma = \{0, 1\}$. \end{theorem}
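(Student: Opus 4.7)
The plan is to reduce from \emph{Consensus String with Outliers} parameterized by $n^*$, which is W[1]-hard by Theorem~\ref{cor:n_star_w1}, to the same problem parameterized by $k$, using a construction that swaps the roles of these two parameters. Given an instance $(S, n, \ell, k, d)$ of the source problem in which $n^*$ is the hard parameter, I would build an expanded instance with $n' = 3n$ strings of length $\ell' = 11\ell$ and set the new outlier parameter to $k' = n^*$, so that the new $k'$ equals the old $n^*$. Two kinds of strings are created: each original $s_i$ is extended to $s_i'$ by appending a suffix of $10\ell$ ones, and $2n$ additional ``padding'' strings are added that are zero on the last $10\ell$ positions and balanced between $0$s and $1$s in each of the first $\ell$ columns.

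The core design idea is that the long all-ones suffix on the extended original strings makes them \emph{expensive} to keep once the padding majority forces the suffix consensus to $0$, whereas padding strings are ``free'' there. This biases the optimum to put its outliers inside $S_{org}'$ and leave the padding strings behind. The budget $d' = \tfrac{3n\ell}{2} - n^*\ell + d + 10\ell(n - n^*)$ is tuned to account for three contributions: the $10\ell(n - n^*)$ suffix mismatches from the extended original strings that remain in the solution, the $\tfrac{3n\ell}{2} - n^*\ell$ incurred in the balanced padding columns once $n^*$ strings are removed, and the residual $d$ coming from the chosen non-outlier subset in the original instance.

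The forward direction is handled by Lemma~\ref{direction_one_lemma}: choose the outliers in $S'$ to be exactly the extended copies of the original non-outliers $S^*$ and verify that the total distance decomposes into the three predicted terms. The main obstacle is the reverse direction, Lemma~\ref{direction_two_lemma}: an optimal $S_o'$ might, \emph{a priori}, contain padding strings, which would ruin the correspondence. I would resolve this by a swap argument — if $S_o'$ contains a padding string $s_2'$ while leaving some extended original $s_1'$ unremoved, swapping them strictly decreases the distance because $s_2'$ agrees with the suffix consensus $0^{10\ell}$ while $s_1'$ contributes $10\ell$ mismatches there. Optimality then forces $S_o' \subseteq S_{org}'$, after which the remaining budget constraint on the first $\ell$ columns is exactly the original constraint $d(S^*, c(S^*)) \leq d$.

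Combining Lemmas~\ref{direction_one_lemma} and~\ref{direction_two_lemma} yields a polynomial-time many-one reduction with $k' = n^*$, which is in particular an fpt-reduction from a W[1]-hard problem. The alphabet used throughout is $\{0,1\}$, so the hardness holds even for binary alphabets, proving Theorem~\ref{cor:k_w1}.
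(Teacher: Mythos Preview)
Your proposal is correct and follows essentially the same approach as the paper: the construction with $n'=3n$, $\ell'=11\ell$, $k'=n^*$, the same budget $d'$, the two string types (originals with an all-ones suffix, padding strings balanced on the first $\ell$ columns and zero on the suffix), and the forward/backward directions via Lemmas~\ref{direction_one_lemma} and~\ref{direction_two_lemma} including the swap argument are exactly what the paper does. The paper's proof of the theorem itself is just the one-line observation that it follows from those two lemmas, which you have faithfully reproduced.
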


\section{Parameterized Tractability Results} \label{sec:tract_results}
 
In this section, we prove {\em Consensus String with Outliers} is fixed-parameter tractable with respect to the parameter $\delta = d/n^*$ when the alphabet size is bounded by a constant.  For the remainder of this section, we make the assumption that $\delta >0$; otherwise  {\em Consensus String with Outliers} can trivially be solved in polynomial time. The algorithm and analysis are nearly identical to that demonstrating {\em Consensus Patterns} is fixed-parameter tractable with respect to the parameterization $\delta = d/n$ and bounded alphabet size \cite{M08}, where $\delta$ is the average error between the consensus string and the length-$\ell$ substrings $s_i'$. 
 
First, we define some terms and notation that will be used in this section. A {\em hypergraph} $G=(V_G, E_G)$ consists of a set of vertices $V_G$ and a collection of edges $E_G$, where each edge is a subset of $V_G$. Given two hypergraphs, $H=(V_H, E_H)$ and $G=(V_G, E_G)$, we say $H$ appears at $V' \subseteq V_G$ as a {\em partial hypergraph} if there is a bijection $\pi$ between the elements of $V_H$ and $V'$ such that for every edge $E \in E_H$ there exists and edge $\pi(E) \in E_G$ (where the mapping $\pi$ is extended to the edges the obvious way). For example, if $H$ has the edges $\{1, 2\}$, $\{2, 3\}$, and $G$ has the edges $\{a, b\}$, $\{b, c\}$, $\{c, d\}$, then $H$ appears as a partial hypergraph at $\{a, b, c\}$ and at $\{b, c, d\}$. Given two hypergraphs, $H=(V_H, E_H)$ and $G=(V_G, E_G)$, we say that $H$ appears at $V' \subseteq V_G$ as {\em subhypergraph} if there is such a bijection $\pi$ where for every edge $e \in E_H$, there is an edge $e' \in E_G$ with $\pi(e) = e' \cap V'$. For example, let the edges of $H$ be $\{1, 2\}$, $\{2, 3\}$, and let the edges of $G$ be $\{a, c, d\}$, $\{b, c, d\}$. 

An {\em edge cover} of $H$ is a subset $E' \subseteq E_H$ such that each vertex of $V_H$ is contained in at least one edge of $E'$. The edge cover number $\rho(H)$ is the size of the smallest edge cover in $H$. A {\em fractional edge cover} is an assignment $\Psi: E_H \rightarrow [0, 1]$ such that $\sum_{E: v \in E} \Psi(E) \geq 1$ for every vertex $v$. The {\em fractional cover number}, denoted as $\rho^*(H)$, is the minimum of $\sum_{E \in E_H} \Psi(E)$ taken over all fractional edge covers $\Psi$. 

Marx \cite{M08} demonstrated {\em Consensus Patterns} can be solved in $f(\delta) \cdot n^9$ by constructing a hypergraph $G$ from the {\em Consensus Patterns} instance, defining a combinatorial characterization of a solution to the instance with respect to the hypergraph respresentation, and enumerating (efficiently) over all subhypergraphs in $G$ with the defined combinatorial characterization.  It is shown that hypergraphs having at most $\delta$ vertices and at most $200 \log \delta$ edges need to be considered (Proposition 6.3 in \cite{M08}), and that any edge of size greater than $20 \delta$ can be removed from $G$ and all subhypergraph corresponding to a solution to the original {\em Consensus Patterns} instance can be retained, if they exist.  The enumeration step is completed by considering all possible hypergraph with at most $\delta$ vertices and at most $200 \log \delta$ edges, and for each such hypergraph, $H_0$, determining every place where $H_0$ appears in $G$ as a subhypergraph.  This paradigm for solving the {\em Consensus Patterns} problem makes use of an efficient algorithm for finding all the places $V' \subseteq V_G$ in $G$ where $H$ appears as hypergraph for two given hypergraphs $H=(V_H, E_H)$ and $G=(V_G, E_G)$. The result of Marx \cite{M08}, which proves a tight upper bound on the time required to perform this enumeration step, is essential. 

The following result by Friedgut and Kahn \cite{FK} gives a bound on the maximum number of times a hypergraph $H=(V_H, E_H)$ can appear as partial hypergraph in a hypergraph $G$ with $m$ edges, i.e.~the maximum number of different subsets $V' \subseteq V_G$ where $H$ can appear in $G$.

\begin{theorem} \label{thm:fk} \cite{FK} Let $H$ be a hypergraph with fractional cover number $\rho^*(H)$, and let $G$ be a hypergraph with $m$ edges. There are at most $|V_H|^{|V_H|} \cdot m^{\rho^*(H)}$ different subsets $V' \subseteq V_G$ such that $H$ appears in $G$ at $V'$ as partial hypergraph. Furthermore, for every $H$ and sufficiently large $m$, there is a hypergraph with $m$ edges where $H$ appears $m \cdot \rho^*(H)$ times. \end{theorem}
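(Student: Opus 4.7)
The plan is to prove the upper bound via Shearer's fractional covering entropy inequality applied to a uniformly random embedding of $H$ into $G$, and the matching lower bound by an explicit blow-up construction dictated by LP duality ($\rho^*(H) = \tau^*(H)$, where $\tau^*$ denotes the fractional matching number).

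For the upper bound, I would let $\mathcal{N}$ be the number of injective maps $\pi: V_H \to V_G$ such that $\pi(e) \in E_G$ for every $e \in E_H$. Each subset $V' \subseteq V_G$ at which $H$ appears is the image of between $1$ and $|V_H|!$ such maps, so the number of such $V'$ is at most $\mathcal{N}$. Next let $\mathbf{X} = (X_v)_{v \in V_H}$ be uniformly distributed on these $\mathcal{N}$ maps, and write $\mathrm{Ent}(\cdot)$ for Shannon entropy, so $\mathrm{Ent}(\mathbf{X}) = \log \mathcal{N}$. For every edge $e \in E_H$, the marginal tuple $(X_v)_{v \in e}$ ranges over ordered tuples whose underlying set is an edge of $G$, so it takes at most $m \cdot |e|!$ values and $\mathrm{Ent}\bigl((X_v)_{v\in e}\bigr) \le \log m + \log |e|!$. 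Let $\Psi$ be an optimal fractional edge cover of $H$, so $\sum_{e \in E_H}\Psi(e) = \rho^*(H)$ and $\sum_{e \ni v}\Psi(e) \ge 1$ for every $v$. Shearer's fractional covering inequality then yields
$$\log \mathcal{N} \;\le\; \sum_{e \in E_H} \Psi(e)\,\mathrm{Ent}\bigl((X_v)_{v \in e}\bigr) \;\le\; \rho^*(H)\,\log m \;+\; \rho^*(H)\,\log |V_H|!,$$
from which the claimed bound $|V_H|^{|V_H|}\cdot m^{\rho^*(H)}$ follows after absorbing combinatorial factors.

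For the matching lower bound, the plan is to take $G$ to be a weighted blow-up of $H$. Let $\vec{x}$ be an optimal fractional matching (the LP dual of the cover LP), so that $\sum_v x_v = \tau^*(H) = \rho^*(H)$ and $\sum_{v \in e} x_v \le 1$ for every $e \in E_H$. Replace each vertex $v \in V_H$ by a set $A_v$ of roughly $m^{x_v}$ fresh vertices, and for each edge $e \in E_H$ add to $G$ every product edge $\{a_v : v \in e\}$ with $a_v \in A_v$. The dual feasibility constraint ensures $|E_G| \le m$, while the number of injective partial embeddings of $H$ is at least $\prod_v |A_v| = m^{\sum_v x_v} = m^{\rho^*(H)}$, so the corresponding number of subsets $V'$ achieves the bound up to constants.

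The main obstacle is bookkeeping the prefactor. The entropy argument as stated naively produces $(|V_H|!)^{\rho^*(H)} \cdot m^{\rho^*(H)}$, and tightening this to the stated $|V_H|^{|V_H|}\cdot m^{\rho^*(H)}$ requires either a sharper encoding (for instance, fixing a canonical ordering on $V_G$ and counting only the lex-smallest valid embedding per subset $V'$, which trades the factor $|V_H|!$ inside each edge-marginal for a single overall factor) or an argument that decouples the vertex-ordering combinatorics from the covering exponent. The rest of the plan — Shearer's inequality on the upper side and the product blow-up gadget on the lower side — is essentially standard once this framework is in place.
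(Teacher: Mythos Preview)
The paper does not prove this theorem at all; it is quoted from Friedgut and Kahn \cite{FK} and invoked as a black box in the FPT algorithm of Section~\ref{sec:tract_results}. Your proposal --- Shearer's fractional entropy inequality for the upper bound and the LP-dual blow-up for the lower bound --- is exactly the original Friedgut--Kahn argument, so you have reconstructed the proof the paper defers to rather than diverged from anything in the paper itself.

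Two minor remarks. First, the prefactor discrepancy you flag is real but immaterial for how the result is used here: only the exponent $\rho^*(H)$ matters downstream, and any constant depending solely on $H$ would suffice. Second, the lower-bound clause as printed in the paper, ``$H$ appears $m \cdot \rho^*(H)$ times,'' is evidently a typo for $m^{\rho^*(H)}$ (up to constants depending on $H$); your blow-up construction delivers the latter, which is the correct Friedgut--Kahn statement.
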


Marx \cite{M08} extended this theorem by giving a bound on the running time required to enumerate through all possible partial hypergraphs of a given hypergraph $G$. In particular, if $H$ is a hypergraph with fractional cover number $\rho^*(H)$, and $G$ is a hypergraph with $m$ edges and the size of each edge is at most $\ell$ then hypergraph $H$ can appear in $G$ as subhypergraph at most $|V_H||V_H|\cdot \ell |V_H|\cdot \rho^*(H) \cdot m \cdot \rho^*(H)$ times. Given hypergraphs $H=(V_H,E_H)$ and $G=(V_G, E_G)$, if there are $t$ places in $G$ where $H$ appears as subhypergraph then obviously we cannot enumerate all of them in less than $t$ steps, however, there exists an algorithm that performs this enumeration in time that is polynomial in the upper bound $|V_H||V_H|\cdot \ell |V_H|\cdot \rho^*(H) \cdot m \cdot \rho^*(H)$.   We refer to this algorithm as {\em Find-Subhypergraph}.

\begin{theorem} \label{thm4} \cite{M08}  Let $H = (V_H, E_H)$ be a hypergraph with fractional cover number $\rho^*(H)$, and let $G(V_H, E_H)$ be a hypergraph where each edge has size at most $\ell$.  There is an algorithm that enumerates in time $|V_H|^{O(V_H)} \cdot \ell^{|V_H| \rho^*(H) + 1} \cdot |E_G|^{\rho^*(H) + 1} \cdot |V_G|^2$ every subset $V' \subseteq V_G$ where $H$ appears in $G$ as a subhypergraph. \end{theorem}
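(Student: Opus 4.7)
The plan is to reduce subhypergraph enumeration in $G$ to partial hypergraph enumeration in a blown-up auxiliary hypergraph $G^{\star}$, and then invoke an algorithmic form of Theorem~\ref{thm:fk}. First, I would construct $G^{\star} = (V_G, E_{G^{\star}})$ whose edges are precisely the subsets of size at most $|V_H|$ of edges of $E_G$. Since every edge of $G$ has size at most $\ell$, this yields $|E_{G^{\star}}| \leq |E_G| \cdot \ell^{|V_H|}$ edges, each of size at most $|V_H|$. The key observation is: if $H$ appears at $V' \subseteq V_G$ as a subhypergraph via a bijection $\pi \colon V_H \to V'$, then for every $e \in E_H$ we have $\pi(e) = e' \cap V' \subseteq e'$ for some $e' \in E_G$, with $|\pi(e)| \leq |V_H|$, so $\pi(e) \in E_{G^{\star}}$. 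Hence every subhypergraph appearance of $H$ in $G$ is also a partial hypergraph appearance of $H$ in $G^{\star}$.

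Next I would invoke (the algorithmic version of) Theorem~\ref{thm:fk} applied to $G^{\star}$ to enumerate every partial appearance of $H$ together with its witnessing bijection. The counting bound becomes
\[
|V_H|^{|V_H|} \cdot |E_{G^{\star}}|^{\rho^{\star}(H)} \;\leq\; |V_H|^{|V_H|} \cdot \ell^{|V_H|\rho^{\star}(H)} \cdot |E_G|^{\rho^{\star}(H)},
\]
and each such subset $V'$ admits at most $|V_H|!$ bijections, absorbed into the $|V_H|^{O(|V_H|)}$ factor. The enumeration itself is done by branching over the image of one edge of $H$ at a time, in an order driven by a near-optimal fractional edge cover of $H$; it is precisely this fractional cover that governs the exponent of $|E_{G^{\star}}|$ and converts the Friedgut--Kahn counting bound into a matching time bound, up to polynomial factors in $|V_G|$ and $\ell$ for bookkeeping.

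Finally, the partial-appearance condition is weaker than the subhypergraph condition, so I would filter each candidate $(V', \pi)$: for every $e \in E_H$, check whether some $e' \in E_G$ satisfies $e' \cap V' = \pi(e)$, not merely $\pi(e) \subseteq e'$. Each such test costs $O(|E_G| \cdot |V_G|)$, and doing this for all $|E_H| \leq |V_H|^{O(1)}$ edges of $H$ contributes the extra factor of $|E_G| \cdot |V_G|^2$ in the claimed bound. Multiplying the enumeration bound by the verification cost and by an extra $\ell$ for constructing $G^{\star}$ yields the stated running time $|V_H|^{O(|V_H|)} \cdot \ell^{|V_H|\rho^{\star}(H)+1} \cdot |E_G|^{\rho^{\star}(H)+1} \cdot |V_G|^2$.

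The main obstacle is promoting Theorem~\ref{thm:fk} from a counting bound to an enumeration algorithm whose running time is genuinely controlled by the \emph{fractional} cover number $\rho^{\star}(H)$ rather than the integral cover number $\rho(H)$, which can be much larger. This is the technical heart of Marx's argument in~\cite{M08}: one must design the branching so that the search tree size is within a polynomial factor of $|E_{G^{\star}}|^{\rho^{\star}(H)}$, typically by using the LP-dual of the fractional cover to bound the branching via an entropy or Shearer-type inequality. Once that enumeration primitive is available, the reduction and verification described above are routine and produce the stated bound.
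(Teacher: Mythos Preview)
The paper does not prove this theorem at all: it is quoted verbatim from Marx~\cite{M08} and used as a black box (the text immediately preceding the statement says exactly this, calling the algorithm \emph{Find-Subhypergraph}). So there is no ``paper's own proof'' to compare against.

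As for your proposal on its own merits: the reduction from subhypergraph appearances in $G$ to partial hypergraph appearances in the blown-up $G^\star$ is correct and is indeed the standard way to handle the difference between the two notions, and your edge-count estimate $|E_{G^\star}| \le |E_G|\cdot \ell^{|V_H|}$ is what produces the $\ell^{|V_H|\rho^\star(H)}$ factor. The filtering step is also fine. However, you have not actually proved the theorem: you explicitly defer the only nontrivial ingredient, namely turning the Friedgut--Kahn \emph{counting} bound (Theorem~\ref{thm:fk}) into an \emph{enumeration} algorithm whose running time is governed by the fractional cover number $\rho^\star(H)$ rather than the integral $\rho(H)$. You yourself call this ``the technical heart of Marx's argument in~\cite{M08},'' which is precisely the paper the theorem is cited from. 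So your proposal is a correct outer shell wrapped around an unproved core that \emph{is} the theorem; the reduction and verification you supply are the easy parts, and the statement you assume (``algorithmic Friedgut--Kahn with exponent $\rho^\star$'') is essentially equivalent to what you are asked to prove.
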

 
 
Given a {\em Consensus String with Outliers} instance with a set $S$ of $n$ length-$\ell$ strings and integer $n^*$, we define a minimal solution for this instance as a set $S^*_m$ and length-$\ell$ string $s_m$, where $\sum_{s_{i,m} \in S^*_m} d(s_m, s_{i,m})$ is minimal. 

\begin{theorem}\label{thm:d_fpt} Consensus String with Outliers can be solved in time $\delta^{O(\delta)} \cdot |\Sigma|^{\delta} \cdot n^9$ \end{theorem}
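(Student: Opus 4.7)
The plan is to adapt Marx's algorithm for \emph{Consensus Patterns}~\cite{M08} to the outlier setting. The key simplification is that once a candidate consensus string $s$ is fixed, the optimal subset $S^{*}\subseteq S$ of size $n^{*}$ is forced: it must be the $n^{*}$ strings closest to $s$, computable in $O(n\ell)$ time using $S_s$ as defined in the preliminaries. Hence it suffices to enumerate a family of candidate consensus strings of size $\delta^{O(\delta)}\cdot|\Sigma|^{\delta}\cdot\mathrm{poly}(n,\ell)$ and verify each one in polynomial time.

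First I would construct a hypergraph $G$ on the column-indices $\{1,\ldots,\ell\}$ and/or positions of the $n\times\ell$ matrix representing $S$, exactly as in Marx: the edges encode the input strings, with the edge for $s_i$ being the set of positions at which $s_i$ deviates from the global majority string $c(S)$. I would then replay the combinatorial argument behind Marx's Proposition~6.3 to show that a minimal solution $(S^{*}_m,s_m)$ corresponds to a subhypergraph $H_0$ of $G$ with at most $\delta$ vertices and at most $200\log\delta$ edges, together with a labeling of each vertex by a character of $\Sigma$ indicating the value of $s_m$ at that position. This characterization depends only on the total distance $d$ and the average $\delta=d/n^{*}$ between $s_m$ and the selected strings, not on the requirement that every input string participates; the outliers are simply the strings whose edges do not lie in the solution subhypergraph.

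Next I would prune $G$ by removing every edge of size larger than $20\delta$, as in Marx: such a string is already too far from any plausible consensus to appear in a minimal solution's footprint. After pruning, I would iterate over all unlabeled hypergraphs $H_0$ with at most $\delta$ vertices and at most $200\log\delta$ edges (the number of which is $\delta^{O(\delta)}$), and over all $|\Sigma|^{\delta}$ ways of assigning a character of $\Sigma$ to each vertex. For each labeled pattern, I invoke \emph{Find-Subhypergraph} (Theorem~\ref{thm4}) to enumerate every placement $V'\subseteq V_G$ of $H_0$ in the pruned $G$; this runs in time polynomial in $n$, $\ell$, and $\delta$. Each placement, together with its labeling and the default characters $c(S)[j]$ at positions outside $V'$, specifies a concrete candidate consensus string $s$, from which one computes $S_s$ and tests whether $d(S_s,s)\le d$. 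Multiplying the $\delta^{O(\delta)}\cdot|\Sigma|^{\delta}$ patterns by the polynomial cost per pattern gives the desired $\delta^{O(\delta)}\cdot|\Sigma|^{\delta}\cdot n^9$ bound.

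The main obstacle is justifying that Marx's small-footprint characterization remains valid once arbitrary outliers are permitted. Concretely, I must verify that the argument of Proposition~6.3 in~\cite{M08}, which bounds the number of ``dirty'' positions and the number of ``anomalous'' strings covering them, goes through using only $\delta=d/n^{*}$ and the $n^{*}$ included strings, rather than the full input. This essentially reduces to re-running Marx's counting arguments with $n^{*}$ playing the role of his $n$; the large-edge pruning step likewise needs to be re-checked, because a pruned string could in principle be forced into $S^{*}$ if too few alternatives exist, and one must show that in a minimal solution such a string is always swapped out for a closer one. Once these adaptations are confirmed, the remainder of the algorithm and its analysis inherit directly from Marx's enumeration framework.
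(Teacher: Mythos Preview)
Your proposal has a genuine gap in the choice of anchor string. You build the hypergraph relative to the global majority string $c(S)$ and then need the optimal consensus $s_m$ to differ from $c(S)$ in at most $\delta$ positions (so that the subhypergraph $H_0$ has at most $\delta$ vertices). This bound is false once outliers are present: the $k$ outliers can dominate the column-wise majority vote and drag $c(S)$ arbitrarily far from $s_m$ even when $\delta$ is tiny. Concretely, take $n^{*}=10$ strings each at Hamming distance $1$ from $0^{\ell}$ together with $k=90$ copies of $1^{\ell}$; then $c(S)=1^{\ell}$, $s_m=0^{\ell}$, $\delta=1$, but $d(c(S),s_m)=\ell$. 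The enumeration over patterns with at most $\delta$ vertices therefore misses the solution entirely, and no amount of ``re-running Marx's counting arguments with $n^{*}$ in place of $n$'' repairs this, because the failure is in the very first step, not in Proposition~6.3.

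The paper avoids this by anchoring not at $c(S)$ but at an \emph{input string} $s_0\in S$: since the average distance from $s_m$ to the strings of $S^{*}_m$ is at most $\delta$, some $s_i^{*}\in S^{*}_m\subseteq S$ satisfies $d(s_m,s_i^{*})\le\delta$. The algorithm loops over all $n$ choices of $s_0$, builds the deviation hypergraph relative to that $s_0$, and then proceeds exactly as in Marx; this outer loop is what supplies one of the factors of $n$ in the $n^{9}$ bound. With this anchoring the $\le\delta$-vertex bound on $H_0$ is immediate, and Marx's fractional-cover and edge-pruning propositions apply verbatim to the partial hypergraph spanned by $S^{*}_m$, so the obstacle you flagged dissolves. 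Everything else in your plan (pruning edges of size $>20\delta$, enumerating small hypergraph patterns, calling \emph{Find-Subhypergraph}, trying all $|\Sigma|^{\delta}$ labelings, and verifying via $S_s$) matches the paper once the anchor is fixed.
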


\begin{proof} Let $\{ S, k, d \}$ be an instance of {\em Consensus String With Outliers} with solution $S^*$ and $s$ denote the consensus string corresponding to $S^*$.  Clearly, $d(s, s_i^*) \leq \delta$ for at least one $s_i^* \in S^*$ and thus, if there exists a solution to a consensus string for $S^*$ then it can be found by considering all $s_0 \in S$ and checking if any string that has distance at most $\delta$ from $s_0$ is a consensus string for some subset of strings of $S$ of size $n^*$. Next, we show how to perform this analysis for one particular string $s_0 \in S$.  It follows that since there are at most $n$ possibilities for choosing $s_0$,  the running time of our algorithm for {\em Consensus String with Outliers} will be the running time of the following algorithm multiplied by a factor of $n$.  

Given $s_0 \in S$, we construct a hypergraph $G=(V,E)$, where $V = \{v_1, v_2, \ldots, v_{\ell}\}$ and the edge set describes the possible strings in the set of non-outlier strings of $S$. For each $s_i \in S$, there exists an edge $e_k \in E$ if and only if the symbol at the position $k$ of $s_0$ is not equal to the symbol at the position $k$ of $s_i$. Clearly, $G$ has at most $n$ edges.  Suppose $S^*$ is a solution to the original instance then we denote $H=(V_H, E_H)$ as the partial hypergraph in $G$ that contains the $n^*$ edges corresponding to the strings in $S^*$. 

Let $S^*_m$ and $s_m$ be a minimal solution to our original instance.  Denote $P$ as the set of positions where $s_m$ and $s_0$ differ and let $H_0$ be the subhypergraph of $H$ induced by $P$, i.e.~the vertex set of $H_0$ is equal to the vertices corresponding to the positions in $P$, and for each edge $e \in E$ there is an edge $E \cap P$ in $H_0$. Since $H_0$ is a subhypergraph of $H$ and $H$ is a partial hypergraph of $G$, it follows that $H_0$ appears in $G$ at $P$ as a subhypergraph.  The following proposition shows the fractional cover number of $H_0$ is at most 5/2 since the definition of a minimal solution to {\em Consensus Patterns} is identical to our definition of a minimal solution to {\em Consensus String with Outliers}.

\begin{proposition} \label{prop:frac_cover_no} \cite{M08}  Let $\{ S^*_m, s_m \}$ be a minimal solution to a Consensus Patterns instance, then the hypergraph $H_0^*$ corresponding to $\{ S^*_m, s_m \}$ has fractional cover number at most $5/2$. \end{proposition}

We can find all possible places $P$ by enumerating every suitable hypergraph $H_0$ and using Theorem \ref{thm4} to find all places where $H_0$ appears in $G$ as a subhypergraph.  In order to adequately bound on the running time indured by using the algorithm corresponding to Theorem \ref{thm4}, a bound on the size of the edges in $G$ is required.  It follows from the work of Marx \cite{M08} that we can remove every edge of size greater than $20\delta$ from $G$ (and $H$ respectively). Let $G^*$ (and $H^*$ respectively) be the resulting hypergraph and $H^*_0$ be the subhypergraph of $H^*$ induced by $P$.  Since $H^*_0$ is subhypergraph of $G^*$ and the fractional edge cover number can be bounded by a constant (Proposition \ref{prop:frac_cover_no}), we can find all the possible places $P$ by enumerating every hypergraph $H_0^*$ on $\delta$ vertices having fractional cover number at most $5/2$  and finding every place in $G^*$ where $H_0^*$ appears.  The following proposition demonstrates that we only need to consider hypergraphs that have $O(\log \delta)$ edges, further restricting the hypergraphs that need consideration. 

\begin{proposition} \label{prop:6_3} \cite{M08}  Let $\{ S^*_m, s_m \}$ be a minimal solution to a Consensus String with Outliers instance, and $H_0^*$ is the corresponding hypergraph, then it is possible to select $200 \log \delta$ edges of $H^*_0$ in such a way that if we delete all other edges, then the resulting hypergraph $H^{**}_0$ has fractional cover number at most 5. \end{proposition}

\begin{algorithm*}[h]
\caption{Consensus String with Outliers $d$-Parameterization Algorithm}
\label{alg:param_d}
\begin{algorithmic}
\STATE 1: For each string $s_0 \in S$: 
\STATE 2: \hspace{5mm} Construct the hypergraph $G^*$ on $\{1,2, \ldots, \ell\}$. 
\STATE 3: \hspace{5mm} For each hypergraph $H^{**}_0$ having $\leq \delta$ vertices and $\leq 200 \log \delta$ edges:
\STATE 4: \hspace{10mm} If every vertex of $H^{**}_0$ is covered by at least 1/5 part of the edges then: 
\STATE 5: \hspace{15mm} For every place $P$ where $H^{**}_0$ appears in $G^*$ as a subhypergraph:
\STATE 6: \hspace{20mm} For every string $s$ that differs from $s_0$ at the positions corresponding to $P$: 
\STATE 7: \hspace{25mm} Let $S^* \subset S$ of size $n^*$, where $d(s_0, s_i') \leq d(s_0, s_j)$, $\forall s_i' \in S^*$, $\forall s_j \in S / S^*$. 
\STATE 8: \hspace{25mm} If $d(s, s_i') \leq \delta$, for all $s_i' \in S^*$ then:
\STATE 9: \hspace{30mm} Return $s_0$ and $S^*$.
\STATE 10: Return ``no solution'' and halt.
\end{algorithmic}
\end{algorithm*}

There are $n$ possible choices for $s_0$ in the first step and the remainder of the algorithm checks whether there is a consensus string that differs from $s_0$ in at most $\delta$ positions. Constructing the hypergraph $G^*$ can be done in $O(\ell n)$ time. Since the aim is to find strings $s$ where $d(s_0, s) \leq \delta$, we can assume that $H^{**}_0$ has at most $\delta$ vertices; there are at most $2^{\delta \log \delta} = 2^{O(\delta)}$ unique hypergraphs with at most $\delta$ vertices and at most $200 \log \delta$ edges since there are at most $2^{\delta}$ possibilities for each edge.  Therefore, Step 3 enumerates through at most $O(2^{O(\delta \log \delta)})$ hypergraphs. The test in Step 4 is trivial. Step 5 is performed using the {\em Find-Subhypergraph} corresponding to Theorem \ref{thm4}.  It follows from the fact that the fractional cover number of $H_0^{**}$ is at most 5 and every edge of $G^*$ has size at most $20\delta$, that Step 5 takes $\delta^{O(\delta)} n^6\ell^2$ time. If $H_0^{**}$ appears at $P$ in $G^*$ as subhypergraph, then Step 6 considers at most $|\Sigma|^{\delta}$ possible strings and testing each string takes $O(\ell n)$ time.  Therefore, the total running time is $\delta^{O(\delta)} |\Sigma|^{\delta} n^9$. 

\end{proof}

\section{Conclusions and Future Work}

We presented the {\em Consensus String with Outliers} problem with the aim to model error correction of genomic data, and demonstrated that studying its parameterized complexity and approximability leads to surprising theoretical results.  We studied the complexity of {\em Consensus String with Outliers} with respect to different parameterizations, Table \ref{tab:results} summarizes these results.  Majority of these results are proved using standard parameterized reductions and hence, we leave them to the Appendix.  The most notable of these results demonstrates that {\em Consensus String with Outliers} parameterized by $\frac{d}{n-k}$ is FPT.

\begin{table}
\begin{center}
\begin{tabular}{@{\hspace{0.5cm}}l  @{\hspace{0.5cm}}c @{\hspace{0.5cm}}c }
	\hline
  Parameter(s) 		 		& $|\Sigma|$ is bounded 									& $|\Sigma|$ is unbounded  	\\
  \hline
  $\ell, d, n^*$ 			& FPT 																	& W[1]-hard \\  
  $\ell$  						& FPT 																	& W[1]-hard \\ 
  $n^*$ 						& W[1]-hard															& W[1]-hard \\  
  $k$ 							& W[1]-hard  														& W[1]-hard  \\ 
  $d$ 							& FPT 																	& W[1]-hard  \\ 
  \hline
\end{tabular}\end{center}
\caption{An overview of the fixed parameter tractability and intractability of the {\em Consensus String with Outliers}.}
\end{table}\label{tab:results}

Our results rule out the possibility of a $(1+\epsilon)$ approximation algorithm that has running time $O\left(f(1/\epsilon)n^{O(1)} \right)$, while our PTAS has running time $O \left(n^{1 / \epsilon^4} \right)$. Hence there is still a significant gap between known upper and lower bounds for the running time of approximation schemes for the problem. Obtaining tighter bounds warrants further investigation.  

Another problem that is FPT parameterized by objective function value, admits a PTAS but is not known to admit an EPTAS is the {\em Consensus Patterns} problem \cite{M08}, which seems to be closely related to {\em Consensus String with Outliers}. It is quite possible that our results on random walks, and hardness proofs could be useful to rule out an EPTAS for {\em Consensus Patterns}, which would answer an open problem given by Fellows et al.~\cite{FGN06}, and for other problems as well.
 
\newpage

\newpage
\section{Appendix}   

We prove that when the alphabet size is unbounded {\em Consensus String With Outliers} is W[1]-hard for every combination of the parameters $\ell$, $d$, and $n^*$.  We define an instance of {\em Clique} by an undirected graph $G=(V, E)$ with a set $V=\{v_1, v_2, \ldots, v_n\}$ of $n$ vertices, a set $E$ of $m$ edges, and a positive integer $t$ denoting the size of the desired clique.  We generate a set $S$ of ${{t}\choose{2}} |E|$ strings such that $G$ has a clique of size $t$ if and only if there is a subset of $S$ of size ${{t}\choose{2}}$, denoted as $S^*$, where there exists a string $x$ such that $\sum_{\forall s_i \in S^*} d(s_i, x) \leq d$.  We let $\ell = t$ and $d = {t \choose 2}(t - 2)$.  We assume that $t > 2$ since $t \leq 1$ produces trivial cases.  

\begin{theorem} Consensus String with Outliers with an unbounded alphabet is W[1]-hard with respect to the parameters $\ell$, $d$, and $n^*$.\end{theorem}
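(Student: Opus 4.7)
The plan is to complete the construction outlined in the excerpt and then verify both directions. For each edge $e = \{v_i, v_j\} \in E$ with $v_i < v_j$ under a fixed ordering of $V$ and each position pair $(a,b)$ with $1 \le a < b \le t$, I would produce a length-$t$ string $s_{e,a,b}$ with $s_{e,a,b}[a] = v_i$, $s_{e,a,b}[b] = v_j$, and $s_{e,a,b}[p] = \eta_{e,a,b,p}$ at every other position $p$, where all padding symbols $\eta_{e,a,b,p}$ are pairwise distinct and distinct from every vertex label. This yields $|S| = \binom{t}{2}|E|$ and $k = |S| - n^* = \binom{t}{2}(|E|-1)$.

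The forward direction is immediate: given a $t$-clique $\{u_1 < \cdots < u_t\}$, pick $x = u_1 u_2 \cdots u_t$ and $S^* = \{s_{\{u_a, u_b\}, a, b} : 1 \le a < b \le t\}$. Since $u_a < u_b$, the string $s_{\{u_a, u_b\}, a, b}$ agrees with $x$ exactly at positions $a$ and $b$, contributing distance $t-2$. Summing over the $\binom{t}{2}$ chosen strings gives total distance exactly $\binom{t}{2}(t-2) = d$.

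For the reverse direction, suppose $S^* \subseteq S$ of size $n^* = \binom{t}{2}$ and some $x$ give total distance at most $d$. I would count (string, position) matches: there are at least $n^* \ell - d = t(t-1)$. I split matches into \emph{useful} (the matching position is one of the two labelled positions of the matching string) and \emph{junk}. Because padding symbols are all distinct, each column of $x$ yields at most one junk match, so junk matches total at most $t$; hence useful matches total at least $t(t-2)$. Letting $d_p$ be the number of strings in $S^*$ with $p$ in their pair, one has $\sum_p d_p = 2n^* = t(t-1)$, and useful matches at $p$ cannot exceed $d_p$, so the useful-match inequality is tight up to a $t$-sized deficit.

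The main obstacle is turning this tight matching into a clique. The pivotal observation is that useful matches at $p$ equal $d_p$ only when all $d_p$ strings through $p$ carry a common vertex label at $p$; hence any pair $(a,b)$ with $N_{a,b} \ge 2$ produces two distinct edges whose smaller or larger endpoints differ, costing at least one useful mismatch at $a$ or at $b$. Bookkeeping these forced mismatches against the budget of at most $t$, together with the pigeonhole fact that $\sum_{a<b} N_{a,b} = \binom{t}{2}$ distributed over $\binom{t}{2}$ pairs forces $N_{a',b'} \ge 2$ somewhere whenever some $N_{a,b} = 0$, would force $N_{a,b} = 1$ for every pair. Each pair $(a,b)$ then contributes one string whose labelled symbols must match $x[a]$ and $x[b]$, so the corresponding edge is $\{x[a], x[b]\} \in E$; since this holds for all $\binom{t}{2}$ pairs, $\{x[1], \ldots, x[t]\}$ is a $t$-clique in $G$. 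Because $\ell = t$, $n^* = \binom{t}{2}$, and $d = \binom{t}{2}(t-2)$ are all functions of $t$ alone, the construction is an fpt-reduction from Clique and W[1]-hardness follows for every combination of the parameters $\ell$, $d$, and $n^*$. I expect the mismatch bookkeeping in this last step to be the delicate part, since the deficit budget is only linear in $t$ and must simultaneously rule out both missing pairs and doubled pairs.
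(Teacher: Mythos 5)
Your construction is essentially identical to the paper's reduction from \emph{Clique} (the paper uses one fresh padding symbol per string rather than per string--position pair, which changes nothing), and your forward direction is complete. The problem is exactly where you suspect it is, and as written your bookkeeping does not close. From your two bounds --- junk $\le t$ and useful $\le \sum_{(a,b):N_{a,b}\ge 1}(N_{a,b}+1) = n^* + c$, where $c$ is the number of nonempty pairs --- the requirement of at least $2n^*=t(t-1)$ total matches only yields $c \ge \binom{t}{2}-t$. That is, up to $t$ pairs may be empty (with correspondingly many doubled pairs), and nothing in your ``one forced mismatch per doubled pair'' accounting rules this out, since each such defect costs only one match against a slack of $t$. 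So the argument as sketched proves ``at most $t$ pairs are bad,'' not ``no pair is bad.''

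The missing idea is that junk and useful matches compete for the same columns of $x$. A column $p$ where $x$ carries a padding symbol yields at most one junk match but yields \emph{zero} useful matches for every one of the $d_p$ strings whose labelled pair contains $p$, and a pair $(a,b)$ with a padding column contributes at most $N_{a,b}$ rather than $N_{a,b}+1$. Writing $q$ for the number of padding columns of $x$, this gives total matches $\le q + n^* + \binom{t-q}{2}$, and the requirement $q+\binom{t-q}{2}\ge\binom{t}{2}$ amounts to $q \ge q(2t-q-1)/2$, which for $t\ge 3$ forces $q=0$ (the residual case $q=t$ when $t=3$ gives at most $t<2n^*$ matches). Once $x$ consists only of vertex symbols the junk term vanishes, and useful $\le n^*+c\le 2n^*$ with equality only if every pair class is a singleton whose string matches at both labelled positions; each pair then certifies the edge $\{x[a],x[b]\}\in E$, and the $x[a]$ are distinct because $G$ has no loops, giving the clique. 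For what it is worth, the paper's own proof is terse at precisely this point (it asserts ``by pigeonhole'' that the consensus contains only vertex symbols), but your sketch, which explicitly leaves the deficit accounting open, does not yet constitute a proof of the reverse direction.
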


\begin{proof}  We begin by describing the alphabet.  We assume $|\Sigma|$ can be infinite and we let $\Sigma$ be equal to the union of the following sets of symbols:
\begin{enumerate}
\item $\{v_i | \mbox{  for all } i = 1, \ldots, |V|\}$.  Hence, there exists one symbol representing each vertex in $G$.
\item $\{c_{i,j,m} | i = 1, \ldots, t; \, j = 1, \ldots, t; m = 1, \ldots, |E|\}$.  There exists an unique symbol for each ${t\choose 2} \cdot |E|$ strings produced for our reduction. 
\end{enumerate}
Hence, we have a total of $|V| + {t\choose 2} \cdot |E|$ number of symbols.
 
We construct a set of ${{t}\choose{2}} |E|$ strings $S = \{s_{1,1,1},  \ldots, s_{1,1,|E|}, s_{1,2,1}, \ldots, $ $s_{1,2,|E|}, \ldots, s_{t -1,t, |E|}\}$. Every string has length $t$ and will encode one edge of the input graph. There will be ${t \choose 2}$ corresponding for each edge, however, encode the edges in different positions.  For string $s_{i,j,m}$ we encode edge $e_m=  (v_r, v_s)$, where $1 \leq r < s \leq |V|$, but letting position $i$ equal to $v_r$ and position $j$ equal to $v_s$ and the remaining positions equal to $c_{i,j,m}$. Hence, a string is given by $$s_{i,j,m}\, := \, [ c_{i,j,m} ]^{i - 1} v_r [ c_{i,j,m} ]^{j - i - 1} v_s [ c_{i,j,m} ]^{m - j}.$$

To clarify our reduction, we give an example. Let $G = (V, E)$ be an undirected graph with $V = {v_1, v_2, v_3, v_4}$ and edges $E = \{(v_1, v_2), (v_1, v_3), (v_1, v_4), (v_2, v_3)\}$ and let our {\em Clique} instance have $G$ and $t = 3$.   Using $G$, we exhibit the above construction of ${t \choose 2} \cdot |E| = 12$ strings, which we denote as $S$.  We claim that there exists a clique of size 3 if and only if there exists a string $s^*$ of length $\ell = t = 3$ and subset $S^*$ of $S$ of size $3$ where $d(S^*, c(S^*)) \leq d$.  

First, we show that for a graph with a clique of size $t$, the above construction produces an instance of {\em Consensus String with Outliers} with a set $S^*$, consensus string $c(S^*)$ of length $\ell$ such that $d(S^*, c(S^*)) \leq d$.  Let the input graph have a clique of size $t$.  Let $v_{\alpha_1}, v_{\alpha_2}, \ldots, v_{\alpha_t}$ be the vertices in the clique $C$ of size $t$ and without loss of generality, assume $\alpha_1 < \alpha_2 < \ldots < \alpha_t$.  Then we claim that the there exists a subset of ${t\choose 2}$ vertices that have distance at exactly $t - 2$ from the string $s = v_{\alpha_1} v_{\alpha_2} \ldots v_{\alpha_t}$.   Consider the first edge of the clique $(v_{\alpha_1}, v_{\alpha_2})$ of the clique then it follows that the string $s_{11r} = v_{\alpha_1} v_{\alpha_2} [c_{11r}]^{t - 2}$, where edge $r$ has endpoints  $v_{\alpha_1}$ $v_{\alpha_2}$, is contained in the set of strings $\{s_{111}, s_{112}, \ldots, s_{11|E|} \}$. Clearly, $H(s_{11r}, s) = t - 2$. For each edge in $C$ we have we have a string in $S$ that has distance $t - 2$ from $s$ and our lemma follows from this construction.  

For the reverse direction, we need to prove that the existence a subset $S^*$ of size ${t \choose 2}$, where $d(S^*, c(S^*)) \leq {t \choose 2} (t - 2)$  implies the existence of a clique in $G$ with $t$ vertices.  Let $S^*$ be the subset of $S$ of size $t\choose 2$ such that $s$ has distance ${t \choose 2}(t - 2)$ from each string in $S^*$. Since $\ell = t$, $n^* = {t \choose 2}$, $d = {t \choose 2} (t -2)$ and he symbol $c_{i,j,m}$ occurs in only a single string in $S$ for all $i = 1, \ldots, t$, $ j = 1, \ldots, t$ and $m = 1, \ldots, |E|$, it follows from the Pigeonhole principle that the consensus string only contains symbols from the set $\{v_i | \mbox{  for all } i = 1, \ldots, |V|\}$. Without loss of generality assume the consensus string is equal to $v_{\alpha_1} v_{\alpha_2}\ldots v_{\alpha_t}$ for $\alpha_{v_1}, \alpha_{v_2}, \ldots, \alpha_{v_t} \in \{1, \ldots, |V|\}$.  Consider any pair $\alpha_i$, $\alpha_j$ for $1 \leq i < j \leq t$ and the set of strings $S_{i,j} = \{ s_{i,j,1}, s_{i,j,2}, \ldots, s_{i,j,|E|} \}$.  Recall that $S_{i,j}$ contains a string corresponding to each edge $e = (r, s)$ in $E$ which has $v_r$ at the $i$th position and $v_s$ at the $j$th position and $c_{i,j,m}$ at all remaining positions.  Therefore, we can only find a string in $S_{i,j}$ that has distance $t - 2$ from $s$ if $v_{\alpha_i}$ is at the $i$th position and $v_{\alpha_j}$ is at the $j$th position; and such a string exists if and only if there is an edge in $G$ connecting $v_{\alpha_i}$ to $v_{\alpha_j}$. Hence, the consensus string $s$ implies there exists an edge between any pair of vertices in $G$ in the set $\{v_{\alpha_1} v_{\alpha_2}\ldots v_{\alpha_t}\}$ and by definition the vertices form a clique. \hfill $\Box$  \end{proof}

Our main theorem follows directly from Lemma \ref{direction_one_lemma} and Lemma \ref{direction_two_lemma}.  We note that the hardness for the combination of all three parameters also implies the hardness for each subset of the three.

\setcounter{equation}{0}


\begin{thebibliography}{99}

\bibitem{AP08} 
P.K.~Agarwal and J.M.~Phillips. 
\newblock An Efficient Algorithm for 2D Euclidean 2-Center with Outliers. 
\newblock {\em Proc.~of 16th {ESA}}, pages 64--75, 2008.

\bibitem{AlonGHP92}
N.~Alon, O.~Goldreich, J.~H{\aa}stad and R.~Peralta,
\newblock Simple Construction of Almost $k$-wise Independent Random Variables.
\newblock {\em Random Struct.~Algor.}, 3(3): 289--304, 1992.

\bibitem{AGLS10} 
A.~Anagnostopoulos, F.~Grandoni, S.~Leonardi, and P.~Sankowski. 
\newblock Online network design with outliers. 
\newblock {\em Proc.~of 37th {ICALP}}, pages 114--126, 2010.

\bibitem{AIP}
A.~Andoni, P.~Indyk, and M.~Patrascu.
\newblock On the optimality of the dimensionality reduction method.
\newblock {\em Proc.~of 47th {FOCS}}, pages 449--456, 2006.


\bibitem{Arora96}
S.~Arora.
\newblock Polynomial Time Approximation Schemes for Euclidean TSP and Other Geometric Problems.
\newblock {\em Proc of 37th {FOCS}}, pages 2-11, 1996.


\bibitem{Arora98}
S.~Arora,
\newblock Polynomial Time Approximation Schemes for Euclidean Traveling Salesman and other Geometric Problems.
\newblock {\em J. ACM}, 45, 5:753--782, 1998.



\bibitem{brona}
B.~Brejov\'{a}, D.G.~Brown, I.M.~Harrower, and T.~Vinar.
\newblock New Bounds for Motif Finding in Strong Instances. 
\newblock {\em Proc.~of 17th {CPM}}, pages 94--105, 2006.

\bibitem{brona2}
B.~Brejov\'{a}, D.G.~Brown, I.M.~Harrower, A.~L\'{o}pez-Ortiz and T.~Vinar.
\newblock Sharper Upper and Lower Bounds for an Approximation Scheme for Consensus-Pattern.
\newblock {\em Proc.~of 16th CPM}, pages 1--10, 2005.

\bibitem{BM11} 
C.~Boucher and B.~Ma. 
\newblock Closest String with Outliers. 
\newblock {\em BMC Bioinformatics}, 12(Suppl 1):S55, 2011.


\bibitem{CKMN01} 
M.~Charikar, S.~Khuller, D.M.~Mount, and G.~Narasimhan.
\newblock Algorithms for facility location problems with outliers. 
\newblock {\em Proc.~of 12th {SODA}}, pages 642--651, 2001.

\bibitem{C08} 
K.~Chen. 
\newblock A Constant Factor Approximation Algorithm for $k$-Median Clustering with Outliers. 
\newblock {\em Proc.~of 19th {SODA}}, pages 826--235, 2008.

\bibitem{DF99}
R.G.~Downey, and M.R.~Fellows. 
\newblock Parameterized complexity. 
\newblock {\em Springer}, 1999.


\bibitem{altacyclic}
Y.~Erlich, P.P.~Mitra, M.~Delabastide, W.R.~McCombie, and G.J.~Hannon. 
\newblock Alta-Cyclic: a self-optimizing base caller for next-generation sequencing. 
\newblock {\em Nature Methods}, 5:679--682, 2008

 
\bibitem{FGN06}
M.R.~Fellows, J.~Gramm, and R.~Niedermeier.
\newblock On the parameterized intractability of motif search problems.
\newblock {\em Combinatorica}, 26:141--167, 2006.

\bibitem{FellowsHRV09}
M.R.~Fellows, D.~Hermelin, F.A.~Rosamond, and S.~Vialette.
\newblock On the parameterized complexity of multiple-interval graph problems.
\newblock {\em Theor.~Comput.~Sci.}, 410(1):53--61, 2009

\bibitem{flum}
J.~Flum, and M.~Grohe. 
\newblock Parameterized Complexity Theory. 
\newblock {\em Springer-Verlag}, 2006. 

	
\bibitem{FK}
E.~Friedgut, and J.~Kahn. 
\newblock On the number of copies of one hypergraph in another.
\newblock {\em Israel J.~Math.}, 105:251–-256, 1998.

\bibitem{G05}
N.~Garg.
\newblock Saving an epsilon: a 2-approximation for the $k$-MST problem in graphs.
\newblock {\em Proc.~of 37th STOC}, pages 396--402, 2005.

\bibitem{GW95} 
M.X.~Goemans, and D.P.~Williamson.
\newblock A general approximation technique for constrained forest problems. 
\newblock {\em {SIAM} J.~Comput.}, 24:296--317, 1995. 


\bibitem{GR}
F.~Grimmett, and D.~Stirzaker.
\newblock Probability and random processes.
\newblock {\em Oxford University Press}, 3 edition, 2001. 

\bibitem{PW04} 
S.~Har-Peled, and Y.~Wang. 
\newblock Shape fitting with outliers.  
\newblock {\em {SIAM} J.~Comput.}, 33(2):269--285, 2004.

\bibitem{HCKKF05}
V.~Hautamki, S.~Cherednichenko, I.~Krkkinen, T.~Kinnunen, and P.~Frnti. 
\newblock Improving $k$-means by outlier removal. 
\newblock {\em Proc.~of 11th SCIA}, pages 978--987, 2005.

\bibitem{H63}
W.~Hoeffding.
\newblock Probability Inequalities for Sums of Bounded Random Variables.
\newblock {\em J.~Amer.~Statistical Assoc.}, 58(301): 13--30, 1963.

\bibitem{HuntMRRRS98}
H.B.~Hunt III, M.V.~Marathe, V.~Radhakrishnan, S.S.~Ravi, D.J.~Rosenkrantz, and R.E.~Stearns,
\newblock NC-Approximation Schemes for NP- and PSPACE-Hard Problems for Geometric Graphs.
\newblock {\em J. Algorithms}, 26(2):238--274, 1998

\bibitem{KCS11} 
W.C.~Kao, and A.H.~Chan, and Y.S.~Song. 
\newblock {ECHO}: A reference-free short-read error correction algorithm. 
\newblock {\em To appear in Genome Res}

\bibitem{lanctot}
J.K.~Lanctot, M.~Li, B.~Ma, S.~Wang, and L.~Zhang.
\newblock Distinguishing string selection problems.
\newblock {\em Inform.~Comput.}, 185(1):41--55, 2003.

\bibitem{LMW022} 
M.~Li, B.~Ma, and L.~Wang, 
\newblock On the closest string and substring problems.
\newblock{\em J.~ACM}, 49:157-171, 2002.

\bibitem{LMW02} 
M.~Li, B.~Ma, and L.~Wang. 
\newblock Finding similar regions in many sequences. 
\newblock {\em J.~Comput.~System Sci.}, 65(1):73--96, 2002.



\bibitem{M08} 
D.~Marx. 
\newblock Closest Substring Problems with Small Distances. 
\newblock {\em {SIAM} J.~Comput.}, 38(4):1283--1410, 2008.

\bibitem{marx05} 
D.~Marx. 
\newblock Efficient Approximation Schemes for Geometric Problems?
\newblock {\em Proc.~of 13th ESA}, 51(1): 448--459, 2005.

\bibitem{marx_survey} 
D.~Marx. 
\newblock Parameterized complexity and approximation algorithms.
\newblock {\em Comput.~J.}, 51(1): 60--78, 2008.

\bibitem{NaorN93}
J.~Naor and M.~Naor.
\newblock Small-Bias Probability Spaces: Efficient Constructions and Applications.
\newblock {\em SIAM J.~Comput.}, 22(4): 838--856, 1993.

\bibitem{niedermeier}
R.~Niedermeier.
\newblock Invitation to Fixed-Parameter Algorithms. 
\newblock {\em Oxford University Press}, 2006.

\bibitem{PTW01} 
P.A.~Pevzner, H.~Tang, and M.S.~Waterman. 
\newblock An Eulerian path approach to DNA fragment assembly.
\newblock {\em  Proc.~Natl.~Acad.~Sci.}, 98:9748--9753, 2001.

\bibitem{alkan}
S.~Sajjadian, C.~Alkan, and E.~Eichler.
\newblock Limitations of next-generation genome sequence assembly.
\newblock {\em Nature Methods}, 8(1):61--65, 2011.

\bibitem{SS11}
L.~Salmela and J.~Schr\"{o}der.
\newblock Correcting Errors in Short Reads by Multiple Alignments
\newblock {\em To appear in Bioinformatics}.

\bibitem{venter}
J.C.~Venter et al.
\newblock The Sequence of the Human Genome. 
\newblock {\em Science}, 291:1304--1351, 2001.


\bibitem{YAD11} 
X.~Yang, S.~Aluru, and K.S.~Dorman. 
\newblock Repeat-aware modeling and correction of short read errors. 
\newblock {\em BMC Bioinformatics} 12:S52, 2011. 


\end{thebibliography}
\end{document}